\definecolor{myurlcolor}{rgb}{0,0,0.7}
\definecolor{myrefcolor}{rgb}{0.1,0,0.9}
\newcommand{\SMLong}{Appendix}
\newcommand{\SM}{Appendix}
\newcommand{\gEnt}[0]{fidelity of separability}
\newtheorem{theorem}{Theorem}
\def\app#1#2{%
  \mathrel{%
    \setbox0=\hbox{$#1\sim$}%
    \setbox2=\hbox{%
      \rlap{\hbox{$#1\propto$}}%
      \lower1.1\ht0\box0%
    }%
    \raise0.25\ht2\box2%
  }%
}
\newtheorem{fact}{\protect\factname}
\newenvironment{proof}[1][\protect\proofname]{\par
	\normalfont\topsep6\p@\@plus6\p@\relax
	\trivlist
	\itemindent\parindent
	\item[\hskip\labelsep\scshape #1]\ignorespaces
}{%
	\endtrivlist\@endpefalse
}
\providecommand{\proofname}{Proof}
\newcommand{\tr}{\mathrm{tr}}
\newcommand\numberthis{\addtocounter{equation}{1}\tag{\theequation}}
\providecommand{\factname}{Fact}
\providecommand{\theoremname}{Theorem}
\providecommand{\claimname}{Claim}
\providecommand{\lemmaname}{Lemma}
\providecommand{\definitionname}{Definition}
\providecommand{\propositionname}{Proposition}
\providecommand{\corollaryname}{Corollary}
\providecommand{\conjecturename}{Conjecture}
\definecolor{THc}{rgb}{0.9,0.3,0.2}
\newtheorem{definition}{\protect\definitionname}
\newtheorem{proposition}{\protect\propositionname}
\newcommand{\subfigimg}[3][,]{%
	\setbox1=\hbox{\includegraphics[#1]{#3}}%
	\leavevmode\rlap{\usebox1}%
	\rlap{\hspace*{2pt}\raisebox{\dimexpr\ht1-0.5\baselineskip}{{\bfseries \large\textsf{#2}}}}%
	\phantom{\usebox1}%
}
\newcommand{\sectionMain}[1]{
\let\oldaddcontentsline\addcontentsline%
\renewcommand{\addcontentsline}[3]{}%
\section{#1}
\let\addcontentsline\oldaddcontentsline
}
\newcommand{\E}{\mathcal{E}}
\newcommand{\titleinfo}{Quantifying mixed-state entanglement via partial transpose and realignment moments}
\begin{document}
\title{\titleinfo}
\author{Poetri Sonya Tarabunga}
\email{poetri.tarabunga@tum.de}
\affiliation{Technical University of Munich, TUM School of Natural Sciences, Physics Department, 85748 Garching, Germany}
\affiliation{Munich Center for Quantum Science and Technology (MCQST), Schellingstr. 4, 80799 München, Germany}

\author{Tobias Haug}
\email{tobias.haug@u.nus.edu}
\affiliation{Quantum Research Center, Technology Innovation Institute, Abu Dhabi, UAE}

\begin{abstract}
    Entanglement plays a crucial role in quantum information science and many-body physics, yet quantifying it in mixed quantum many-body systems has remained a notoriously difficult problem. Here, we introduce  families of quantitative entanglement witnesses, constructed from partial transpose and realignment moments, %
    which provide rigorous bounds on entanglement monotones. Our witnesses can be efficiently measured using SWAP tests or variants of Bell measurements%
    , thus making them directly implementable on current hardware.
    Leveraging our witnesses, we present several novel results on entanglement properties of mixed states, both in quantum information and many-body physics. We develop efficient algorithms to test whether mixed states with bounded entropy have low or high entanglement,
    which previously was only possible for pure states. We also provide an efficient algorithm to test the Schmidt rank using only two-copy measurements, and to test the operator Schmidt rank using four-copy measurements.
    Further, our witnesses enable robust certification of quantum circuit depth even in the presence of noise, a task which so far has been limited to noiseless circuits only. Finally, we show that the entanglement phase diagram of Haar random states, quantified by the partial transpose negativity, %
    can be fully established solely by computing our witness,  a result that also applies to any state $4$-design. 
    Our witnesses can also be efficiently computed for matrix product states, thus enabling the characterization of entanglement in extensive many-body systems.
    Finally, we make progress on the entanglement required for quantum cryptography, establishing rigorous limits on pseudoentanglement and pseudorandom density matrices with bounded entropy.
    Our work opens new avenues for quantifying entanglement in large and noisy quantum systems.

\end{abstract}

\maketitle

 \let\oldaddcontentsline\addcontentsline%
\renewcommand{\addcontentsline}[3]{}%

\section{Introduction}

Entanglement is a key resource in quantum information science, and, at the same time, is of central importance in the characterization of quantum many-body systems~\cite{horodecki2009quantum,amico2008}. %
While the quantitative characterization of entanglement in pure states is well-established through measures such as the entanglement (R\'enyi) entropy~\cite{vidal2000entanglement}, the extension to the case of mixed states presents formidable challenges. %
The advent of modern noisy intermediate-scale quantum (NISQ) devices~\cite{preskill2018quantum,bharti2022noisy}, which are inherently prone to experimental noise, makes the characterization of mixed states particularly critical for the benchmarking and verification of these devices. In particular, characterizing the entanglement properties of noisy quantum systems is essential to gauge the performance of NISQ devices, especially since noise tends to hamper entanglement generation. This urgent need necessitates the development of efficient methods to quantify entanglement in complex noisy systems, a task that is notoriously more  difficult than its pure-state counterpart. %

One of the most common approaches for quantifying mixed-state entanglement is based on the celebrated positive partial transpose (PPT) criterion, which states that any separable state has a partial transpose with nonnegative eigenvalues~\cite{Peres1996,Horodecki1996}. This criterion leads to the PT negativity~\cite{vidal2002computable,plenio2005negativity} (also known as the logarithmic negativity), a widely used and computable entanglement monotone which essentially measures the negative eigenvalues of the partial transpose. While the PT negativity has found extensive applications in analytical studies~\cite{calabrese2012negativityqft,calabrese2014finitetemperature,lee2013entanglement,castelnovo2013negativity,hart2018entanglement,eisler2014entanglement,wen2016topological,ruggiero2016negativity}, it typically requires intricate calculations to obtain the full spectrum of the partial transpose. Furthermore, experimentally measuring the PT negativity generally necessitates full-state tomography, thereby severely limiting its applicability in real-world experiments.

To overcome this challenge, previous works have focused on the simpler task of witnessing~\cite{elben2020mixed,neven2021symmetryresolved,yu2021optimal,liu2022detecting}, which aims to determine whether a given state is entangled or not. %
It has been shown that this detection task can be performed by considering only low-order PT moments~\cite{elben2020mixed,neven2021symmetryresolved,yu2021optimal,liu2022detecting}, which are easier to measure in experiments. While such advancements offer a more practical means for entanglement detection, a mere binary detection provides a very limited characterization of entanglement. As such, they have not been adopted in the study of entanglement in quantum many-body systems. Ultimately, the crucial task of quantitatively and efficiently assessing the degree of entanglement in large, mixed systems remains a major challenge.

While direct quantification using entanglement measures remains an unresolved problem, %
we show that quantitative witnessing, i.e. a method that provides not only detection but also quantitative information about entanglement~\cite{eisert2007quantitative}, can be performed efficiently. In this work, we introduce a family of quantitative entanglement witnesses for mixed states called the $p_\alpha$-negativity. The $p_\alpha$-negativity is monotonic,  and the PT negativity is recovered when $\alpha=1$. While the $p_\alpha$-negativity is not an entanglement monotone for $\alpha>1$, it crucially serves as a quantitative witness for mixed-state entanglement by providing a lower bound on the PT negativity. This property is paramount, as it provides not just a yes/no answer to the presence of entanglement but offers valuable information about its quantitative amount. We further demonstrate how the exploitation of state symmetries can significantly enhance the efficacy of these witnessing capabilities, offering a substantial improvement over previous symmetry-resolved approaches~\cite{neven2021symmetryresolved,rath2023entanglement}. Complementing the $p_\alpha$-negativity, we also introduce the $r_\alpha$-negativity, a related family of witnesses derived from the computable cross norm or realignment (CCNR) criterion~\cite{chen2003ccnr}. We show that both families provide lower bounds to the robustness of entanglement~\cite{vidal1999robustness}. We focus on the case of $\alpha=4$ for both families, which we show to be efficiently measurable on quantum computers and numerically accessible via matrix product states. In this sense, our witnesses play a similar role to entanglement R\'enyi entropies, which have been instrumental for studying entanglement in pure states. Indeed, the latter are efficiently measurable for integer orders $\alpha \geq 2$ and provide lower bounds on the von Neumann entanglement entropy. Notably, both the $p_\alpha$-negativities and $r_\alpha$-negativities collapse to the entanglement R\'enyi entropies when the state is a pure state.

We remark that various R\'enyi generalizations of the PT negativity have been previously introduced in the literature~\cite{calabrese2012negativityqft,lee2013entanglement}, and are often employed in numerical studies~\cite{chung2014entanglement,ding2024negativity,wukaihsin2020negativity,wangfohong2025negativity,wangfohong2025negativity,wybo2021dynamics}, when the complete PT spectrum is not available. However, it remained unclear how these quantities genuinely reflect the entanglement properties of the system, beyond simply serving as a “proxy” for the PT negativity, with a lack of rigorous justification. In fact, some of these quantities are not even well-defined in certain cases, such as for Werner states (see Appendix~\ref{sec:werner}). While a recent experiment~\cite{shaw2024benchmarking} introduced and measured a quantity that lower bounds the PT negativity, it relies on assumptions of closeness to, and knowledge on entanglement, of the target pure state, limiting its applicability beyond classically simulable regimes.

Our work directly addresses these shortcomings, since our $p_\alpha$-negativities have a clearly defined role as quantitative witnesses and are completely agnostic about the state in question. This allows us to derive several significant results on entanglement properties of mixed states, including concrete applications in entanglement testing, pseudoentanglement, circuit depth certification, and entanglement phase diagram in Haar random and stabilizer states.

Regarding entanglement testing, we introduce an algorithm to test whether a mixed state exhibits $O(\log n)$ or $\omega (\log n)$ entanglement, where $n$ is the number of qubits. The algorithm is efficient for weakly mixed states with $2$-R\'enyi entropy $S_2=O(\log n)$. Furthermore, we elucidate the entanglement properties of pseudorandom density matrix (PRDM)~\cite{bansal2024pseudorandomdensitymatrices}, defined as an ensemble of states that is computationally indistinguishable from Haar random states by any efficient quantum algorithm. In particular, we show that any PRDM with low entropy $S_2=O(\log n)$ must have PT negativity $\E=\omega(\log n)$.

\begin{figure}[htbp]
\centering
\includegraphics[width=1.05\linewidth]{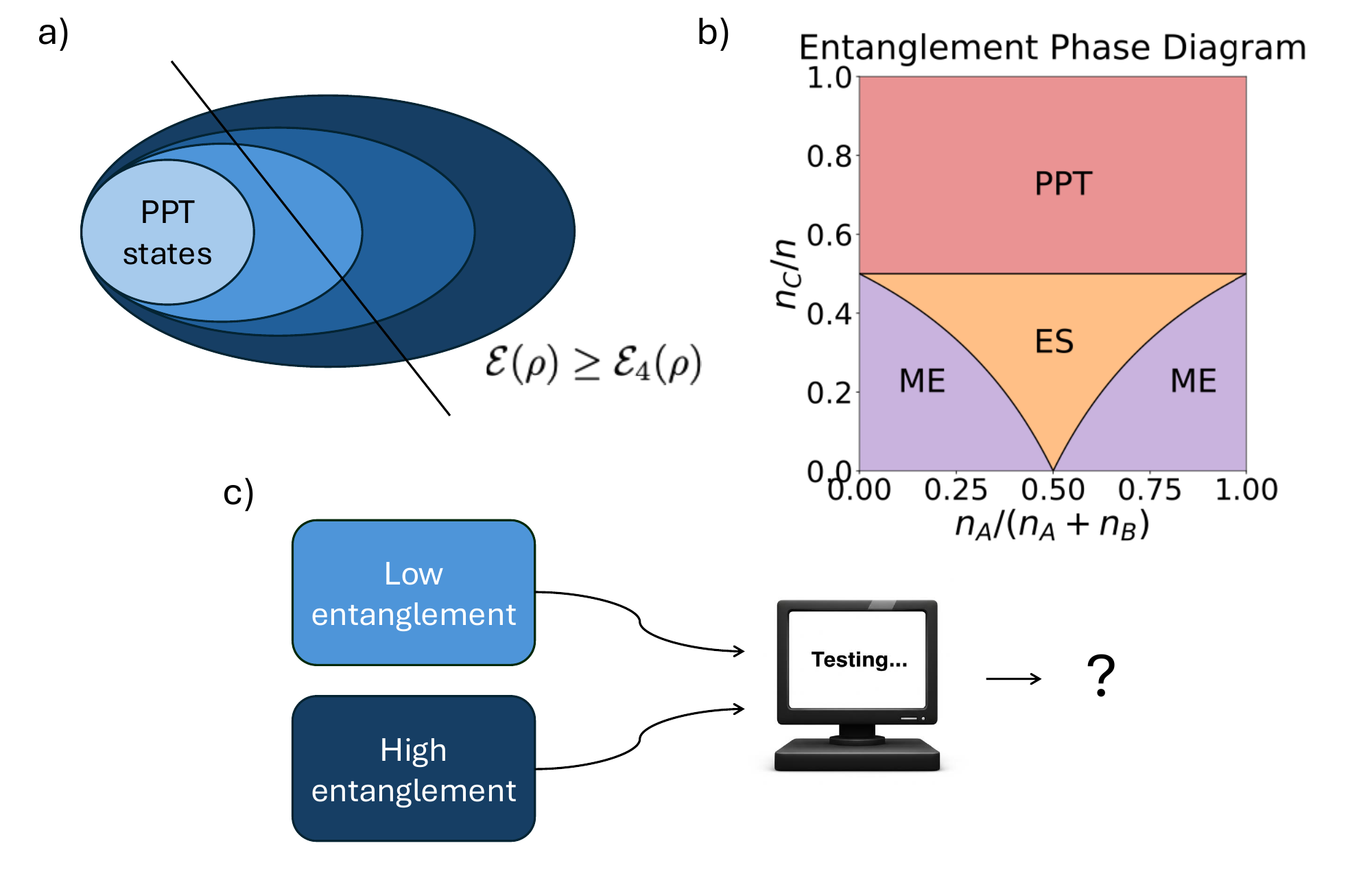}
\caption{a) Schematic representation of $n$-qubit state space in terms of the PT negativity $\E(\rho)$.  Darker regions indicate higher values of $\E(\rho)$. Measuring the $p_4$-negativity, $\E_4(\rho)$, provides a lower bound on $\E(\rho)$, thereby restricting the possible values of $\E(\rho)$ within the state space. b) Entanglement phase diagram of Haar random states and stabilizer states for partition sizes $n_A$, $n_B$, $n_C$. The entanglement is between subsystems $A$ and $B$, while subsystem $C$ is traced out. There are three different phases: positive partial transpose (PPT), entanglement saturation (ES) and maximally entangled (ME). (c)  Illustration of entanglement testing, where one needs to determine whether a given state has low or high entanglement.}
\label{fig:summary_results}
\end{figure}

These findings have direct implications for quantum cryptography and pseudoentanglement~\cite{ji2018pseudorandom}. We demonstrate a fundamental constraint on pseudoentanglement, which describes states with low entanglement $g(n)$ that are computationally indistinguishable from states with high entanglement $f(n)$ by any efficient quantum algorithm~\cite{bouland2022quantum,haug2023pseudorandom,bansal2024pseudorandomdensitymatrices}. We show that for mixed states with low entropy $S_2=O(\log n)$, the pseudoentanglement gap is fundamentally limited to $f(n)=\Theta(n)$ vs $g(n)=\omega(\log n)$, mimicking the  constraints known for pure states. This result also establishes that the optimal pseudoentanglement gap is achieved only for highly mixed states with $S_2=\omega(\log n)$, where the gap is $f(n)=\Theta(n)$ vs $g(n)=0$~\cite{bansal2024pseudorandomdensitymatrices}. This implies that entropy is a necessary resource for securely hiding information about entanglement from eavesdroppers. Our work completes the understanding of the complexity of testing and pseudoentanglement as a function of the state's entropy, with a complete summary provided in Tab.~\ref{tab:testing} and Tab.~\ref{tab:pseudoresource}, respectively. 

Further, we provide efficient algorithms to robustly certify the circuit depth of weakly mixed states, which so far was only possible for perfectly pure states. We also show how to efficiently test the Schmidt rank as well as the operator Schmidt rank.

In the context of many-body physics, understanding the generic entanglement properties of quantum states is a fundamental task. Our $p_4$-negativity witness leads to a remarkable simplification in characterizing the entanglement phase diagram of Haar random states. Unlike previous cumbersome methods that necessitated calculating all PT moments and their subsequent analytical continuation to extract the PT negativity~\cite{shapourian2021entanglement}, our approach only requires computing  the $p_4$-negativity. As a corollary, we establish that any state $4$-design, a set of quantum states that mimics Haar random states up to fourth moments, exhibits the identical entanglement phase diagram as Haar random states. Surprisingly, we also discover that stabilizer states, a classically simulable class of quantum states~\cite{gottesman1997stabilizer,gottesman1998heisenberg,gottesman1998theory,aaronson2004improved}, share the same entanglement phase diagram as Haar random states, despite the fact that they do not form a $4$-design. Furthermore, we investigate the impact of noise on the entanglement of pure Haar random states, showing that the $p_4$-negativity effectively captures their entanglement properties even in the presence of noise. Notably, our findings reveal a surprising resilience: volume-law entanglement persists for sufficiently large subsystems, even when subjected to exponentially strong noise, highlighting the robustness of entanglement in generic quantum systems.

\begin{table}[htbp]\centering
\begin{tabular}{| l|l| }
    \hline    
     Entropy &  Copies  \\
    \hline\hline
    $S_2=0$  & $O(\text{poly}(n))$~\cite{ekert2002direct}\\\hline
 $S_2=O(\log n)$ & $O(\text{poly}(n))$ [this work] \\\hline
 $S_2=\omega(\log n)$ & $2^{\omega(\log n)}$~\cite{bansal2024pseudorandomdensitymatrices}\\ \hline
\end{tabular}
\caption{Number of copies of state $\rho$ needed to test whether mixed state $\rho$ has $O(\log n)$ or $\omega(\log n)$ entanglement, as characterized by PT negativity $\E(\rho)$ and \gEnt{} $D_\text{F}(\rho)$. Complexity depends on $2$-R\'enyi entropy $S_2=-\ln\text{tr}(\rho^2)$.
}
\label{tab:testing}
\end{table}

\begin{table}[htbp]\centering
\begin{tabular}{| l|l| }
    \hline    
    Entropy &  $f(n)$ vs $g(n)$  \\
    \hline\hline
    $S_2=0$  & $\Theta(n)$ vs $\omega(\log n)$~\cite{bouland2022quantum}\\\hline
 $S_2=O(\log n)$ & $\Theta(n)$ vs $\omega(\log n)$ [this work] \\\hline
 $S_2=\omega(\log n)$ & $\Theta(n)$ vs $0$~\cite{bansal2024pseudorandomdensitymatrices}\\ \hline
\end{tabular}
\caption{Pseudoentanglement gap $f(n)$ vs $g(n)$ between high-entanglement and low-entanglement state ensembles, depending on $2$-R\'enyi entropy $S_2=-\ln\text{tr}(\rho^2)$. Entanglement is characterized by PT negativity $\E(\rho)$ and \gEnt{} $D_\text{F}(\rho)$.
}
\label{tab:pseudoresource}
\end{table}

We summarize our main results in Fig.~\ref{fig:summary_results}. The rest of the paper is structured as follows. In Sec.~\ref{sec:prelim} we provide some preliminaries. In Sec.~\ref{sec:witness} we introduce the families of quantitative entanglement witnesses. In Sec.~\ref{sec:ccnr_osee} we discuss the relation between the $r_\alpha$-negativity and the operator space entanglement entropy. In Sec.~\ref{sec:ent_testing} we introduce an algorithm for entanglement testing in mixed states, as well as the Schmidt rank and operator Schmidt rank testing. In Sec.~\ref{sec:depth_certification} we propose a certification scheme for the minimal depth to prepare a given state. In Sec.~\ref{sec:prdm_pseudoentanglement} we discuss the entanglement properties of pseudorandom density matrices and pseudoentanglement. In Sec.~\ref{sec:entanglement_phase_diagram} we discuss the entanglement phase diagram of Haar random and stabilizer states. %
In Sec.~\ref{sec:noise_haar} we discuss the effects of noise on the entanglement in Haar random states.
In Sec.~\ref{sec:mps_mpo} we introduce algorithms to compute our witnesses using matrix product states and matrix product operators. In Sec.~\ref{sec:num_results} we provide numerical results on our witnesses in mixed many-body systems in various mixed-state settings. Finally, in Sec.~\ref{sec:discussion} we provide a discussion of our results. A number of technical points and proofs are reported in the appendices.
 
\section{Preliminaries} \label{sec:prelim}
We consider a system of $n$ qubits associated with the Hilbert space $\mathcal{H}_n=\bigotimes_{j=1}^n\mathcal{H}_j$, where $\mathcal{H}_j\simeq \mathbb{C}^2$. For a general mixed quantum state of $n$ qubits $\rho$, the bipartite entanglement between complementary subsystems $A$ and $B$ can be quantified from the partial transpose 
\begin{equation}
    \rho^{\Gamma} = \Gamma(\rho)\,,
\end{equation}
where $\Gamma$ denotes the partial transposition on subsystem $B$. More precisely, it is a linear map that acts as $\Gamma(X_A \otimes Y_B) = X_A \otimes Y_B^T$, where the superscript $T$ denotes matrix transpose. In other words, the matrix elements of the partial tranpose are given by
\begin{equation}
   \bra{a,b'} \rho^{\Gamma}  \ket{a',b} = \bra{a,b} \rho \ket{a',b'}, 
\end{equation}
where $a$ and $b$ denote computational basis states in subsystems $A$ and $B$, respectively. A necessary condition for a separable state is that their PT is always positive-semidefinite (Peres-Horodecki criterion)~\cite{Peres1996,Horodecki1996}. That is, a state with a non-positive PT (NPT state) is entangled. The converse is not true as there
exists PPT entangled states, also known as bound entangled states. The entanglement monotone related
to the PPT condition is the PT negativity (also known as logarithmic negativity), which is defined by~\cite{vidal2002computable}
\begin{equation} \label{eq:negativity}
    \E(\rho) = \ln \lVert \rho^\Gamma \rVert_1 = \ln \sum_i \lvert \lambda_i \rvert\,,
\end{equation}
where $\lambda_i$ are eigenvalues  of $\rho^{\Gamma}$. Note that the trace norm of an operator $X$ is given by $\lVert X \rVert_1=\sqrt{X^\dagger X}$. Hereafter, we will drop the dependence on $\rho$ when the state in question is clear from the context.

The computation of PT negativity generally requires full diagonalization of $\rho^{\Gamma}$, which is exponentially costly. To overcome this challenge, it is useful to analyze the more tractable (integer) moments of the partial transpose
\begin{equation}
    p_\alpha(\rho) = \text{tr}((\rho^{\Gamma})^\alpha)
\end{equation}
for integer $\alpha=1,2,\dots$. %
Note that $p_2=\text{tr}(\rho^2)$ is equivalent to the purity of $\rho$, with $2$-R\'enyi entropy $S_2=-\ln(p_2)$. 
These PT moments can be utilized to derive necessary and
sufficient conditions of entanglement in mixed states~\cite{elben2020mixed,neven2021symmetryresolved,yu2021optimal}. For example, the so-called $p_3$-PPT condition reads~\cite{elben2020mixed}
\begin{equation}
    \rho \in \text{SEP}\implies p_3 \geq p_2^2.
\end{equation}
Any state violating this condition is NPT and therefore
entangled. One can define a quantity related to the $p_3$-PPT condition~\cite{carrasco2024entanglementphasediagram}
\begin{equation}
    \Tilde{\E}_3(\rho) = \frac{1}{2}\ln\left(\frac{p_2^2}{p_3}\right),
\end{equation}
such that $\Tilde{\E}_3(\rho)>0$ implies that the state is entangled. Thus, $\Tilde{\E}_3(\rho)$ is an entanglement witness for mixed states. While the $p_3$-PPT condition is a practical tool to detect entanglement in mixed states, it is not known whether it is quantitatively (e.g. in terms of inequalities) related to the PT negativity. 

An alternative approach to detect entanglement in mixed states is via the CCNR criterion~\cite{chen2003ccnr}. We define the realignment matrix $R_\rho$ as
\begin{equation}
   \bra{a,a'} R_\rho \ket{b,b'} = \bra{a,b} \rho \ket{a',b'}, 
\end{equation}
which provides a necessary condition for separability that $\lVert R_\rho \rVert_1 \leq 1$. The CCNR negativity is defined by~\cite{yin2023universal,milekhin2024computablecrossnormtensor,yin2023mixedstate} %
\begin{equation}
    \mathcal{C}(\rho) = \ln \lVert R_\rho \rVert_1 = \ln \sum_i \lvert \lambda_i \rvert,
\end{equation}
where $\lambda_i$ are singular values of $R_\rho$. Unlike the PT negativity,  $\mathcal{C}(\rho)$ is not a proper entanglement measure, since it can become negative. Nevertheless, by the CCNR criterion, $\mathcal{C}(\rho)>0$ implies that the state is entangled, and thus it serves as an entanglement witness. We further denote the moments of the realignment matrix as
\begin{equation}
    r_\alpha(\rho) = \text{tr}((R_\rho^\dagger R_\rho)^{\alpha/2}) = \sum_i |\lambda_i|^{\alpha}.
\end{equation}

We will also consider two other entanglement monotones. First, the \gEnt{} is defined as~\cite{shapira2006groverian,streltsov2010linking,chen2014comparison}
\begin{equation} 
    D_\text{F}(\rho)=\min_{\sigma\in \text{SEP}}-\ln \mathcal{F}(\rho,\sigma)
\end{equation}
where we have the Uhlmann fidelity $\mathcal{F}(\rho,\sigma)=\text{tr}(\sqrt{\rho\sigma})^2$~\cite{jozsa1994fidelity}. 

Finally, the robustness of entanglement~\cite{vidal1999robustness} is given by
\begin{equation}\label{eq:robustness}
\mathcal{R}(\rho)=\min s \;\text{s.t} \; \rho = (1+s)\rho^+_s -s\rho_s^-, \quad\rho_s^+,\rho_s^- \in \text{SEP}\,.
\end{equation}

\section{Quantitative entanglement witnesses} \label{sec:witness}
In this section, we formally introduce the entanglement witnesses based on the PT moments and realignment moments. We also define the symmetry-resolved witnesses in the presence of a global symmetry.
\subsection{From the PT moments} \label{sec:pt_witness}
For an $n$-qubit state $\rho$, we define our entanglement witness, $p_\alpha$-negativity, as
\begin{align*}\label{eq:pn_neg}
    \E_\alpha(\rho) &= \frac{1}{2-\alpha}\ln\left(\Tilde{p}_\alpha\right)+ \frac{1-\alpha}{2-\alpha} \ln\left(p_2\right) \\
    &=\frac{1}{2-\alpha}[\ln(\Tilde{p}_\alpha)+(\alpha-1) S_2(\rho)]\,,\numberthis
\end{align*}
where we define $\Tilde{p}_\alpha(\rho) = \text{tr}((\lvert\rho^{\Gamma}\rvert)^\alpha)$. Note that $\Tilde{p}_\alpha=p_\alpha$ for even integer $\alpha$. $\E_\alpha(\rho)$ is an entanglement witness since $\E_\alpha(\rho)>0$ implies that the state is entangled. To see this, let $q_i=\lambda_i^2/p_2$, which forms a probability distribution since $q_i>0$ and $\sum_i q_i=1$. We will consider the (classical) $\alpha$-R\'enyi entropy
\begin{equation}
    H_\alpha(\{q\}) = \frac{1}{1-\alpha} \ln (\sum_{i}q_i^\alpha),
\end{equation}
which is known to satisfy the monotonicity $H_a\geq H_b$ for $a<b$. Since we can write $\E_\alpha(\rho)=\frac{1}{2}(H_{\alpha/2}(\{q\})-S_2(\rho))$, it immediately follows that the $p_\alpha$-negativity also satisfies the monotonicity $\E_\alpha \geq \E_\beta$ for $\alpha<\beta$. Importantly, this implies that
\begin{equation} \label{eq:bound_e}
    \E_\alpha(\rho) \leq \E(\rho)\,,
\end{equation}
for $\alpha\geq 1$, since $\E(\rho)\equiv\E_1(\rho)$.
Thus, $\E_\alpha(\rho)>0$ implies $\E(\rho)>0$, which in turn guarantees that the state is entangled. From the derivation, we  can also deduce that the inequality is saturated iff the distribution $q_i$ is flat (equivalently, when the absolute values of the spectrum $|\lambda_i|$ is flat). For pure states $\ket{\psi}$,  $\E_\alpha(\ket{\psi})=S_{\alpha/2}(\ket{\psi})$ for any $\alpha$, where $S_\alpha$ is the $\alpha$-R\'enyi entanglement entropy. As such, the $p_\alpha$-negativities can be viewed as a generalization of entanglement R\'enyi entropies to mixed states, inheriting some of their nice properties.

The $p_\alpha$-negativity has the following properties: i) Invariant under local unitaries $U_A \otimes U_B$, i.e. $\E_\alpha(U_A \otimes U_B \rho U_A^\dagger \otimes U_B^\dagger)=\E_\alpha(\rho)$,
ii) Additive, i.e. $\E_\alpha(\rho\otimes\sigma)=\E_\alpha(\rho)+\E_\alpha(\sigma)$.
iii) $-\frac{1}{2}S_2(\rho)\leq\E_\alpha(\rho)\leq \frac{1}{2}(n\ln2-S_2(\rho))$. The lower bound follows from $\E_\alpha(\rho)=\frac{1}{2}(H_{\alpha/2}(\{q\})-S_2(\rho))$ and $H_{\alpha/2}(\{q\})\geq0$, while the upper bound is shown in Appendix~\ref{sec:upper_bounds_neg}. %
Note that $\E_\alpha$ is not an entanglement monotone~\cite{vidal2000entanglement} because it can be non-positive for entangled states.

Importantly, the $p_\alpha$-negativity goes beyond simply determining the presence of entanglement. Indeed, as shown in Eq.~\eqref{eq:bound_e}, $\E_\alpha$ provides a lower bound on the PT negativity for $\alpha\geq1$, thus classifying it as a quantitative entanglement witness~\cite{eisert2007quantitative}. Namely, the value of $\E_\alpha$ offers quantitative information regarding the amount of entanglement present in the state. 
 Therefore, $\E_\alpha$  allows one not only to detect the entanglement but also to infer the quantitative value of the entanglement, which is useful to answer the
question of how useful a given state is, say, to perform certain quantum information tasks. Furthermore, we derive the inequality 
\begin{equation} \label{eq:neg_rob}
    \E(\rho) \leq \ln(2\mathcal{R}(\rho)+1)
\end{equation}
in \SM{}~\ref{sec:neg_rob}. Therefore, our witnesses also provide a lower bound to the robustness of entanglement.

The $p_\alpha$-negativity for $\alpha<1$ is not an entanglement witness since it is always non-negative. Nevertheless, it remains useful to provide upper bounds for the PT negativity (see \SM{}~\ref{sec:upper_bounds_neg}). Moreover, $\E_\alpha(\rho)=0$ for $\alpha<1$ implies $\E(\rho)=0$, thus it provides a sufficient (but not necessary) condition for separability.

We will be particularly interested in the $p_4$-negativity
\begin{equation}\label{eq:e4}
    \E_4(\rho) = \frac{1}{2}\ln\left(\frac{p_2^3}{p_4}\right)=\frac{1}{2}[-\ln(p_4)-3 S_2(\rho)]\,.
\end{equation}
For pure states, we have $\E_4=S_2(\rho_A)$, and the inequality in Eq.~\eqref{eq:bound_e} becomes $S_2(\rho_A)\leq S_{1/2}(\rho_A)$, which follows from the hierarchy of $\alpha$-R\'enyi entanglement entropy. We also note that $\Tilde{\E}_3=S_3(\rho_A)$ for pure states, which also provides a lower bound to $S_{1/2}(\rho_A)$. This lower bound is weaker than $S_2(\rho_A)$, since $S_2(\rho_A)\geq S_3(\rho_A)$. However, $\Tilde{\E}_3$ cannot provide a lower bound to $\E$ in the general case, since $p_3$ can be zero or negative~\cite{elben2020mixed}, in which case $\Tilde{\E}_3$ is not well-defined.

Notably, $\E_4$ can be measured efficiently. Indeed, $p_4$ can be measured by a quantum circuit composed of $O(n)$ control-SWAP gates that acts on $4$ copies of $\rho$~\cite{carteret2005noiseless,gray2018machinelearning}. See Appendix~\ref{sec:efficientWitness} for details. Moreover, $p_2=-\tr(\rho^2)$ can be measured efficiently via SWAP test or Bell measurements. The joint preparation of $4$ copies of $\rho$ may still be challenging in current experiments, and an alternative measurement scheme is through classical shadow~\cite{huang2020predicting} or randomized measurements~\cite{elben2019statistical}. While these schemes generally require exponential number of samples, they have been applied to experimentally detect entanglement~\cite{elben2020mixed,neven2021symmetryresolved,brydges2019probing}.

We remark that the difference between $\Tilde{\E}_3$ and $\E_4$ is given by
\begin{equation}
    \E_4 - \Tilde{\E}_3 = \frac{1}{2}\ln \Tilde{r}_2,
\end{equation}
where
\begin{equation} \label{eq:r2}
    \Tilde{r}_2 = \frac{p_2 p_3}{p_4},
\end{equation}
which is studied in~\cite{carrasco2024entanglementphasediagram} as a tool to distinguish entanglement phase diagram in random states.

\subsection{From the realignment moments} \label{sec:ccnr_witness}
We also define entanglement witnesses based on the CCNR negativity $\mathcal{C}(\rho)$  using the realignment matrix moments. %
We define the $r_\alpha$-negativity as
\begin{equation}\label{eq:rn_neg}
\begin{split}
    \mathcal{C}_\alpha(\rho) &= \frac{1}{2-\alpha}\ln\left(r_\alpha\right)+ \frac{1-\alpha}{2-\alpha} \ln\left(r_2\right)\\ &=\frac{1}{2-\alpha}[\ln(r_\alpha)+(\alpha-1) S_2(\rho)]\,,
\end{split}
\end{equation}
Using similar argument for the $p_\alpha$-negativity, by replacing $p_\alpha$ with $r_\alpha$, one can show that $\mathcal{C}_\alpha$ satisfies the monotonicity $\mathcal{C}_\alpha \geq \mathcal{C}_\beta$ for $\alpha<\beta$, which implies
\begin{equation} \label{eq:bound_c}
    \mathcal{C}_\alpha(\rho) \leq \mathcal{C}(\rho),
\end{equation}
where $\mathcal{C}(\rho)\equiv\mathcal{C}_1(\rho)$. Although  $\mathcal{C}(\rho)$ is not an entanglement monotone, it provides a lower bound to the robustness of entanglement via
\begin{equation}
    \mathcal{C}(\rho) \leq \ln(2\mathcal{R}(\rho)+1),
\end{equation}
as shown in Appendix~\ref{sec:neg_rob}. Therefore, $\mathcal{C}_\alpha(\rho)$ for $\alpha\geq1$ also serves as a quantitative entanglement witness, as it provides a lower bound to the robustness of entanglement.

In particular, the $r_4$-negativity reads
\begin{equation}\label{eq:r4}
    \mathcal{C}_4(\rho) = \frac{1}{2}\ln\left(\frac{r_2^3}{r_4}\right)=\frac{1}{2}[-\ln(r_4)-3 S_2(\rho)]\,,
\end{equation}
where $\mathcal{C}_4(\rho)>0$ implies that the state is entangled.  We can give bounds $-\frac{1}{2}S_2(\rho)\leq\mathcal{C}_4(\rho)\leq \min(n_A,n_B)\ln2-\frac{1}{2}S_2(\rho)$, where $n_{A(B)}$ is the number of qubits in $A(B)$. (see Appendix~\ref{sec:upper_bounds_neg} for the upper bound). The witness $\mathcal{C}_4$ can be measured by a scheme similar to that for $\E_4$~\cite{liu2022detecting}.  As detailed in Appendix~\ref{sec:CCNRmeas}, we introduce a new measurement scheme which employs a constant-depth circuit of Clifford gates, offering a significant advantage in terms of experimental feasibility. It can also be measured via randomized measurements as demonstrated in previous experiments~\cite{rath2023entanglement,brydges2019probing}.

\subsection{Symmetry-resolved entanglement witnesses} \label{sec:sr_witness}
We will now discuss how to exploit symmetry to enhance the bounds from the witnesses. We will first focus on the witnesses from the PT moments before commenting on the realignment moments.

In the presence of a global symmetry, the density matrix $\rho$ can be split into different charge sectors. For concreteness, we consider a global $U(1)$ symmetry where the state commutes with the total number operator $Q=\sum_i (I-Z_i)/2$. Such a state has a block-diagonal form $\rho=\bigoplus_q \rho_{(q)}$, where $\rho_{(q)}$ is a block corresponding to the charge $q$. In a bipartite system, we can write $Q=Q_A+Q_B$, where $Q_{A(B)}=\sum_{i\in A(B)} (I-Z_i)/2$.  

In this case, the partial transpose commutes with $Q'=Q_A-Q_B^T$ and it can be cast in the block diagonal form~\cite{cornfeld2018imbalance,neven2021symmetryresolved}: we denote as $\rho^\Gamma_{(q)}$ the resulting blocks, where $q$ indicates the charge. We denote the corresponding moments as $p_\alpha^{(q)}$, with
\begin{equation}
    p_\alpha^{(q)} = \text{tr}((\rho_{(q)}^\Gamma)^\alpha).
\end{equation}
We can exploit this to define $p_4$-negativity in a given block 
\begin{equation} \label{eq:e4_block}
    \E_4^{(q)}(\rho) = \frac{1}{2}\ln\left(\frac{(p_2^{(q)})^3}{p_4^{(q)}}\right)\,.
\end{equation}
By similar arguments as the non-symmetry-resolved case, one can show that
\begin{equation} \label{eq:bound_sr_eq}
    \E_4^{(q)}(\rho) \leq \E^{(q)}(\rho)\,,
\end{equation}
where 
\begin{equation} \label{eq:sr_negativity}
    \E^{(q)}(\rho) = \ln \lVert \rho^\Gamma_{(q)} \rVert_1^2 = \ln \sum_i \lvert \lambda_i^{(q)} \rvert\,,
\end{equation}
where $\lambda_i^{(q)}$ are the eigenvalues  of $\rho^{\Gamma}_{(q)}$. It is easy to see that $\E^{(q)}(\rho) \leq \E(\rho)$ for any given block, and combining with Eq.~\eqref{eq:bound_sr_eq}, we obtain
\begin{equation} \label{eq:bound_sr_e}
    \E_4^{(q)}(\rho) \leq \E(\rho)\,.
\end{equation}
Thus, $\E_4^{(q)}(\rho)$ plays a similar role as the symmetry-resolved $p_3$-PPT condition~\cite{neven2021symmetryresolved}, which is able to detect entanglement using symmetry-resolved moments for a fixed charge. Notably, one can enhance the detection capability and obtain a tighter lower bound for the negativity by combining all quantum number blocks. We define the symmetry-resolved (SR) $p_4$-negativity as
\begin{equation} \label{eq:sr_e4}
    \E_4^{\text{SR}}(\rho) = \ln\left(\sum_q \sqrt{\frac{(p_2^{(q)})^3}{p_4^{(q)}}}\right)\,.
\end{equation}
We have
\begin{equation} \label{eq:bound_sr_e2}
    \E_4^{\text{SR}}(\rho) \leq \E(\rho)\,.
\end{equation}
Thus, $\E_4^{\text{SR}}(\rho)$ is also a quantitative entanglement witness. 

Note that, we can also define
\begin{equation}\label{eq:e2}
    \Tilde{\E}_2(\rho) = \frac{1}{2}\ln\left(p_2\right)=-\frac{1}{2}S_2(\rho)\,,
\end{equation}
which satisfies
\begin{equation}
    \Tilde{\E}_2(\rho) \leq \E(\rho)\,.
\end{equation}
However, this inequality is not useful for entanglement detection since $\Tilde{\E}_2(\rho)\leq0$ for any state $\rho$. Nevertheless, the symmetry-resolved version
\begin{equation} \label{eq:sr_e2}
    \Tilde{\E}_2^{\text{SR}}(\rho) = \ln\left(\sum_q \sqrt{p_2}\right)\,,
\end{equation}
is not necessarily non-positive, so that it may already provide a nontrivial witness of entanglement since we have
\begin{equation}
    \Tilde{\E}_2^{\text{SR}}(\rho) \leq \E(\rho)\,.
\end{equation}
By Jensen's inequality, one can show the upper bound $\Tilde{\E}_2^{\text{SR}}(\rho)\leq\frac{1}{2}(\ln{N_q}-S_2(\rho))$, where $N_q$ is the number of quantum number. For the global $U(1)$ symmetry, this bound scales as $O(\ln n)$ since $N_q=n+1$. Thus, while $\Tilde{\E}_2^{\text{SR}}(\rho)$ is more tractable since it only involves the second PT moment, it has a drawback that it can only provide a limited lower bound to the negativity. In particular, it cannot be used to certify volume-law entanglement. 
Furthermore, $\Tilde{\E}_2^{\text{SR}}(\rho)$ fails to detect entanglement when the states are too mixed, with $S_2(\rho)\geq\ln(n+1)$.

Let us also mention that the SR witnesses can be used to detect the entanglement of arbitrary states, including those that do not have any symmetry~\cite{neven2021symmetryresolved}. Indeed, there exists a quantum channel $\Lambda$ that maps a state $\rho$ to $\sigma=\Lambda(\rho)$ that has a block-diagonal structure with respect to the desired symmetry. %
This channel is a local operation, which cannot increase the PT negativity. 
This implies
\begin{equation}
    \E_4^{\text{SR}}(\sigma) \leq \E(\sigma) \leq \E(\rho),
\end{equation}
which shows that $\E_4^{\text{SR}}(\Lambda(\rho))$ is also a quantitative entanglement witness.

Further, the SR witnesses can be defined in a similar way using the realignment matrix moments. It is known that the singular values of $R_\rho$ can be split into charge sectors $q$ which are eigenvalues of $Q_A\otimes \mathbb{1} - \mathbb{1} \otimes Q_A^T$~\cite{rath2023entanglement}. The symmetry-resolved moments are
\begin{equation}
    r_\alpha^{(q)} = \sum_i |\lambda_i^{(q)}|^\alpha.
\end{equation}
Then, we define
\begin{equation} \label{eq:sr_c4}
    \mathcal{C}_4^{\text{SR}}(\rho) = \ln\left(\sum_q \sqrt{\frac{(r_2^{(q)})^3}{r_4^{(q)}}}\right)\,.
\end{equation}
and
\begin{equation} \label{eq:sr_c2}
    \Tilde{\mathcal{C}}_2^{\text{SR}}(\rho) = \ln\left(\sum_q \sqrt{r_2}\right)\,,
\end{equation}
as our SR witnesses from the realignment matrix moments. Note that the CCNR negativity is not known to satisfy monotonicity under LOCC, and therefore the SR witnesses are only applicable for systems with symmetry, unlike the PT witnesses.

\section{Relation to operator space entanglement entropy} \label{sec:ccnr_osee}

Next, we discuss the connection between the $r_\alpha$-negativity and the operator space entanglement entropy (OSEE)~\cite{prosen2007operator}. For a mixed state $\rho$, the operator Schmidt decomposition with respect to the bipartition $A$ and $B$ is given by~\cite{zanardi2001entanglement}
\begin{equation}
\frac{\rho}{\sqrt{\text{tr}(\rho_A^2)}} = \sum_i \sqrt{\lambda_i^O} O_{A,i} \otimes O_{B,i}\,,    
\end{equation}
where $\lambda_i^O$ are the operator Schmidt coefficients and $\{O_{A,i}\}$ and $\{O_{B,i}\}$ are a set of orthonormal operators in the subsystem $A$ and $B$, respectively. The operator Schmidt decomposition can be obtained by vectorizing the (normalized) operator $\ket{\rho}=\frac{\rho}{\sqrt{\text{tr}(\rho_A^2)}}$ and performing the Schmidt decomposition for pure states on $\ket{\rho}$. The R\'enyi OSEE is given by~\cite{prosen2007operator}%
\begin{equation}
    S^O_\alpha(\rho) = \frac{1}{1-\alpha}\ln(\sum_i (\lambda_i^O)^{\alpha} ). 
\end{equation}
A state $\rho$ has zero OSEE iff it can be written as a product state in the form of $\rho=\rho_A \otimes \rho_B$. The OSEE quantifies the representability of the state $\ket{\rho}$ as a matrix product state, since it plays a similar role as the entanglement R\'enyi entropy for pure states~\cite{schuch2008entropy}. Therefore, while OSEE is not a good measure of mixed-state entanglement (since separable mixed states can have non-zero OSEE), it has a direct relevance to the simulability of a mixed state using the well-established tools of matrix product states~\cite{Cirac2021mps}, originally developed for simulation of pure states. %

Furthermore, the OSEE can be directly related to the CCNR negativity~\cite{rath2023entanglement}. Indeed, one can see that the operator Schmidt coefficients $\lambda_i$ are simply the (rescaled) singular values of the realignment matrix $R_\rho$. Therefore, the R\'enyi OSEE are related to the realignment moments by
\begin{equation} \label{eq:relation_Ccnr_osee}
    S^O_\alpha(\rho) = \frac{1}{1-\alpha}\ln(r_{2\alpha}) + \frac{\alpha}{1-\alpha}S_2(\rho).
\end{equation}

\section{Entanglement testing} \label{sec:ent_testing}
While witnessing indicates the presence of entanglement, it is not a necessary criterion, as there are entangled states that have non-positive witness. 
An alternative way to study entanglement is via property testing~\cite{buhrman2008quantum,montanaro2013survey}: Here, one asks whether a given state has a property, or is it far from it?  Notably, as our witnesses are quantitative, they can be utilized for this task.

We now give an efficient testing algorithm to determine whether a state $\rho$ has high or low entanglement, where we assume that $\rho$ has low entropy $S_2(\rho)=O(\log n)$. 
This has been an open problem, as only for for $S_2=0$ efficient tests for entanglement have been known~\cite{ekert2002direct}, while for $S_2=\omega(\log n)$, testing entanglement has been shown to be inherently inefficient~\cite{bansal2024pseudorandomdensitymatrices}. Here, we design an efficient algorithm to test for entanglement which reliably distinguishes states depending on their entanglement:
\begin{theorem}[Efficient testing of entanglement]\label{thm:testing}
Let $\rho$ be an $n$-qubit state with $S_2(\rho)=O(\log n)$ where it is promised that 
\begin{align*}
\mathrm{either}\quad (a)& \,\,\E(\rho)=O(\log n) \,\,\mathrm{and}\,\,D_\mathrm{F}(\rho)=O(\log n) \,,\\
\mathrm{or}\quad (b)& \,\, \E(\rho)=\omega(\log n)\,\,\mathrm{and}\,\, D_\mathrm{F}(\rho)=\omega(\log n)\,.
\end{align*} 
Then, there exists an efficient quantum algorithm to distinguish case ($a$) and ($b$) using $\mathrm{poly}(n)$ copies of $\rho$ with high probability. 
\end{theorem}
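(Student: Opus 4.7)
The algorithmic plan is to use the efficiently measurable witness $\E_4(\rho)$ from Eq.~\eqref{eq:e4} as the sole statistic and to compare its estimate to a threshold of order $\log n$. Both $p_2 = \tr(\rho^2)$ and $p_4 = \tr((\rho^\Gamma)^4)$ admit unbiased, $[-1,1]$-bounded estimators via SWAP-test variants on two and four copies of $\rho$ (Appendix~\ref{sec:efficientWitness}). Because $S_2(\rho) = O(\log n)$ forces $p_2 \ge 1/\mathrm{poly}(n)$, a Chernoff bound with $\mathrm{poly}(n)$ copies yields a multiplicatively accurate estimate of $p_2$. For $p_4$ we only need additive accuracy $1/\mathrm{poly}(n)$: if the empirical $\hat{p}_4$ falls below this precision we simply declare $\E_4$ to exceed the threshold; otherwise we form $\hat{\E}_4 = \tfrac{1}{2}\ln(\hat{p}_2^3/\hat{p}_4)$. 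The decision rule is to output case $(b)$ whenever $\hat{\E}_4 \ge c\log n$ for a suitably chosen constant $c$, and case $(a)$ otherwise.

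Correctness for case $(a)$ is immediate from the monotonicity inequality Eq.~\eqref{eq:bound_e}, $\E_4(\rho) \le \E(\rho) = O(\log n)$, so the estimator concentrates below the threshold with high probability. The substantive work is case $(b)$, where one must convert the promises $\E(\rho),\,D_\mathrm{F}(\rho) = \omega(\log n)$ together with $S_2(\rho) = O(\log n)$ into a quantitative lower bound on the measurable $\E_4(\rho)$. The plan is to establish the lemma
\begin{equation*}
\E_4(\rho) \;\ge\; D_\mathrm{F}(\rho) \;-\; c\, S_2(\rho) \;-\; O(1),
\end{equation*}
for some absolute constant $c$, which plugged into the promise of $(b)$ immediately gives $\E_4(\rho) = \omega(\log n)$. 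For pure states this reduces to the R\'enyi ordering $S_2(\rho_A) \ge S_\infty(\rho_A)$, since a direct optimisation over product states yields $D_\mathrm{F}(\ket{\psi}) = -\ln p_{\max} = S_\infty(\rho_A)$ while $\E_4(\ket{\psi}) = S_2(\rho_A)$. To extend to mixed $\rho$ I would pass to the effective support of $\rho$---a subspace of dimension $\mathrm{poly}(n)$ that carries all but $1/\mathrm{poly}(n)$ of the weight under the low-entropy assumption---and absorb the residue using Fannes-type continuity for $D_\mathrm{F}$ together with H\"older estimates on partial-transpose Schatten norms for $\E_4$.

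The main obstacle is precisely this last step. The witness $\E_4$ depends on the full spectrum of $\rho^\Gamma$, which does not decompose additively along the spectral decomposition of $\rho$, so the usual perturbative tools for entropies do not apply directly; likewise, $D_\mathrm{F}(\rho)$ is defined through an optimisation that is not manifestly Lipschitz in $\rho$. Controlling both quantities simultaneously while keeping the correction strictly of order $S_2(\rho)=O(\log n)$, so that the $\omega(\log n)$ gap survives, is the technical heart of Theorem~\ref{thm:testing}. Once the inequality is in hand, the remaining analysis reduces to SWAP-test sample-complexity bounds, a union bound over the two estimators, and the elementary observation that $\omega(\log n)$ eventually exceeds any fixed multiple of $\log n$.
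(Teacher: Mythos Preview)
Your handling of case~(a) matches the paper exactly. The gap is in case~(b): the lemma
\[
\E_4(\rho)\;\ge\;D_\mathrm{F}(\rho)-c\,S_2(\rho)-O(1)
\]
is the entire content of the theorem, and your proposed route to it---project onto an effective support, then invoke Fannes-type continuity for $D_\mathrm{F}$ and H\"older estimates for $\E_4$---does not obviously go through. As you yourself note, $\E_4$ depends on the full spectrum of $\rho^\Gamma$, which is not controlled by truncating the spectrum of $\rho$; small perturbations of $\rho$ can rearrange the PT spectrum in ways that are not captured by entropy-continuity bounds. So the plan stalls precisely at the step you flag as the main obstacle, and nothing in the proposal closes it.

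The paper sidesteps this lemma entirely by introducing a \emph{second} statistic rather than forcing everything through $\E_4$. The key observation (Eq.~\eqref{eq:bound_df}, proved in Appendix~\ref{sec:proof_df}) is
\[
S_\alpha(\rho_A)\;\ge\;D_\mathrm{F}(\rho)\qquad\text{for all }\alpha\ge 0,
\]
obtained in two lines from Uhlmann's theorem applied to the canonical purification $\ket{\sqrt{\rho}}$: one simply exhibits a separable state whose purification is the leading Schmidt term of $\ket{\sqrt{\rho}}$ across the $A\,\vert\,A'BB'$ cut. With $\alpha=2$ this gives $S_2(\rho_A)=\omega(\log n)$ in case~(b), and $\tr(\rho_A^2)$ is measurable with a two-copy SWAP test. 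Thus the paper's algorithm estimates both $\E_4(\rho)$ (for case~(a)) and $S_2(\rho_A)$ (for case~(b)); no inequality linking $\E_4$ directly to $D_\mathrm{F}$ is ever needed. If you want to salvage a single-statistic test, the natural route would be to first prove the paper's bound $S_2(\rho_A)\ge D_\mathrm{F}(\rho)$ and then attempt an inequality of the form $p_4\le \tr(\rho_A^2)^2\cdot\mathrm{poly}(p_2)$, which would chain to your lemma---but that second step is not in the paper and would require a separate argument.
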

We provide a short sketch of the proof idea in the following:
For case ($a$) one has $\E_4(\rho)=O(\log n)$, which follows from the bound on $\E$ of Eq.~\eqref{eq:bound_e}.
Case ($b$) implies that $S_2(\rho_A)=\omega (\log n)$, which can be shown using a bound on the \gEnt{}
\begin{equation} \label{eq:bound_df}
    S_\alpha(\rho_{A(B)}) \geq D_\text{F}(\rho),
\end{equation}
which is proven in \SMLong{}~\ref{sec:proof_df}. %
Then, efficiency of testing follows from the efficient algorithms to estimate case ($a$) via $\E_4(\rho)$ (see \SMLong{}~\ref{sec:efficientWitness} or Ref.~\cite{carteret2005noiseless,gray2018machinelearning}) and case ($b$) via $S_2(\rho_A)$ using SWAP tests. In particular, $\E_4(\rho)$ can be measured using SWAP tests involving $4$ copies of $\rho$, while $S_2$ can be measured using $2$ copies.

Next, we also give an efficient testing algorithm for the operator Schmidt rank, defined as the number of non-zero operator Schmidt coefficients. It is also equivalent to the (exponential of) OSEE $S_\alpha^O$ for $\alpha=0$. As mentioned in the previous section, while the OSEE does not directly quantify mixed-state entanglement, it determines whether the state can be efficiently represented as a matrix product operator. 

In the case of pure states, testing whether the Schmidt rank of a state is at most $r$ can be done using a technique known as weak Schur sampling~\cite{childs2007weakschur} using $O(r^2)$ copies. This algorithm requires entangled measurements on many copies of the state. For easier extension to mixed states, we first give a simplified testing algorithm for the Schmidt rank. We define the set of states of at most Schmidt rank $r$ as $M_r=\{\ket{\eta} : \ket{\eta} = \sum_{i=1}^r \sqrt{\lambda_i} \ket{\phi_i}_A \otimes \ket{\phi_i}_{B}\}$, and the maximum overlap of $\ket{\psi}$ with the states in $M_r$ as $\mathcal{F}_r(\ket{\psi})=\max_{\ket{\eta}\in M_r}\vert \braket{\eta}{\psi}\vert^2$. Now, we give a testing algorithm which only utilizes two-copy measurements:
\begin{theorem}[Efficient testing of Schmidt rank] \label{thm:schmidt_rank_test}
Let $\ket{\psi}$ be an $n$-qubit state where it is promised that 
\begin{align*}
\mathrm{either}\quad (a)& \,\,\mathcal{F}_r(\ket{\psi})\geq \epsilon_1 \,,\\
\mathrm{or}\quad (b)& \,\, \mathcal{F}_r(\ket{\psi})\leq \epsilon_2\,,
\end{align*} 
where it is assumed that $\epsilon_1^2/r> \epsilon_2+2\epsilon$.
Assuming $\epsilon=1/\mathrm{poly}(n)$, there exists an efficient quantum algorithm to distinguish case ($a$) and ($b$) using $O(r^2)$ two-copy measurements of $\ket{\psi}$ with high probability. 
\end{theorem}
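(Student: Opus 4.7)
The plan is to reduce Schmidt-rank testing to estimating the purity $\text{tr}(\rho_A^2)$ of the reduced state $\rho_A=\text{tr}_B(\ket{\psi}\bra{\psi})$, a quantity accessible from standard two-copy SWAP tests on subsystem $A$. The key observation is that both $\mathcal{F}_r(\ket{\psi})$ and $\text{tr}(\rho_A^2)$ are determined by the Schmidt spectrum of $\ket{\psi}$, so a promise on the former induces a quantitative promise on the latter.

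Concretely, write the Schmidt decomposition $\ket{\psi}=\sum_i\sqrt{\mu_i}\,\ket{\phi_i}_A\otimes\ket{\phi_i}_B$ with $\mu_1\geq\mu_2\geq\cdots\geq 0$ and $\sum_i\mu_i=1$. The first step is to establish the identity $\mathcal{F}_r(\ket{\psi}) = \sum_{i=1}^r \mu_i$, the standard Eckart--Young-type statement that the best rank-$r$ approximation of $\ket{\psi}$ is its truncated Schmidt decomposition. In case ($a$), with $\sum_{i=1}^r\mu_i\geq\epsilon_1$, Cauchy--Schwarz applied to the first $r$ Schmidt coefficients gives
\begin{equation}
    \text{tr}(\rho_A^2)\geq\sum_{i=1}^r\mu_i^2\geq \frac{1}{r}\Big(\sum_{i=1}^r\mu_i\Big)^2\geq \frac{\epsilon_1^2}{r}\,.
\end{equation}
In case ($b$), from $\mu_1\leq\sum_{i=1}^r\mu_i\leq \epsilon_2$ together with $\sum_i\mu_i=1$,
\begin{equation}
    \text{tr}(\rho_A^2)=\sum_i\mu_i^2\leq \mu_1\sum_i\mu_i=\mu_1\leq \epsilon_2\,.
\end{equation}
Hence the two cases are separated by a margin of at least $\epsilon_1^2/r-\epsilon_2>2\epsilon$.

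The algorithm then estimates $\text{tr}(\rho_A^2)$ by performing a SWAP test on subsystem $A$ between two independent copies of $\ket{\psi}$; each outcome is a Bernoulli random variable with mean $(1+\text{tr}(\rho_A^2))/2$, so by Hoeffding $O(1/\epsilon^2)$ such two-copy measurements suffice to estimate the purity to additive precision $\epsilon$ with high probability. Comparing the empirical estimate against the midpoint threshold $(\epsilon_1^2/r+\epsilon_2)/2$ then correctly distinguishes cases ($a$) and ($b$). Since $\epsilon=1/\text{poly}(n)$ the procedure is efficient overall, and in the natural regime $\epsilon=\Theta(1/r)$ the sample complexity is $O(r^2)$, matching the statement.

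The one step that requires real attention is the identity $\mathcal{F}_r(\ket{\psi})=\sum_{i=1}^r\mu_i$, which can be proved by expanding an arbitrary $\ket{\eta}\in M_r$ in its own Schmidt basis and invoking the variational characterization of the top $r$ singular values of the Schmidt matrix of $\ket{\psi}$; the remaining inequalities and the standard analysis of the SWAP-test estimator involve no serious obstacle.
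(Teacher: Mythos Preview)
Your proposal is correct and follows essentially the same approach as the paper: both reduce to estimating the subsystem purity $\text{tr}(\rho_A^2)$ via SWAP tests, use the Eckart--Young identity $\mathcal{F}_r(\ket{\psi})=\sum_{i=1}^r\mu_i$, derive the same two-sided bound $\text{tr}(\rho_A^2)\leq\mathcal{F}_r\leq\sqrt{r\,\text{tr}(\rho_A^2)}$ (the paper phrases your Cauchy--Schwarz step as Jensen's inequality and your $\mu_1$ bound as $\text{tr}(\rho_A^2)\leq\mathcal{F}_1$), and then threshold at the midpoint $(\epsilon_1^2/r+\epsilon_2)/2$ with Hoeffding controlling the sample complexity.
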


The proof is given in \SM{}~\ref{sec:schmidtrank}, where the key idea is to make use of the inequality 
\begin{equation} 
    \text{tr}(\rho_A^2) \leq  \mathcal{F}_r(\ket{\psi}) \leq \sqrt{r \text{tr}(\rho_A^2) }.
\end{equation}
Therefore, we can distinguish the two cases simply by measuring the purity $\text{tr}(\rho_A^2)$, which can be done by two-copy measurements of the state.

Now, we consider the maximum fidelity $\mathcal{F}_r^O(\rho)=\max_{\ket{O}\in M_r^O}\vert \braket{O}{\rho}\vert^2$~\cite{wang2008alternative}, where 
\begin{equation}
    M_r^O=\{\ket{O} : \frac{O}{\sqrt{\text{tr}(O_A^2)}} = \sum_i \sqrt{\lambda_i^O} O_{A,i} \otimes O_{B,i}\}
\end{equation}
is the set of states whose operator Schmidt rank is at most $r$. By similar idea as the Schmidt rank testing, we give a testing algorithm for the operator Schmidt rank:
\begin{theorem}[Efficient testing of operator Schmidt rank]
Let $\rho$ be an $n$-qubit state where it is promised that 
\begin{align*}
\mathrm{either}\quad (a)& \,\,\mathcal{F}_r^O(\rho)\geq \epsilon_1 \,,\\
\mathrm{or}\quad (b)& \,\, \mathcal{F}_r^O(\rho)\leq \epsilon_2\,,
\end{align*} 
where it is assumed that $\operatorname{tr}(\rho_A^2)=1/\mathrm{poly}(n)$ and  $(\epsilon_1^2/r- \epsilon_2)\operatorname{tr}(\rho_A^2)^2>1/\mathrm{poly}(n)$. Then, there exists an efficient quantum algorithm to distinguish case ($a$) and ($b$) using $O(r^2)$ copies of $\ket{\psi}$ with high probability. 
\end{theorem}

The proof, as given in \SM{}~\ref{sec:operator_schmidt_rank}, is along the same line as the Schmidt rank testing, replacing the purity by $r_4/r_2^2$. Note that $r_4$ can be efficiently measured by the algorithm in \SM{}~\ref{sec:CCNRmeas}, while $r_2$ is simply the purity.

\section{Circuit depth certification} \label{sec:depth_certification}
A key task is to certify the depth of the circuit that was used to prepare a given quantum state $\rho$. For noiseless circuits, i.e. $S_2(\rho)=0$, the circuit depth can be certified efficiently using Bell measurements~\cite{hangleiter2023bell}.
We now prove that the same task is also efficient for noisy quantum circuits, assuming the final state has low entropy, i.e. $S_2(\rho)=O(\log n)$.

We apply a noisy circuit of $d$ layers  $\rho=\Lambda_d\circ\Lambda_{d-1}\dots\circ \Lambda_1(\ket{0}\bra{0})$, where $\Lambda_i$ consists of noisy two-qubit gates implemented in a fixed architecture. 
The entanglement usually grows with each layer, and can be used to bound the total depth. One chooses an bipartition of equal size, where the cut is placed such that the two-qubit gates that act across the cut are minimal. Then, the maximal possible entanglement is only a function of the $\vert \partial A\vert$ two-qubit gates that act across the cut, while all other gates cannot increase entanglement. %

We now bound the maximal entanglement generated in such circuits: %
We show that, the overall entanglement after $d$ layers is bounded as
\begin{equation} \label{eq:depth_bound}
    \mathcal{E}(\rho)\leq  d\vert \partial A\vert\ln 2\,.
\end{equation}
The proof is given in Appendix~\ref{sec:depth_bound}.

With this bound, we can efficiently certify the minimal circuit depth $d$ that generated a given state $\rho$ by measuring $\mathcal{E}_4(\rho)$. In particular, we have
\begin{equation}
    d \geq \frac{1}{\vert \partial A\vert \ln2}\mathcal{E}_4(\rho)\,.
\end{equation}
where we used the inequality Eq.~\eqref{eq:bound_e}. By measuring $\mathcal{E}_4(\rho)$, one can certify $d$, where the estimator of $\mathcal{E}_4(\rho)$ gives a non-trivial lower bound on $d$ as long as $S_2(\rho)=O(\log n)$. We derive the error bounds on the estimator for certification in \SM{}~\ref{sec:depthcert}.
For example, when we have $\mathcal{E}_4=\omega(\log n)$, our estimator correctly determines $d=\omega(\log n)$ using a polynomial number of measurements.

\section{Pseudorandom density matrices and pseudoentanglement} \label{sec:prdm_pseudoentanglement}
Our quantitative witnesses have direct implications on recent developments in quantum cryptography.
In particular, pseudorandom states (PRS) have been proposed as efficiently preparable pure state ensembles that are indistinguishable from Haar random states by any efficient quantum algorithm~\cite{ji2018pseudorandom}. Such states have been generalized to mixed states, known as pseudorandom density matrices (PRDMs). They are efficiently preparable states which are indistinguishable from random mixed states with entropy $S_2$ for any efficient algorithm~\cite{bansal2024pseudorandomdensitymatrices}.  Their formal definition is given in \SM{}~\ref{sec:prdm}.

For pure states, it has been shown that PRS (which are equivalent to PRDMs with $S_2=0$)  must have $\omega(\log n)$ entanglement entropy~\cite{ji2018pseudorandom,bouland2022quantum}. Further, it has been shown that PRDMs with $S_2=\omega(\log n)$ can possess any amount of entanglement. 
However, the entanglement property of PRDMs for bounded entropy $S_2=O(\log n)$ has been unclear. 
We now show that for $S_2=O(\log n)$, the entanglement of PRDMs as measured by PT negativity is bounded as follows:
\begin{proposition}[Entanglement of PRDM]
    Any ensemble of PRDMs  with entropy $S_2=O(\log n)$ must have PT negativity $\E(\rho)=\omega(\log n)$ with high probability. %
\end{proposition}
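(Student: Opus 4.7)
The plan is to argue by contradiction, turning the quantitative witness $\E_4$ into an efficient distinguisher against genuine random mixed states. Suppose toward contradiction that there exists an ensemble of PRDMs with $S_2(\rho) = O(\log n)$ for which $\E(\rho) = O(\log n)$ holds with non-negligible probability. By the witness inequality $\E_4(\rho) \leq \E(\rho)$ established in Eq.~\eqref{eq:bound_e}, these PRDMs then satisfy $\E_4(\rho) = O(\log n)$ on the bad event as well.

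The contrasting baseline is the ensemble of Haar-random mixed states with matching entropy, from which a PRDM must be computationally indistinguishable by definition. Such random mixed states can be realized by taking a Haar-random pure state on $n + O(\log n)$ qubits and partial-tracing an $O(\log n)$-qubit purifying register. I would then invoke the entanglement phase diagram analysis of Sec.~\ref{sec:entanglement_phase_diagram}: when the ancilla register is logarithmically small compared to both $n_A$ and $n_B$, the state sits in the maximally entangled phase and $\E_4(\rho) = \Theta(\min(n_A, n_B)) = \omega(\log n)$ with overwhelming probability over the Haar measure.

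The distinguisher is then straightforward. Using the four-copy SWAP-based circuit from \SMLong{}~\ref{sec:efficientWitness}, one estimates $\E_4(\rho)$ from $\mathrm{poly}(n)$ copies of the unknown state and outputs ``PRDM'' iff the estimate is below $c\log n$ for a suitable constant $c$. Under the contradiction hypothesis this classifier succeeds with noticeable probability on the PRDM side, while on the Haar side it fails with negligible probability. Such a polynomial-time quantum distinguisher violates the defining indistinguishability of a PRDM, forcing $\E(\rho) = \omega(\log n)$ as claimed.

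The main technical obstacle is controlling the sample complexity of the $\E_4$ estimator across both regimes. On the PRDM branch, the assumed $\E_4 = O(\log n)$ together with $S_2 = O(\log n)$ imply that both $p_2$ and $p_4$ are at least $1/\mathrm{poly}(n)$, so multiplicative estimation of the moments is cheap and $\E_4$ can be pinpointed to constant additive accuracy. On the Haar branch $p_4$ may be exponentially small, but this is benign: any sample-based upper bound $\hat{p}_4 \lesssim 1/\mathrm{poly}(n)$ already certifies $\E_4 = \omega(\log n)$ through $\E_4 = \tfrac{1}{2}\ln(p_2^3/p_4)$. Making this asymmetric analysis rigorous with matching concentration bounds, and choosing the threshold to sit cleanly inside the gap between $O(\log n)$ and $\omega(\log n)$, is the careful step of the argument; otherwise the reduction is essentially a direct application of the efficient measurability of $\E_4$ together with the typicality result for random induced states.
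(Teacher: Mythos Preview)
Your proposal is correct and follows essentially the same contradiction argument as the paper: bound $\E_4\le\E$, deduce $p_4\ge 1/\mathrm{poly}(n)$ under the hypothesis, contrast with the exponentially small $p_4$ of the GHSE, and distinguish efficiently via the four-copy estimator of Appendix~\ref{sec:efficientWitness}. The only cosmetic difference is that the paper phrases the distinguisher directly in terms of the raw moment $p_4$ rather than the composite $\E_4$, which sidesteps the asymmetric-estimation discussion you flag in your last paragraph.
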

This is proven in Appendix~\ref{sec:prdm}. The key idea is to assume by contradiction that there exists a PRDM with $\E(\rho)=O(\log n)$. This would imply that $p_4=1/\mathrm{poly}(n)$, which can be efficiently distinguished from Haar random states via the efficient measurement of the PT moment $p_4$ in Appendix~\ref{sec:efficientWitness}.

A related notion is pseudoentanglement~\cite{bouland2022quantum,bansal2024pseudorandomdensitymatrices}, which are (efficiently preparable) ensemble of states with low entanglement $g(n)$ that are indistinguishable from
an ensemble with high entanglement $f(n)$. The formal definition is given in \SM{}~\ref{sec:prdm}. 
Our results allow to provide new bounds on pseudoentanglement of mixed states. 
A fundamental question is the difference in entanglement between the two ensembles, which is dubbed the pseudoentanglement gap.
Previous works have shown that the pseudoentanglement gap is bounded as $f(n)=\Theta(n)$ vs $g(n)=\omega(\log n)$ for pure states, i.e. $S_2(\rho)=0$~\cite{bouland2022quantum}, while for highly mixed states with $S_2(\rho)=\omega(\log n)$, one finds the maximal possible gap of $f(n)=\Theta(n)$ vs $g(n)=0$~\cite{bansal2024pseudorandomdensitymatrices}. However, the gap for weakly mixed states $S_2(\rho)=O(\log n)$ has been an open problem.

Here, we now compute the entanglement in terms of PT negativity and \gEnt{}. 
Then, we can bound the largest possible pseudoentanglement gap as follows:
\begin{proposition}[Pseudoentanglement of mixed states]
Pseudoentangled state ensembles with entropy $S_2=O(\log n)$ can have a pseudoentanglement gap of at most $f(n)=\Theta(n)$ vs $g(n)=\omega(\log n)$, where entanglement is measured by both PT negativity $\E(\rho)$ and \gEnt{} $D_\mathrm{F}(\rho)$. 
\end{proposition}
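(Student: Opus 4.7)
The plan is to prove both halves of the gap bound, with the upper bound on $f(n)$ being trivial and the lower bound on $g(n)$ reducing directly to the efficient entanglement tester of Theorem~\ref{thm:testing}.

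For the upper half, any bipartite $n$-qubit state obeys $\E(\rho)\leq n\ln 2$ (there are at most $2^{\min(n_A,n_B)}$ nontrivial Schmidt coefficients), and similarly $D_\mathrm{F}(\rho)\leq n\ln 2$, so no ensemble can push $f(n)$ beyond $\Theta(n)$ in either entanglement measure; this is saturated, e.g., by Haar-random pure states and by standard pseudorandom-state constructions. Nothing more is needed here.

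For the lower half I argue by contradiction. Suppose a pseudoentangled pair of ensembles $\{\rho\}$ and $\{\sigma\}$ exists, both with $S_2=O(\log n)$, with the high-entanglement ensemble satisfying $\E(\rho), D_\mathrm{F}(\rho)=\Theta(n)$ and the low-entanglement ensemble satisfying $\E(\sigma), D_\mathrm{F}(\sigma)=O(\log n)$. Then $\sigma$ falls under case~(a) of Theorem~\ref{thm:testing} and $\rho$ under case~(b), since $\Theta(n)$ is in particular $\omega(\log n)$. Applying that theorem yields an efficient quantum algorithm which, using only $\mathrm{poly}(n)$ copies, distinguishes these two cases with high probability. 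But this is exactly an efficient distinguisher between the two ensembles, with advantage $1-o(1)$, contradicting the computational indistinguishability required of any pseudoentangled pair. Hence necessarily $g(n)=\omega(\log n)$, which establishes the claimed gap.

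The main substantive content therefore lies entirely inside Theorem~\ref{thm:testing}, and no new analytic ingredient is needed beyond it and the trivial maximum-entanglement bound. The one bookkeeping point to check, and arguably the main (mild) obstacle, is that the promises of Theorem~\ref{thm:testing} hold uniformly across the two ensembles so that the resulting distinguisher is polynomial in the cryptographic security parameter; this is immediate from the polynomial sample complexity and polynomial-time implementability of the SWAP-test-based estimator for $\E_4$ together with the two-copy purity estimator for $S_2(\rho_A)$ invoked in the proof sketch of Theorem~\ref{thm:testing}. The resulting bound is tight in light of the matching pure-state result of Ref.~\cite{bouland2022quantum}.
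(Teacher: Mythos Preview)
Your proposal is correct and follows essentially the same approach as the paper: both argue by contradiction, assuming a low-entanglement ensemble with $\E,D_\mathrm{F}=O(\log n)$ and a high-entanglement ensemble with $\E,D_\mathrm{F}=\omega(\log n)$, then invoke Theorem~\ref{thm:testing} to obtain an efficient distinguisher. Your version is somewhat more explicit about the trivial upper bound on $f(n)$ and the polynomial-time bookkeeping, but the substantive content is identical.
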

\begin{proof}
    This follows directly from contradiction: Assume there exists pseudoentangled state ensembles with low-entanglement $g(n)=\E(\rho)=D_\mathrm{F}(\rho)=O(\log n)$, and high-entanglement $f(n)=\E(\rho)=D_\mathrm{F}(\rho)=\Theta(\log n)$. Then, via Thm.~\ref{thm:testing} and assuming $S_2(\rho)=O(\log n)$, those ensembles can be efficiently distinguished, thus they cannot be pseudoentangled. 
\end{proof}
We note that pseudoentanglement is connected to the existence of a fundamental primitive of quantum cryptography~\cite{brakerski2022computational}: EFI pairs are pairs of mixed states that are statistically far, yet indistinguishable for any efficient quantum algorithm. %
EFI pairs require pseudoentanglement to exist~\cite{grilo2025quantum}, thus our bounds place potential restrictions on EFI pairs as well.

\section{Entanglement phase diagram} \label{sec:entanglement_phase_diagram}
In this section, we discuss the entanglement phase diagram in two different sets of states: Haar random states and stabilizer states.
\subsection{Haar random states} \label{sec:haar}
We will consider a tripartite system in a pure state $\ket{\psi}\in\mathcal{H}_A\otimes\mathcal{H}_B\otimes\mathcal{H}_C$, , where $\ket{\psi}$ is Haar random. Here, the region $X$ consists of $n_X$ qubits with Hilbert space dimension $L_X=2^{n_X}$, where $X\in\{A,B,C\}$. We also denote $n_{AB}=n_A+n_B$ and $L_{AB}=L_A L_B=2^{n_{AB}}$. We will analyze the Haar-random induced mixed states, which are states where the tripartite pure state is Haar-random. Their entanglement properties have been studied in~\cite{shapourian2021entanglement} using the PT negativity ${\mathcal E}(\rho)$ (see also~\cite{fukuda2013partial,bhosale2012entanglement,aubrun2012partial,aubrun2013entanglement,aubrun2012phase}).

We now denote the average over the Haar measure $\mu_n$ of $n$-qubit states as ${\mathbb E}_{\text{H}}[.]\equiv \mathbb{E}_{\rho \in \mu_n}[.]$. %
The scaling behavior of the average negativity, ${\mathbb E}_{\text{H}}[{\mathcal E}(\rho)]$, determines an entanglement phase diagram for Haar random states as a function of the partition sizes, which consists of three different ``entanglement phases''.
This includes: 
(Phase~I)~For $n_C>n_{AB}$, ${\mathbb E}_{\text{H}}[{\mathcal E}(\rho)]$ vanishes and thus, on average, $\rho$ is PPT. This phase is called the PPT phase. (Phase II)~For $n_C<n_{AB}$ and $n_B>n/2$, ${\mathbb E}_{\text{H}}[{\mathcal E}(\rho)]\sim n_A$ and thus the subsystem $A$ is maximally entangled with the subsystem $B$, yet it is not entangled  with the subsystem $C$. In other words, the entanglement in this phase is fully bipartite. This phase is called the maximally entangled (ME) phase. By symmetry, the ME phase also appears for $n_A>n/2$. (Phase III) For $n_C<n_{AB}$ and $n_A,n_B<n/2$, the states are tripartite entangled since all three subsystems $A,B,$ and $C$ are mutually entangled, and ${\mathbb E}_{\text{H}}[{\mathcal E}(\rho)]\sim (n_{AB}-n_C)/2$. This phase is called the Entanglement Saturation (ES) phase. The phase diagram is sketched in Fig.~\ref{fig:summary_results}b. Here, we show that the PT negativity in all three phases is exactly reproduced (in the leading order) by the $p_4$-negativity.

The Haar average of the PT moments for integer $\alpha$ is given by~\cite{shapourian2021entanglement}
\begin{equation}
    \mathbb{E}_{\text{H}}[p_\alpha]\simeq\frac{1}{(L_A L_BL_C)^\alpha} \sum_{\tau\in \mu_\alpha}L_C^{c(\tau)} L_{A}^{c(\sigma_+\circ \tau)} L_{B}^{c(\sigma_-\circ \tau)},
    \label{eq:ExactExpression}
\end{equation}
where $\sigma_{\pm}$ are two special permutations defined as $\sigma_\pm(k)=(k\pm 1) \mod n$, i.e.  cyclic (and anti-cyclic) permutations. Here, for any permutation $\tau\in \mu_\alpha$, $c(\tau)$ is the number of cycles in $\tau$, including the cycles of length one. By Eq.~\eqref{eq:ExactExpression}, one can obtain the thermodynamic limit of $\mathbb{E}_{\text{H}}[p_\alpha]$ for various partition sizes and compute the PT negativity by analytic continuation as
\begin{equation}
\mathbb{E}_{\text{H}}[\E(\rho)]\simeq\lim_{\alpha\to1/2}\ln\mathbb{E}_{\text{H}}[p_{2\alpha}],   
\end{equation}
as done in Ref.~\cite{shapourian2021entanglement}. We will now see how the negativity can be simply obtained by utilizing the $p_4$-negativity, without requiring any knowledge on the higher PT moments.

For the analysis, we will utilize the upper bounds for the PT negativity: 
\begin{equation}
    \E(\rho) \leq \min\{n_A \ln{2},n_B\ln{2},\frac{1}{2}(n_{AB}\ln{2}-S_2(\rho_{AB})\}.
\end{equation}
 These bounds are shown in \SM{}~\ref{sec:upper_bounds_neg} through the $p_\alpha$-negativity for $\alpha<1$. In Haar random states, $\mathbb{E}_{\text{H}}[S_2(\rho_{AB})]\geq -\ln \mathbb{E}_{\text{H}}[\Tr(\rho_{AB}^2)]\simeq \min(n_{AB}\ln{2},n_C\ln{2})$, where the first inequality is because the function $f(x)=-\ln x$ is convex. Thus, we have an upper bound for $\mathbb{E}_{\text{H}}[\E(\rho)]$ as
\begin{align}\label{eq:upp_bound_haar}
    \mathbb{E}_{\text{H}}[\E(\rho)] \leq &\min\{n_A \ln{2},n_B\ln{2},\\
    &\frac{1}{2}(n_{AB}\ln{2}-\min(n_{AB}\ln{2},n_C\ln{2}))\}.
\end{align} 
Next, we compute $\mathbb{E}_{\text{H}}[\E_4(\rho)]$, which gives a lower bound for $\mathbb{E}_{\text{H}}[\E(\rho)]$.

Using Eq.~\eqref{eq:upp_bound_haar}, we can already deduce that, for $n_C>n_{AB}$, $\mathbb{E}_{\text{H}}[\E(\rho)]\lesssim0$ from the third upper bound. This immediately implies that this phase is a PPT phase, with $\mathbb{E}_{\text{H}}[\E(\rho)]\simeq0$. Although $\E_4$ is not necessary to obtain the negativity in this phase, we compute it here for completeness. In this phase, ${\mathbb E}_{\text{H}}[p_\alpha]\simeq L_{AB}^{1-\alpha}$ in the thermodynamic limit~\cite{shapourian2021entanglement}. In particular, ${\mathbb E}_{\text{H}}[p_4]\simeq L_{AB}^{-3}$, and thus ${\mathbb E}_{\text{H}}[-\ln{p_4}]\gtrsim3n_{AB}\ln{2}$. We also have $-\ln{p_2}=S_2(\rho_{AB})\leq n_{AB}\ln{2}$. It follows that ${\mathbb E}_{\text{H}}[\E_4(\rho)]={\mathbb E}_{\text{H}}[-\frac{1}{2}\ln{p_4}+\frac{3}{2}\ln{p_2}]\gtrsim 0$. Combining ${\mathbb E}_{\text{H}}[\E_4(\rho)]\leq{\mathbb E}_{\text{H}}[\E(\rho)]$ and the upper bound obtained above, we obtain ${\mathbb E}_{\text{H}}[\E_4]\simeq0$. Thus, ${\mathbb E}_{\text{H}}[\E_4]$ has the same leading order as ${\mathbb E}_{\text{H}}[\E]$.

Next, for $n_C<n_{AB}$ and both $n_A<n/2$ and $n_B<n/2$, one gets in the thermodynamic limit
\[
{\mathbb E}_{\text{H}}[p_\alpha]\simeq\begin{cases}
\displaystyle{\frac{C_k L_{AB}}{(L_{AB} L_C)^k}},&\alpha=2k\\[4mm]
\displaystyle{\frac{(2k+1)C_k}{(L_{AB} L_C)^k}},&\alpha=2k+1,
\end{cases}
\]
where $C_k=\binom{2k}{k}/(k+1)$ is the $k$th Catalan number. In particular, ${\mathbb E}_{\text{H}}[p_4]\simeq L_{AB}^{-1} L_{C}^{-2}$, and thus ${\mathbb E}_{\text{H}}[-\ln{p_4}]\gtrsim n_{AB}\ln{2}+2n_C\ln{2}$. Since $-\ln{p_2}\leq n_C\ln{2}$, we have ${\mathbb E}_{\text{H}}[\E(\rho)]\geq{\mathbb E}_{\text{H}}[\E_4(\rho)]\gtrsim \frac{1}{2}(n_{AB}-n_C)\ln{2}$. Combining with the third upper bound in Eq.~\eqref{eq:upp_bound_haar}, we obtain 
\begin{equation}
    {\mathbb E}_{\text{H}}[\E]\simeq{\mathbb E}_{\text{H}}[\E_4]\simeq  \frac{1}{2}(n_{AB}-n_C)\ln{2}.
\end{equation}

Finally, when $n_{AB}>n_C$ and $n_A>n/2$, we obtain in the thermodynamic limit
\[
{\mathbb E}_{\text{H}}[p_\alpha]\simeq\begin{cases}
L_C^{1-\alpha}L_{B}^{2-\alpha},&\alpha=2k\\[4mm]
(L_C L_{B})^{1-\alpha}&\alpha=2k+1.
\end{cases}
\]
In particular, ${\mathbb E}_{\text{H}}[p_4]\simeq L_{B}^{-2} L_{C}^{-3}$, and thus ${\mathbb E}_{\text{H}}[-\ln{p_4}]\gtrsim 2n_B\ln{2}+3n_C\ln{2}$. We again obtain ${\mathbb E}_{\text{H}}[\E(\rho)]\geq{\mathbb E}_{\text{H}}[\E_4(\rho)]\gtrsim n_B\ln{2}$. The case in which $n_{AB}>n_C$ and $n_B>N/2$ is obtained by replacing $L_{B}$ with $L_{A}$ in the latter formula. Combining with the first and second upper bound in Eq.~\eqref{eq:upp_bound_haar}, we obtain 
\begin{equation}
    {\mathbb E}_{\text{H}}[\E]\simeq{\mathbb E}_{\text{H}}[\E_4]\simeq  \min(n_A,n_B)\ln{2}.  
\end{equation}

To sum up, we find that the leading order of ${\mathbb E}_{\text{H}}[\E_4]$ in the thermodynamic limit is identical to ${\mathbb E}_{\text{H}}[\E]$ in all the three entanglement phases of Haar random states. Specifically, this is established by showing that ${\mathbb E}_{\text{H}}[\E]$ is lower and upper bounded by the same leading-order scaling, thus fixing the scaling of ${\mathbb E}_{\text{H}}[\E]$ even without considering higher-order moments. In contrast, ${\mathbb E}_{\text{H}}[\E]$ was previously obtained by computing all the PT moments, followed by analytic continuation~\cite{shapourian2021entanglement}. This has a remarkable implication: the PT negativity in Haar-random states can be obtained in the leading order by utilizing low-order PT moments, which can be estimated experimentally and computed numerically, thereby obviating the need for the generally more intricate task of computing all PT moments. Additionally, we stress that our analysis above does not make any assumption of self-averaging, i.e. ${\mathbb E}[\ln{X}]\simeq\ln{\mathbb E}[X]$. This ensures the generality of the argument, which implies that these results hold for any state design which is at least a 4-design, where the resulting entanglement phase diagram reproduces that of the Haar random induced mixed states.

In addition, we note that the results above similarly hold for the $p_\alpha$-negativity for $\alpha\geq0$, as detailed in Appendix~\ref{sec:pneg_haar}. This behavior resembles the scaling of entanglement R\'enyi entropies in Haar random states, which saturate their maximal values (in the leading order) for any R\'enyi index $\alpha$~\cite{Page1993,nadal2010phase}.

\subsection{Stabilizer states} \label{sec:stab}
 Pure stabilizer states $\ket{\psi_S}$ are states generated by applying Clifford unitaries $U_\text{C}$ onto the $\ket{0}$ state. Here, Clifford unitaries $U_\text{C}$ are unitaries generated from Hadamard, S-gate and CNOT gates~\cite{gottesman1998heisenberg}. Clifford unitaries map (under conjugation) any $n$-qubit Pauli operator $P$ to some $n$-qubit Pauli operator, $P'$, i.e. $P'=U_\text{C}P U_\text{C}^\dag$. It is known by the Gottesman-Knill theorem that stabilizer states can be simulated efficiently on classical computers~\cite{gottesman1997stabilizer,gottesman1998heisenberg,gottesman1998theory,aaronson2004improved}.

As shown in Ref.~\cite{bravyi2006ghz},
any three-partite stabilizer state $\ket{\psi_S}$ can be decomposed into GHZ states, Bell states, and product states, distributed among the three parties, $A$, $B$, and $C$. That is, $\ket{\psi}$ can be written as
\begin{eqnarray}
\label{eq:GHZ}
\ket{\psi_S} &=& U_A U_B U_C\ket{0}^{\otimes s_{A}}_A
\ket{0}^{\otimes s_{B}}_B
\ket{0}^{\otimes s_{C}}_C
\ket{\mathrm{GHZ}}^{\otimes g_{ABC}}_{ABC}
\nonumber \\
&&
\ket{\mathrm{EPR}}^{\otimes e_{AB}}_{AB}
\ket{\mathrm{EPR}}^{\otimes e_{AC}}_{AC}
\ket{\mathrm{EPR}}^{\otimes e_{BC}}_{BC},
\end{eqnarray}
with $U_A,U_B,U_C$ unitary Clifford operators on $A,B,C$, respectively, $\ket{\mathrm{EPR}}_{AB}$ denotes a two-qubit EPR pair with one qubit in $A$ and the other in $B$ (similarly for $\ket{\mathrm{EPR}}_{AC}$ and $\ket{\mathrm{EPR}}_{BC}$), and $\ket{\mathrm{GHZ}}_{ABC}$ is a three-qubit GHZ state with one qubit in each of $A,B,$ and $C$. Given the decomposition above, one can directly see that the entanglement negativity is given by $\E(\rho)=e_{AB}$~\cite{sang2021entanglement}. It can also be seen that the negativity spectrum of stabilizer states  is constrained to two values $\lambda_i=\pm \sqrt{p_3}$, i.e all eigenvalues $\lambda_i$ of $\rho^\Gamma$ are either $\sqrt{p_3}$ or $-\sqrt{p_3}$. One can then obtain the PT moments of stabilizer states as~\cite{carrasco2024entanglementphasediagram}
\begin{eqnarray}
p_2 &=& \left(\frac{1}{2}\right)^{g_{ABC}+e_{AC}+e_{BC}}
\nonumber \\
p_3 &=& \left(\frac{1}{4}\right)^{e_{AB}} \left(\frac{1}{4}\right)^{g_{ABC}+e_{AC}+e_{BC}}
\nonumber \\
p_4 &=& \left(\frac{1}{4}\right)^{e_{AB}} \left(\frac{1}{8}\right)^{g_{ABC}+e_{AC}+e_{BC}}.
\label{eq:pnstab}
\end{eqnarray}
The negativity for stabilizer states is then given by a simple function of the PT moments as~\cite{carrasco2024entanglementphasediagram} 
\begin{equation}\label{eq:neg_stab}
\E(\rho)=\Tilde{\E}_3(\rho) \quad \text{ for all stabilizer states.}
\end{equation}
This shows that stabilizer states are PPT iff they satisfy the  $p_3$-PPT condition~\cite{elben2020mixed}. Analogously, we find that 
\begin{equation}\label{eq:neg4_stab}
\E(\rho)=\E_4(\rho) \quad \text{ for all stabilizer states.}
\end{equation}
Namely, stabilizer states saturate the bound in Eq.~\eqref{eq:bound_e}. This also follows from the flatness of the PT spectrum discussed above.

We now consider a tripartite system in a pure state $\ket{\psi}\in\mathcal{H}_A\otimes\mathcal{H}_B\otimes\mathcal{H}_C$, where $\ket{\psi}$ is a random stabilizer state. The entanglement phase diagram as a function of the partition size can be obtained using $\E(\rho)=\Tilde{\E}_3(\rho)$. Since stabilizer states form a $3$-design, we have $\mathbb{E}_{\text{S}}[p_{2,3}]=\mathbb{E}_{\text{H}}[p_{2,3}]$, where ${\mathbb E}_{\text{S}}[.]$ denotes the average over stabilizer states. Therefore, we can again compute ${\mathbb E}_{\text{S}}[\E]$ by using the upper bounds in Eq.~\eqref{eq:upp_bound_haar} and obtaining a lower bound to ${\mathbb E}_{\text{S}}[\Tilde{\E}_3]$ using a similar approach as in the Haar random case. In particular, one can verify that 
\begin{equation}
    {\mathbb E}_{\text{S}}[\E]={\mathbb E}_{\text{S}}[\Tilde{\E}_3]\simeq  0,
\end{equation}
for $n_C>n_{AB}$,
\begin{equation}
    {\mathbb E}_{\text{S}}[\E]={\mathbb E}_{\text{S}}[\Tilde{\E}_3]\simeq \frac{1}{2}  (n_{AB}-n_C)\ln 2,
\end{equation}
for $n_C<n_{AB}$ and both $n_A<n/2$ and $n_B<n/2$, and 
\begin{equation}
    {\mathbb E}_{\text{S}}[\E]={\mathbb E}_{\text{S}}[\Tilde{\E}_3]\simeq  \min(n_A,n_B)\ln{2},
\end{equation}
for $n_C<n_{AB}$ and either $n_A>n/2$ or $n_B>n/2$. In particular, the entanglement phase diagram reproduces that of Haar random states, despite the fact that stabilizer states do not form a $4$-design~\cite{zhu2016clifford}. It was shown, for example, that the quantity $\Tilde{r}_2$, which is a function of the PT moments up to the fourth order (see Eq.~\eqref{eq:r2}), takes different values across the phase diagram for Haar random and stabilizer states~\cite{carrasco2024entanglementphasediagram}, thus also highlighting the limitation of $\Tilde{r}_2$ in distinguishing entanglement phases. We further note that, since the relation $\E(\rho)=\Tilde{\E}_3(\rho)$ is specific to stabilizer states, the results presented above do not necessarily extend to other state $3$-designs. %

Further, the realignment moments of stabilizer states are given by~\cite{yin2023mixedstate}
\begin{eqnarray}
r_\alpha &=& \left(\frac{1}{2}\right)^{S_{AB}+(\frac{\alpha}{2}-1)(S_{A}+S_{B})},
\label{eq:rnstab}
\end{eqnarray}
where $S_A$ denotes the entanglement of the subsystem $A$. The CCNR negativity for stabilizer states is then given by
\begin{equation}
    \mathcal{C}(\rho) = \frac{S_A+S_B}{2} - S_{AB},
\end{equation}
which also yields
\begin{equation}\label{eq:cneg4_stab}
\mathcal{C}(\rho)=\mathcal{C}_4(\rho)\quad \text{ for all stabilizer states,}
\end{equation}
thus saturating the bound in Eq.~\eqref{eq:bound_c}. This implies that the singular values of the realignment matrix are flat.

\section{Entanglement and noise} \label{sec:noise_haar}
In this section, we analyze the average bipartite entanglement of Haar-random pure states subjected to global depolarizing noise. These systems approximate deep random quantum circuits with local noise~\cite{dalzell2021randomquantumcircuits}. We are interested in determining under which condition entanglement survives in the presence of noise. We consider Haar states subject to global depolarizing noise
\begin{equation}
    \rho'
 = (1-p)\ket{\psi}\bra{\psi} + p I/2^n,   
\end{equation}
with noise probability $p$ and $\ket{\psi}$ being drawn from the Haar measure. One can show that the R\'enyi-2 entropy of the depolarized state is given by ${\mathbb E}_{\text{H}}[S_2[\rho']]= -\ln\left[(1-p)^2 + p(2-p)/2^n \right]$. Now, let $p_\alpha'$ be the PT moment of the depolarized Haar-random states and $p_\alpha$ those of a Haar-random state. Then, the mean values ${\mathbb E}_{\text{H}}[p_\alpha']$ can be expressed in terms of the mean values ${\mathbb E}_{\text{H}}[p_\alpha]$ as
\begin{equation} \label{eq:pn_depo}
{\mathbb E}_{\text{H}}[p_\alpha']=\sum_{k=0}^\alpha\binom{\alpha}{k}(1-\epsilon)^k{\mathbb E}_{\text{H}}[p_k](\epsilon/2^n)^{\alpha-k}\,,
\end{equation}
with ${\mathbb E}[p_0]=p_0=2^n$.

For any fixed $p<1$ and $n \to \infty$, we find that the leading term ${\mathbb E}_{\text{H}}[\E_4]\simeq \min(n_A,n_B)\ln{2}$ for extensive subsystems $n_A$ and $n_B$. Therefore, the averaged bipartite entanglement entropy of the depolarized state, quantified by the PT negativity, is unchanged from that of the pure state, which follows the Page curve~\cite{Page1993}. As a consequence, $\E_4$ is remarkably effective in detecting mixed-state entanglement in Haar states even in the presence of strong noise.

We further consider exponential noise $p=1-2^{-\beta n}$ with fixed $\beta$, and determine for which $\beta$ the witness $\E_4$ can successfully detect entanglement. For $\beta\leq1/2$, we have ${\mathbb E}_{\text{H}}[S_2(\rho')]\simeq2\beta n\ln 2$. Moreover, using Eq.~\eqref{eq:pn_depo}, we obtain ${\mathbb E}_{\text{H}}[p_4]\simeq2^{-4\beta n-2n_B}$, yielding
\begin{equation} \label{eq:e4_depo}
    {\mathbb E}_{\text{H}}[\E_4]\gtrsim  \min(n_A,n_B)\ln{2} - \beta n\ln{2}.
\end{equation}
Thus, we detect entanglement via $\E_4>0$ for extensive subsystems whenever $\beta < \min(n_A,n_B)/n$. Beyond detection, the value of $\E_4$ in Eq.~\eqref{eq:e4_depo} shows rigorously that the negativity obeys a volume law in the entangled regions. Notably, the $p_3$-PPT condition is unable to show the volume law. 
In fact, the PT negativity can also be obtained by considering the distribution of the Schmidt coefficients of a Haar random state, yielding~\cite{shaw2024benchmarking}
\begin{equation} \label{eq:e_depo}
    {\mathbb E}_{\text{H}}[\E]\simeq  \min(n_A,n_B)\ln{2} - \beta n\ln{2}.
\end{equation}
Thus, $\E_4$ matches $\E$ in depolarized Haar random states, i.e. ${\mathbb E}_{\text{H}}[\E_4]\simeq{\mathbb E}_{\text{H}}[\E]$. This highlights the effectiveness of $\E_4$ to quantitatively detect entanglement, even under exponentially strong noise.

We provide further discussion on the effects of global depolarizing noise on entanglement in Appendix~\ref{sec:depo_noise}.

\section{Matrix product states and matrix product operators algorithms} \label{sec:mps_mpo}

\begin{figure*}[htbp]
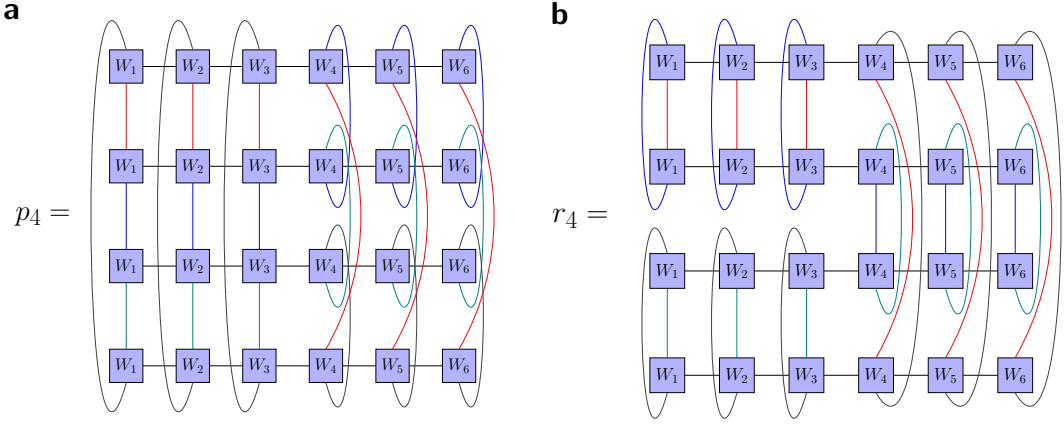

\centering
\subfigimg[width=0.40\linewidth]{a}{p4_mpo.pdf}
\subfigimg[width=0.40\linewidth]{b}{r4_mpo.pdf}
\caption{ MPO contractions to compute (a) $p_4$ and (b) $r_4$. The system size is $n=6$ and we take $A=\{1,2,3\}$ and $B=\{4,5,6\}$.}
\label{fig:mpo_p4_r4}
\end{figure*}

Here, we introduce algorithms to compute partial transpose and realignment moments for matrix product states (MPS) and matrix product operators (MPO). 

Let us consider a system of $n$ qubits in a pure state $|\psi \rangle$ given by an MPS of bond dimension $\chi$:
\begin{equation} \label{eq:mps}
|\psi \rangle=\sum_{s_1,s_2,\cdots,s_n} A^{s_1}_1 A^{s_2}_2 \cdots A^{s_n}_n |s_1,s_2,\cdots s_n \rangle
\end{equation}
with $A_i^{s_i}$ being $\chi \times \chi$ matrices, except at the left (right)
boundary where $A_1^{s_1}$ ($A_n^{s_n}$) is a $1 \times \chi$ ($\chi \times 1$) row (column) vector. Here $s_i \in \left \lbrace 0, 1 \right \rbrace$  is a local computational basis. We then study the entanglement within $\rho=\text{tr}_C(\ket{\psi}\bra{\psi})$, where $C$ is a subsystem of $\ket{\psi}$. The case of translation-invariant MPS is discussed in Appendix~\ref{sec:ti_mps}.

We also consider a density matrix $\rho$ of $n$ qubits represented in the following MPO form
\begin{equation}
    \rho =\sum_{\boldsymbol{s},\boldsymbol{s'}} W^{s_1,s'_1}_1 W^{s_2,s'_2}_2 \cdots W^{s_n,s'_n}_n | s_1, \cdots, s_n \rangle \langle s'_1, \cdots, s'_n |
\end{equation}
with $W^{s_i,s'_i}_i$ being $\chi \times \chi$ matrices, except at the left (right)
boundary where $W^{s_1,s'_1}$ (or $W^{s_n,s'_n}$) is a $1 \times \chi$ ($\chi \times 1$) row (column) vector. %

 \subsection{Partial transpose moments}
 The PT moments $p_\alpha$ can be computed efficiently for integer $\alpha\geq 1$ for MPS. To do so, we write $p_\alpha$ as %
 \begin{equation}
\begin{split}
     p_\alpha = \sum_{    a_{(\gamma)},a'_{(\gamma)},
    b_{(\gamma)},b'_{(\gamma)}
    } \prod_{\gamma=1}^\alpha \left[ \delta_{a'_{(\gamma)},a_{(\sigma_+(\gamma))}} \delta_{b'_{(\gamma)},b_{(\sigma_-(\gamma))}} \vphantom{\langle a_{(\gamma)},b_{(\gamma)}\vert \rho\vert a'_{(\gamma)},b'_{(\gamma)}\rangle} 
    \right.
    \\
    \left.
    \langle a_{(\gamma)},b_{(\gamma)}\vert \rho\vert a'_{(\gamma)},b'_{(\gamma)}\rangle \right]\,.
    \end{split}
 \end{equation}
 Using this identity, we can compute $p_\alpha$ for MPS by forming a transfer matrix at each site, defined as
\begin{equation}
    L_i = \sum_{\{s_{(\gamma)},s'_{(\gamma)}\}}  \prod_{\gamma=1}^\alpha \delta_{s'_{(\gamma)},s_{(\pi(\gamma))}}\bigotimes_{\gamma=1}^\alpha A_i^{s_{(\gamma)}} \otimes \overline{A}_i^{s'_{(\gamma)}}\,,
\end{equation}
where $\overline{A}$ indicates the complex conjugate of $A$, $\pi()$ is a permutation of indices, with $\pi=\sigma_+$ if $i\in A$, $\pi=\sigma_-$ if $i\in B$, and $\pi=e$ if $i\in C$.
The terms $\delta_{s'_{(\gamma)},s_{(\pi(\gamma))}}$ imply contractions of physical indices $s'_{(\gamma)}$ and $s_{(\pi(\gamma))}$. 
We now define a tensor $L$ initialized to $L=L_1$ and iteratively compute $L\to L L_i$, for $i=2,\cdots,n$. At each step (before the final site), $L$ is a tensor with $2\alpha$ indices, each of size $\chi$. To minimize the contraction cost, each of the tensors $A_i$ is contracted to $L$ one by one, yielding the overall cost of $O\left(n\alpha\chi^{2\alpha+1}\right)$. The tensor $L$ after the final iteration is a scalar, which is equivalent to $p_\alpha$. 

The algorithm can be straightforwardly adapted to MPO, and the MPO contraction is sketched in Fig.~\ref{fig:mpo_p4_r4}(a).  We define the transfer matrix
\begin{equation}
    L_i = \sum_{\{s_{(\gamma)},s'_{(\gamma)}\}}  \prod_{\gamma=1}^\alpha \delta_{s'_{(\gamma)},s_{(\pi(\gamma))}}\bigotimes_{\gamma=1}^\alpha W_i^{s_{(\gamma)},s'_{(\gamma)}},   
\end{equation}
and the algorithm proceeds as above: We define $L=L_1$ and iteratively compute $L\to L L_i$, for $i=2,\cdots,n$, as illustrated in Fig.~\ref{fig:tm_update}. The final contracted tensor is equivalent to $p_\alpha$. We note that cost to contract $L$ for MPOs is given by $O\left(n\alpha\chi^{\alpha+1}\right)$. 

\begin{figure}[htbp]
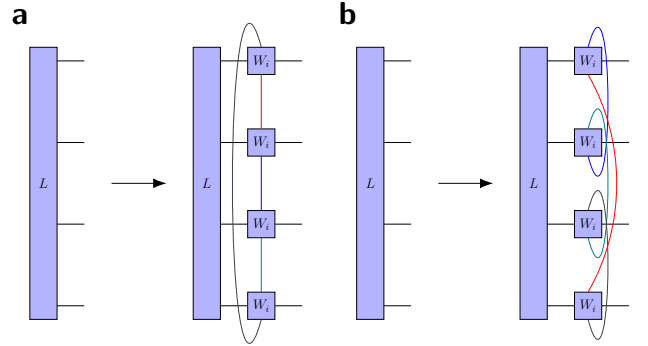

\centering
\subfigimg[width=0.49\linewidth]{a}{p4_mpo_tm.pdf}
\subfigimg[width=0.49\linewidth]{b}{p4_mpo_tm_2.pdf}
\caption{Update of the tensor $L\to LL_i$ in the calculation of $p_4$ if (a) $i\in A$ or (b) $i\in B$.}
\label{fig:tm_update}
\end{figure}

\subsection{Realignment moments}
For the realignment moment, we first consider an MPO with the subsystem $A=\{1,2,\cdots,n_A\}$ and $B=A^\text{c}$. The MPO contraction for computing $r_4$ is sketched in Fig.~\ref{fig:mpo_p4_r4}(b). Viewing the MPO as an MPS with two physical indices, one can see that the contraction is equivalent to the purity of the MPS. Similarly, higher order $r_\alpha$ is equivalent to the higher order $\alpha-$R\'enyi entanglement entropy of the corresponding MPS. The latter can be obtained from the Schmidt coefficients of the MPS, which can be calculated using standard MPS routines with cost $O(n\chi^3)$ for any $\alpha$. Indeed, as has been discussed in Sec.~\ref{sec:ccnr_osee}, the Schmidt coefficients of the (normalized) MPO correspond to the singular values of the realignment matrix $R_\rho$. In this way, the CCNR negativity $\mathcal{C}$ can be exactly computed efficiently. 

Given MPS $\ket{\psi}$, we are interested in computing entanglement of the mixed state $\rho=\text{tr}_C(\ket{\psi})$, with subsystem $C$ connected to the right boundary of MPS.
Since the density matrix $\rho$ can be represented by an MPO with bond dimension $\chi^2$, where the local tensors are $U_i^{s_i,s_i'}=A_i^{s_i} \otimes \overline{A}_i^{s'_i}$, the CCNR negativity can be naively obtained with cost $O(n\chi^6)$. We now propose an algorithm with a reduced cost.

Let us compute entanglement for $\rho$ with bipartition $A=\{1,2,\cdots,n_A\}, B=\{n_A+1,n_A+2,\cdots,n_A+n_B\}$. We note that the algorithm can be adapted for the case where $A$ and $B$ are contiguous segments in the left and right parts of the MPS, respectively (see \SM{}~\ref{sec:mps_pt_left_right}). First, we place the orthogonality center of the MPS on the site $n_A$. Next, we define the transfer matrix
\begin{equation}
    R_i = \sum_{\{s,s'\}}   A_i^{s} \otimes \overline{A}_i^{s'} \otimes A_i^{s'} \otimes \overline{A}_i^{s},
\end{equation}
for $i\in B$. We also define $R=\mathbb{1}_\chi \otimes \mathbb{1}_\chi$ and compute $R \to R_i R$ iteratively for $i=n_A+n_B,\cdots,n_A+1$, with cost $O(n\chi^5)$. The final tensor $R$ is a tensor with four indices $R_{\alpha,\alpha',\beta,\beta'}$. The tensor $L$ can be defined in a similar way from the left part of the MPS. Because the orthogonality center is placed at $n_A$, $L$ is simplified to $L_{\alpha,\alpha',\beta,\beta'}=\Lambda_{\alpha,\beta'} \Lambda_{\beta,\alpha'}$, where $\Lambda=\sum_s A_{n_A}^{s} \otimes \overline{A}_{n_A}^{s}$. Next, we compute
\begin{equation}
    \Gamma_{\alpha,\alpha',\gamma,\gamma'}=\sum_{\beta,\beta'}R_{\alpha,\alpha',\beta,\beta'}  \Lambda_{\beta,\gamma'} \Lambda_{\gamma,\beta'}.
\end{equation}
Writing $\Gamma$ as a $\chi^2 \times \chi^2$ matrix by combining indices $\Gamma_{(\alpha,\alpha'),(\gamma,\gamma')}$, one can see that
\begin{equation}
    r_\alpha = \text{tr}(\Gamma^{\alpha/2}),
\end{equation}
for even integer $\alpha$. Thus, the eigenvalues of $\Gamma$ are the squared singular values of $R_\rho$, and the CCNR negativity can again be exactly computed. The  diagonalization of $\Gamma$ can be done with cost $O(\chi^6)$, which results in the overall cost of $O(\chi^6 + n \chi^5)$. Note that, for $\alpha=4$, $r_4 =\text{tr}(\Gamma^{2})$ can be computed more efficiently by matrix multiplication, and thus the cost for computing $r_4$ is $O(n\chi^5)$.

\section{Mixed-state entanglement in many-body systems} \label{sec:num_results}
While entanglement of pure many-body systems has been routinely studied~\cite{amico2008}, characterizing it in mixed states has remained a significant challenge due to the inherent difficulty in computing mixed-state entanglement measures in many-body systems. Mixed states are ubiquitous, arising naturally from finite temperature, noise, or partial tracing. With our witnesses, we now study the entanglement in mixed many-body systems, demonstrating their potential for characterizing entanglement in various physical phenomena. In the following, we focus on the witnesses from the PT moments.

First, we consider the transverse-field Ising model
\begin{equation}\label{eq:ising}
    H_\text{TFIM}=-\sum_{k=1}^{n-1}\sigma^x_k\sigma^x_{k+1}-h\sum_{k=1}^n \sigma_k^z
\end{equation}
where $h$ is the transverse field and we impose open boundary condition. In this model, there is a transition from the ferromagnetic
phase to the paramagnetic phase at $h_c=1$. The critical point is governed by the Ising CFT with central charge $c=1/2$~\cite{DiFrancesco}. The TFIM possesses $\mathbb{Z}_2$ symmetry, generated by $\prod_i \sigma^z_i$.

As initial warm-up to understand our witnesses,  we begin with the pure ground state of $n=12$ sites, and we set $n_{A}=n_{B}= 5$ sites in the middle as the subsystem, where $A$ and $B$ are adjacent, contiguous subsystems. We show $\E,\Tilde{E}_3,\E_4$ in Fig.~\ref{fig:neg_L12_ising}(a), where the inequalities $\E_4,\E_4^{\text{SR}},\Tilde{\E}_2^{\text{SR}}\leq\E$ are confirmed. The entanglement is detected by all the witnesses $\Tilde{E}_3, \E_4,\E_4^{\text{SR}},\Tilde{\E}_2^{\text{SR}}$ in the interval $h \in[0,2]$, where the state is indeed entangled, as confirmed by positive $\E$. Next, we consider disconnected intervals $A$ and $B$ consisting of $n_{A}=n_{B}= 5$ sites on the left and right end of the chain, respectively. As shown in  Fig.~\ref{fig:neg_L12_ising}(b), the non-SR witnesses $\Tilde{\E}_3, \E_4$ fail to detect entanglement in a large interval, as it becomes negative for $h\gtrsim0.9$. The SR witnesses $\E_4^{\text{SR}},\Tilde{\E}_2^{\text{SR}}$, defined with respect to the $\mathbb{Z}_2$ symmetry, display significant improvement over the non-SR ones, only becoming slightly negative at $h\gtrsim1.5$, where the PT negativity is indeed close to zero.

Now, we subject  the ground state of the TFIM to noise, turning it into a mixed state, where each qubit is affected by single-qubit depolarisation noise $\Gamma(\rho)=(1-p)\rho+p \text{tr}_1(\rho) I_1/2$ with probability $p$. This channel preserves the $\mathbb{Z}_2$ symmetry of the systems, so that the SR witnesses can still be utilized.  Fig.~\ref{fig:neg_depo}(a) shows $\E_4$ for $n=6$ and $n_A=n_B=3$. We see that, for small enough $p$, $\E_4$ is effective in detecting the entanglement. Further, we compare various entanglement witnesses with $n=6,n_A=3$, and $p=0.05$ in Fig.~\ref{fig:neg_depo}(b). We again find that the SR witnesses are more reliable in detecting entanglement. Further, we show the witnesses as a function of $p$ in Fig.~\ref{fig:neg_thermal}(a). We find that the PT negativity vanishes above a value $p_c$, a phenomena reminiscent of sudden death that is known to occur at finite-temperature systems~\cite{sherman2016}. Indeed, we confirm that, at finite temperature, the PT negativity vanishes for $T\gtrsim1.5$ at $h=1.5$, as shown in Fig.~\ref{fig:neg_thermal}(b). Here, $n=10$ and $n_A=n_B=5$. Once again, we see the advantage of SR witnesses in witnessing entanglement in this case, becoming negative at higher temperature than the non-SR counterpart. At finite temperature, the PT negativity has been shown to follow an area law scaling~\cite{sherman2016}. Using the bounds shown in Sec. \ref{sec:pt_witness} and \ref{sec:sr_witness}, it immediately follows that our quantitative witnesses are bounded by the area law scaling.  

\begin{figure}[htbp]
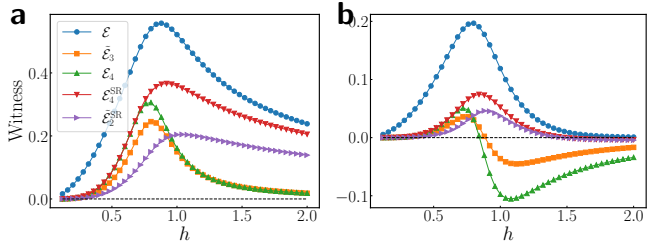

\centering
\subfigimg[width=0.49\linewidth]{a}{neg_L12.pdf}
\subfigimg[width=0.49\linewidth]{b}{neg_disconnected_L12.pdf}
\caption{Entanglement witnesses for the groundstate of the TFIM~\eqref{eq:ising} of $n=12$ sites and subsystem size $n_A=n_B=5$ for (a) connected and (b) disconnected partitions as a function of the transverse field $h$. }
\label{fig:neg_L12_ising}
\end{figure}

\begin{figure}[htbp]
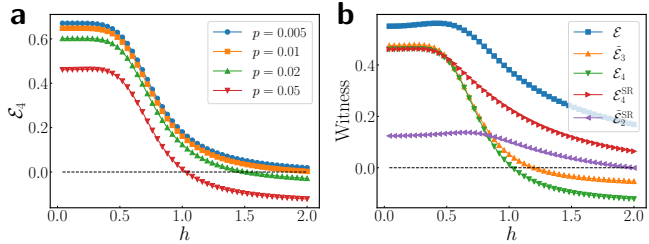

\centering
\subfigimg[width=0.49\linewidth]{a}{neg_depo4.pdf}
\subfigimg[width=0.49\linewidth]{b}{neg_depo_L6_p0.05.pdf}
\caption{ Entanglement witnesses for the groundstate of the TFIM~\eqref{eq:ising} of $n=6$ sites under depolarization noise. The partition size is $n_A=3$. In (a), we show $\E_4$ for different depolarisation probability $p$. For $p\in\{0.005,0.01\}$, $\E_4$ detects the entanglement at all interval of $h$ considered. In (b), we show different entanglement witnesses for $p=0.05$.}
\label{fig:neg_depo}
\end{figure}

\begin{figure}[htbp]
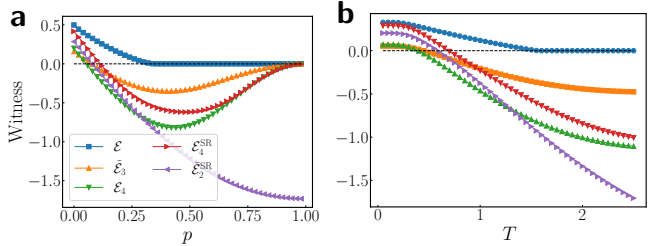

\centering
\subfigimg[width=0.49\linewidth]{a}{neg_depo_L6_p.pdf}
\subfigimg[width=0.49\linewidth]{b}{neg_thermal_L10_h1.5.pdf}
\caption{ Entanglement witnesses for the TFIM~\eqref{eq:ising} (a) under depolarization noise with $n=6$ and $h=1$, and (b) at finite temperature $T$ with $n=10$ sites and $h=1.5$. The partition size is $n_A=n/2$. %
}
\label{fig:neg_thermal}
\end{figure}

We further illustrate the entanglement witnesses for the ground-state of the XXZ model, which Hamiltonian reads 
\begin{equation} \label{eq:xxz}
H_\text{XXZ}=-\sum_{k=1}^{n-1} (\sigma^x_k\sigma^x_{k+1}+\sigma^y_k\sigma^y_{k+1}+\Delta\sigma^z_k\sigma^z_{k+1})
\end{equation}
with the anisotropy $\Delta$. The model has a $U(1)$ symmetry related to magnetization conservation. We compute the ground state within the half-filling excitation symmetry sector $N_\text{p}=\sum_{k=1}^n\sigma^z_k=0$. The system hosts an antiferromagnetic phase when $\Delta<-1$, a critical phase for $\Delta \in [-1,1]$, and a ferromagnetic phase for $\Delta>1$. We will analyze the entanglement in the range $\Delta \in [-4,1]$.

The results for the tripartite case are shown in Fig.~\ref{fig:neg_L12_xxz}, with the same partitions as in the TFIM. In this case, all the witnesses considered can detect the entanglement of the ground states for all $\Delta$, both for connected and disconnected partitions.

\begin{figure}[htbp]
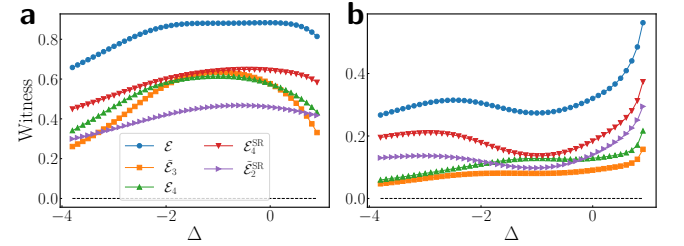

\centering
\subfigimg[width=0.49\linewidth]{a}{neg_L12_xxz.pdf}
\subfigimg[width=0.49\linewidth]{b}{neg_disconnected_L12_xxz.pdf}
\caption{Entanglement witnesses for the groundstate of the XXZ model~\eqref{eq:xxz} of $n=12$ sites and subsystem size $n_A=n_B=5$ for (a) connected and (b) disconnected partitions as a function of the anisotropy $\Delta$. }
\label{fig:neg_L12_xxz}
\end{figure}

\section{Discussion} \label{sec:discussion}
In this work, we introduced a generalization of entanglement Renyi entropies to mixed states, which we call the $p_\alpha$-negativity. For $\alpha=1$, it equals the PT negativity, while for $\alpha>1$ it serves as a quantitative witness for mixed state entanglement. In other words, they offer useful quantitative information about the degree of entanglement by rigorously bounding entanglement monotones~\cite{eisert2007quantitative}. Further, we showed that symmetries can be exploited to enhance the entanglement witnessing capabilities, improving over previous symmetry-resolved approaches. We also introduced similar witnesses based on the CCNR criterion~\cite{chen2003ccnr}, dubbed the $r_\alpha$-negativity. The CCNR negativity bounds the robustness of entanglement, thus making the witnesses quantitative. Notably, both $\mathcal{C}_4$ and $\mathcal{E}_4$ can be efficiently measured on quantum computers. In particular, $\mathcal{C}_4$ can be measured in experiment via SWAP tests, using only a single layer of CNOT and Hadamard gates acting on $4$ copies of $\rho$, while $\mathcal{E}_4$ requires a linear number of Toffoli gates. $\mathcal{C}_4$ requires only Clifford gates and constant-depth circuit to be measured via Bell measurements, which is an essential advantage for benchmarking early fault-tolerant quantum computers. This protocol can also be applied to efficiently measure the operator space entanglement entropy.

By leveraging our witnesses, we obtain several powerful results regarding entanglement properties across diverse scenarios.
We provide an algorithm to test whether a state has $O(\log n)$ or $\omega(\log n)$ entanglement, which is efficient as long as the state is weakly mixed, i.e. the 2-R\'enyi entropy scales at most as $S_2=O(\log n)$. This bound  is tight, as testing is inherently inefficient for $S_2=\omega(\log n)$~\cite{bansal2024pseudorandomdensitymatrices}. This test generalizes the efficient testing algorithm that was previously only known for pure states~\cite{bouland2022quantum}, to the mixed states setting. Similarly, we show that any pseudorandom density matrices with $S_2=O(\log n)$ must have PT negativity $\E=\omega(\log n)$. Our results also have direct implications on quantum cryptography: We show that the pseudoentanglement gap, i.e. the capability of low-entangled states to mimic high-entangled states, is fundamentally constrained for mixed states with bounded entropy $S_2=O(\log n)$, with the gap of $f(n)=\Theta(n)$ vs $g(n)=\omega(\log n)$, generalizing the previously known constraint for pure states. The optimal gap is only achieved for highly mixed states with $S_2=\omega(\log n)$, where one finds $f(n)=\Theta(n)$ vs $g(n)=0$~\cite{bansal2024pseudorandomdensitymatrices}. Thus, we resolve the question of the pseudoentanglement gap as a function of entropy, with the complete summary given in Tab.~\ref{tab:pseudoresource}.
While usually entropy in quantum states is associated negatively with noise and destroying quantumness, we find the opposite is true in quantum cryptography: Here, entropy is an important resource needed to hide information about quantum resources such as entanglement from eavesdroppers~\cite{haug2025pseudorandom}.
To completely hide entanglement, one requires $S_2=\omega(\log n)$, which we find is tight. These findings complement our recent work~\cite{haug2025efficientwitnessingtestingmagic}, where we found analogous results for the mixed-state nonstabilizerness.

We can also robustly certify the circuit depth needed to prepare a given quantum state. In particular, we can efficiently determine the minimum depth of the circuit even when the state is subjected to noise, assuming the final state is weakly mixed. Previously, such certification was only possible for pure states, i.e. in noiseless quantum circuits~\cite{hangleiter2023bell}. In addition, we provide an efficient algorithm to test the Schmidt rank for pure states and the operator Schmidt rank for mixed states. 

Our witness provides a significant simplification in determining the entanglement phase diagram of Haar random states: it can be entirely characterized by computing only the $p_4$-negativity. This is done by exploiting the fact that the $p_4$-negativity provides a lower bound for the PT negativity. By explicitly computing the Haar average of the $p_4$-negativity, we find that it saturates the upper bound of the PT negativity, which implies that the latter must also saturate the maximal value. Remarkably, this approach stands in contrast to the previous more intricate methods, which required calculating all PT moments and then analytically continuing to the PT negativity~\cite{shapourian2021entanglement}. As a direct consequence, we demonstrate that any $4$-design exhibits the same entanglement phase diagram as Haar random states. We note that entanglement transitions in Haar random states have been observed in experiment by full-state tomography~\cite{liu2023observation}, and our results provide a scalable route to observe the transitions in larger devices.

We further show that the Haar average of the $p_\alpha$-negativity is also close to its maximal value for any $\alpha\geq0$. This parallels the behavior of entanglement R\'enyi entropies in Haar random states, which are similarly close to their maximal values for any R\'enyi index $\alpha$. This further demonstrates the similarity between the $p_\alpha$-negativities and the entanglement R\'enyi entropies, highlighting the significant potential of the former to characterize mixed-state entanglement in many-body systems.

Surprisingly, we find that stabilizer states also share the same phase diagram, despite not forming exact $4$-designs~\cite{zhu2016clifford}. 
We also investigate the impact of noise on the entanglement of pure Haar random states. Our findings show that the $p_4$-negativity remains effective in capturing entanglement properties in this case. Notably, we observe that the volume-law entanglement survives for sufficiently large subsystems, even when subjected to exponentially strong noise.

Our work also makes progress on the computation of entanglement in extensive mixed many-body systems, which has been a challenge. Mixed states can be represented as MPOs or as purifications in form of MPS, both of bond dimension $\chi$.
For MPS of bond dimension $\chi$, we provide an efficient algorithm to compute PT moment $p_\alpha$  for integer $\alpha>1$ in $O(n\alpha \chi^{2\alpha+1})$, while for mixed states written as MPO it can be computed in $O(n\alpha\chi^{\alpha+1})$. Similarly, $\mathcal{C}$ can be computed in $O(n\chi^3)$ for MPO, while for MPS the cost is $O(n\chi^5)$ for computing the realignment moment $r_4$.
We show that our entanglement witnesses can be used to characterize the entanglement of ground states for the TFIM and XXZ model when subject to noise or when subsystems have been traced out.
Notably, we find that $\mathcal{E}_4$, which is efficient for MPS, characterizes entanglement nearly as well as the PT negativity $\mathcal{E}$, which is difficult to compute. 
Further, we find that our symmetry-resolved witnesses can detect entanglement significantly better than standard witnesses, both for discrete and continuous symmetries,  promising to be useful in symmetric many-body systems. %
Taken together, these results demonstrate the potential of $p_\alpha$-negativity and its symmetry resolution to characterize mixed-state entanglement in large many-body systems, where the PT negativity is no longer tractable. 

In summary, our work presents several significant results concerning mixed-state entanglement that were previously understood only in the context of pure states. In pure states, these insights often relied on entanglement R\'enyi entropies. We have been able to extend these findings to mixed states through the introduction of the $p_\alpha$-negativity, showcasing its role as a powerful generalization of entanglement R\'enyi entropies for mixed systems. While $p_\alpha$-negativity is not an entanglement monotone, its function as a quantitative entanglement witness enables it to effectively characterize mixed-state entanglement properties, as evidenced by our findings. Given the widespread utility of entanglement R\'enyi entropies in understanding pure-state entanglement, we anticipate that our witnesses will find further applications in deepening our understanding of mixed-state entanglement.

Our work opens several interesting avenues for future research. A key direction involves exploring the applications of these quantitative entanglement witnesses in the characterization of entanglement in many-body quantum systems. We posit that the $p_4$-negativity can serve as a crucial tool in the study of entanglement phases and properties in general mixed states. %
In particular, it would be interesting to extend the minimal membrane picture of entanglement spreading~\cite{nahum2017quantum} to non-complementary subsystems. Further, investigations into tripartite entanglement at measurement-induced criticality~\cite{sang2021entanglement}, and in noisy monitored circuits~\cite{weinstein2022measurement,liu2023universal,liu2024entanglement,qian2025protect} are also warranted. It would also be interesting to study whether $p_\alpha$-negativity can characterize finite-temperature phase transitions~\cite{lu2020structure} and detect topological order at finite temperature~\cite{lu2020detecting,hart2018entanglement}. Future research could also explore the construction of entanglement witnesses for fermionic system, where the PT negativity is defined based on a different notion of partial transpose~\cite{shapourian2017partial,shapourian2019entanglement}. Another interesting direction is to extend our ideas for the quantitative detection of multipartite entanglement~\cite{liu2022detecting}. 
Finally, we posit the question whether it is possible to construct proper mixed-state entanglement monotones that are also efficiently measurable. %

\begin{acknowledgments}
P.S.T. acknowledges funding by the Deutsche Forschungsgemeinschaft (DFG, German Research Foundation) under Germany’s Excellence Strategy – EXC-2111 – 390814868.
\end{acknowledgments}

\bibliographystyle{apsrev4-1}
\bibliography{biblio}

\clearpage 

\onecolumngrid

\let\addcontentsline\oldaddcontentsline

\appendix

\setcounter{figure}{0}

\renewcommand{\thesection}{\Alph{section}}
\renewcommand{\thesubsection}{\arabic{subsection}}
\renewcommand*{\theHsection}{\thesection}

\clearpage
\begin{center}

\textbf{\large \SMLong{}}
\end{center}

\makeatletter

\renewcommand{\thefigure}{S\arabic{figure}}

We provide proofs and additional details supporting the claims in the main text.

\makeatletter
\@starttoc{toc}

\makeatother

\section{Efficient algorithm to measure PT negativity moment}\label{sec:efficientWitness}
Here, we detail the algorithm to measure $\mathcal{E}_4(\rho)=\frac{1}{2}[-\ln p_4 +3\ln p_2]$. In particular, we give an efficient algorithm to estimate $p_4$, following Refs.~\cite{carteret2005noiseless,gray2018machinelearning}. The circuit to measure is sketched in Fig.~\ref{fig:SWAP}.

\begin{fact}[Efficient PT moment]\label{thm:entanglement_sup}
For a given (mixed) $n$-qubit state $\rho$, there exists an efficient algorithm to measure $p_4(\rho)=\operatorname{tr}(\rho^\Gamma)^4$, where $\rho^{\Gamma} = (I \otimes \Gamma) \rho$ is the partial transpose,  to additive precision $\epsilon$ with failure probability $\delta$ using $O(\epsilon^{-2}\log(2/\delta))$ copies of $\rho$, $O(n)$ circuit depth, $O(n)$ Toffoli gates, $1$ auxiliary qubit,  and $O(\epsilon^{-2}\log(2/\delta))$ classical post-processing time. 
\end{fact}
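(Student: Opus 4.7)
The plan is to reduce $p_4(\rho)$ to the expectation value of a fixed permutation operator acting on four copies of $\rho$, and then measure that expectation with a standard controlled-permutation (Hadamard test) circuit. The starting identity is the SWAP-trick formula $p_\alpha(\rho) = \operatorname{tr}\bigl(\rho^{\otimes \alpha}\,[\sigma_+]_A\otimes[\sigma_-]_B\bigr)$, where $\sigma_\pm$ are the cyclic/anti-cyclic permutations on $\alpha$ copies, acting on the $A$ and $B$ registers respectively. This is exactly the combinatorial structure already used in Eq.~\eqref{eq:ExactExpression}, so for $\alpha=4$ we only need the 4-cycle and its inverse.

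The main construction goes as follows. First, decompose the 4-cycle on four $n_A$-qubit registers into three nearest-neighbor register-SWAPs (and similarly, the inverse 4-cycle on the $B$-registers), giving an operator $U=[\sigma_+]_A\otimes[\sigma_-]_B$ that is a product of $O(1)$ SWAPs of $n$-qubit blocks. Second, prepare an ancilla in $\ket{+}$, apply $\mathrm{ctrl}\text{-}U$ on the four copies of $\rho$, and measure the ancilla in the $X$ basis; the standard Hadamard-test calculation yields $\langle X\rangle = \operatorname{Re}\operatorname{tr}(\rho^{\otimes 4}U)=p_4(\rho)$, where the reality follows because $p_4$ is real (indeed $\tilde p_4=p_4$ as noted after Eq.~\eqref{eq:pn_neg}). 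Third, note that each controlled register-SWAP on $n$-qubit blocks decomposes into $n$ Toffoli gates acting in parallel up to constant depth, so the whole controlled-permutation uses $O(n)$ Toffoli gates, $O(n)$ circuit depth, and a single ancilla qubit, as claimed.

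Finally, the sample complexity follows from Hoeffding's inequality: each shot returns a $\pm 1$ outcome with expectation $p_4(\rho)\in[-1,1]$, so averaging $N=O(\epsilon^{-2}\log(2/\delta))$ shots gives an additive-$\epsilon$ estimator with failure probability at most $\delta$, and the postprocessing is just the sample mean, costing $O(N)$ classical time. The only step that requires care is verifying that the controlled cyclic shift can in fact be compiled into $O(n)$ Toffolis with a single ancilla --- i.e.\ that all three register-SWAPs are done in place, controlled on the same ancilla, without recursive expansions that would blow up the gate count; this is a routine but essential check and is the tightest constraint, since if instead one synthesized a generic controlled-$U$ on four $n$-qubit registers one would naively get $O(n\,\mathrm{poly}\log n)$ gates rather than the stated $O(n)$.
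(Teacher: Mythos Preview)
Your approach is essentially identical to the paper's: both express $p_4$ as $\operatorname{tr}(\rho^{\otimes 4}\,[\sigma_+]_A\otimes[\sigma_-]_B)$, decompose the two 4-cycles into $3+3$ adjacent register-SWAPs (i.e.\ $3n$ qubit-level SWAPs), run a Hadamard test with one ancilla and controlled-SWAPs, and finish with Hoeffding. The paper measures $\sigma^z$ on the ancilla after the final Hadamard, which is the same as your $X$-basis measurement.

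One internal inconsistency is worth fixing: you write that the $n$ Toffoli gates implementing a controlled register-SWAP act ``in parallel up to constant depth,'' but this contradicts the single-ancilla assumption you also invoke. All controlled-SWAPs share the \emph{same} control qubit, so the Toffolis cannot be parallelized without fanning out the ancilla (which would cost extra qubits). The $O(n)$ depth you then state is correct, but the reason is precisely that the shared control serializes the $O(n)$ Toffolis, not that there are $O(n)$ constant-depth blocks. The paper makes this explicit by noting the controlled-SWAPs are ``applied consecutively.'' Also, a controlled two-qubit SWAP (Fredkin) is not literally one Toffoli; the paper uses the SWAP$=$3\,CNOTs decomposition to get $3$ Toffolis per controlled-SWAP and hence $9n$ Toffolis total --- still $O(n)$, so your asymptotic claim stands.
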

\begin{proof}
We measure the PT moment $p_4$ with respect to complementary bipartition $A$, $B$. Then, one can write~\cite{gray2018machinelearning}
\begin{equation}
    p_4=\text{tr}(\rho^\Gamma)^4=\text{tr}(S_A^{3,4}S_A^{2,3}S_A^{1,2}\otimes S_B^{1,2}S_B^{2,3}S_B^{3,4}\rho^{\otimes 4})
\end{equation}
where 
\begin{equation}
    S_A^{i,j}=\prod_{k\in A}({I+\sigma^x_{i,k}\sigma^x_{j,k}+\sigma^y_{i,k}\sigma^y_{j,k}+\sigma^z_{i,k}\sigma^z_{j,k}})
\end{equation} 
swaps copies $i$ and $k$ on bipartition $A$ where $\sigma^\alpha_{j,k}$ is the $\alpha\in\{x,y,z\}$ Pauli operator acting on copy $j$ and qubit $k$. Note that $S_A^{i,j}$ is simply a product of individual SWAP operations on each qubit pair of the bipartition. 
Now, $p_4$ can be measured by performing the SWAP test with operator~\cite{gray2018machinelearning}
\begin{equation}\label{eq:P4}
    P_4^{T_B}=S_A^{3,4}S_A^{2,3}S_A^{1,2}\otimes S_B^{1,2}S_B^{2,3}S_B^{3,4}\,.
\end{equation} 
Here, one implements a controlled version of $P_4^{T_B}$, where the control acts on an auxiliary qubit with Hadamard gates before and after the control. As $P_4^{T_B}$ can be decomposed as a product of $3n$ two qubit SWAPs, the controlled version of $P_4^{T_B}$ can be implemented by $3n$ controlled SWAPs, which are applied consecutively on different qubit pairs of the copies. 
Then, the state of the ancilla is measured, with $p_4=\langle \sigma^z\rangle_\text{aux}$ being the expectation value of $\sigma^z$ of the ancilla. 

Now, we can bound the estimation of $\langle \sigma^z\rangle_\text{aux}$ in terms of additive error $\epsilon$, failure probability $\delta$, number of measurements $L$ and range of measurement outcomes $\Delta\lambda$ via Hoeffding's inequality
\begin{equation}\label{eq:Hoeffding2}
    \delta\equiv\text{Pr}(\vert \hat{\sigma}_z - \langle \sigma^z\rangle\vert\ge \epsilon)\le 2\exp\left(-\frac{2\epsilon^2 L}{\Delta\lambda^2}\right)\,.
\end{equation}
Now, we have $\Delta\lambda=2$ as $\sigma^z$ has eigenvalues $\pm1$. Then, by inverting, we get
\begin{equation}
    L \leq 2\epsilon^{-2}\log(2/\delta)\,.
\end{equation}
As each measurement requires $4$ copies of $\rho$, we need in total $O(\epsilon^{-2}\log(2/\delta))$ copies. The post-processing only involves counting outcomes, and thus we have $O(\alpha \epsilon^{-2}\log(2/\delta))$ post-processing time. Finally, we have to implement $3n$ controlled SWAPs, each acting with control on the ancilla and a pair of qubits, which can be implemented on $O(n)$ circuit depth. SWAP gates require $3$ CNOTs, and controlled SWAP gates are implemented using $3$ Toffoli gates. Thus, in total we require $9n$ Toffoli gates and $2$ Hadamard gates.
\end{proof}

We note that the algorithm requires coherent control over $4$ copies at the same time.
Note that the other component of $\mathcal{E}_4$, namely $p_2=-\text{tr}(\rho^2)$, can be efficiently measured via SWAP tests or Bell measurements with even lower complexity, requiring only coherent control over $2$ qubits~\cite{ekert2002direct,garcia2013swap}.

\begin{figure}[htbp]
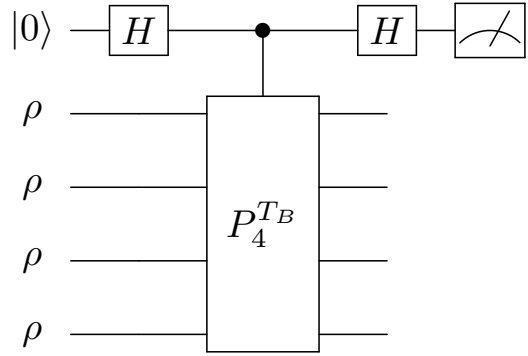

	\centering	
	\subfigimg[width=0.4\textwidth]{}{SWAPP4.pdf}
    \caption{SWAP test to measure expectation value of PT moment $P_4^{\text{T}_B}$ of~\eqref{eq:P4} for computing $4$th moment of PT negativity $\mathcal{E}_4(\rho)$ for state $\rho$. Uses $4$ copies of $\rho$ and a single ancilla. Control-$P_4^{\text{T}_B}$  can be implemented as a sequence of control-SWAP operations between two qubits, using $O(n)$ depth using $O(n)$ Toffoli gates in total.
	}
	\label{fig:SWAP}
\end{figure}

\section{Efficient algorithm to measure CCNR negativity moment}\label{sec:CCNRmeas}
We now study the measurement complexity of the CCNR negativity moment
\begin{equation}\label{eq:r4_sup}
    \mathcal{C}_4(\rho) =\frac{1}{2}[-\ln(r_4)-3 S_2(\rho)]\,,
\end{equation}
with 
\begin{equation}
r_4=\text{tr}(S_A^{1,2}S_A^{3,4}\otimes S_B^{1,3}S_B^{2,4}\rho^{\otimes 4})\,.
\end{equation}
Notably, the different SWAPs $S_A^{i,j}$ act on different qubits. Thus, in contrast to the PT moments, one can measure $r_4$ using destructive Bell measurements instead of SWAP test~\cite{garcia2013swap}.
We show the measurement setup using Bell measurements in Fig.~\ref{fig:Bell}.
\begin{figure}[htbp]
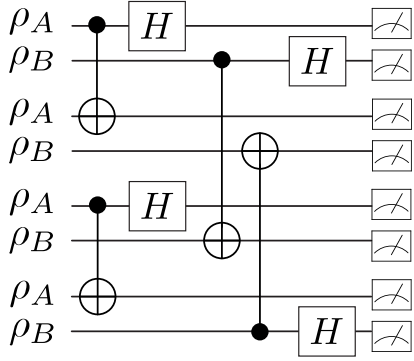

	\centering	
	\subfigimg[width=0.3\textwidth]{}{BellCCNR.pdf}
    \caption{Bell measurements to  measure expectation value of CCNR negativity moment $r_4$ for state $\rho$. Uses $4$ copies of $\rho$ and $O(1)$ depth. 
	}
	\label{fig:Bell}
\end{figure}

\begin{proposition}[Efficient CCNR negativity moment]\label{thm:CCNR_sup}
For a given (mixed) $n$-qubit state $\rho$, there exists an efficient algorithm to measure $r_4(\rho)$ to additive precision $\epsilon$ with failure probability $\delta$ using $O(\epsilon^{-2}\log(2/\delta))$ copies of $\rho$, $O(1)$ circuit depth, $O(1)$ CNOT and Hadamard gates,  and $O(N\epsilon^{-2}\log(2/\delta))$ classical post-processing time. 
\end{proposition}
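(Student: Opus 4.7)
The plan is to express $r_4$ as an expectation value of a product of SWAP operators across $4$ copies of $\rho$ and exploit the fact that, in contrast to the PT moment case, the SWAPs act on mutually disjoint copy pairs at each qubit location. This allows the observable to be measured destructively via parallel Bell measurements, eliminating the need for ancilla-controlled SWAPs and collapsing the circuit to constant depth.

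First I would write
\begin{equation}
r_4 = \operatorname{tr}\!\bigl(S_A^{1,2} S_A^{3,4} \otimes S_B^{1,3} S_B^{2,4}\, \rho^{\otimes 4}\bigr),
\end{equation}
where each $S^{i,j}$ factorizes as a tensor product of two-qubit SWAPs, one per qubit in the relevant subsystem. The crucial combinatorial observation is that at any qubit location $a\in A$ the two acting copy pairs are $\{1,2\}$ and $\{3,4\}$, and at any $b\in B$ they are $\{1,3\}$ and $\{2,4\}$; in both cases the two pairs partition the four copies. Hence on each physical qubit site the four copies are grouped into two disjoint pairs that are simultaneously measurable.

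Next I would use the standard fact that the two-qubit SWAP has eigenvalue $-1$ on the singlet $|\Psi^-\rangle$ and $+1$ on the three triplet Bell states. Performing a Bell-basis measurement on each of the prescribed copy pairs and assigning $\pm 1$ to the outcome yields, after multiplying all $2n$ local signs, an unbiased single-shot estimator $\hat{X}\in\{-1,+1\}$ with $\mathbb{E}[\hat{X}]=r_4$. Because the pairings are disjoint, all Bell measurements can be executed in parallel by a single layer of CNOT gates followed by Hadamards on the control qubits and a computational-basis readout, giving constant circuit depth (interpreting the "$O(1)$ CNOT and Hadamard gates'' of the statement as depth, with $O(n)$ total gates per shot).

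Finally, since $|\hat{X}|\leq 1$, Hoeffding's inequality gives
\begin{equation}
\Pr\!\bigl(|\hat{r}_4-r_4|\geq\epsilon\bigr)\leq 2\exp(-L\epsilon^{2}/2),
\end{equation}
so $L=O(\epsilon^{-2}\log(2/\delta))$ shots suffice, consuming $4L$ copies of $\rho$ in total. The classical post-processing is simply the product of $O(n)$ signs per shot followed by averaging, yielding $O(n\epsilon^{-2}\log(2/\delta))$ time, matching the statement. I do not anticipate a genuine obstacle here; the one point requiring care is verifying that the chosen Bell pairings on each site exactly reproduce the permutation structure appearing in $S_A^{1,2}S_A^{3,4}\otimes S_B^{1,3}S_B^{2,4}$, which is a direct combinatorial check.
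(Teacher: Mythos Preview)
Your proposal is correct and mirrors the paper's own argument essentially step for step: express $r_4$ as the expectation of $S_A^{1,2}S_A^{3,4}\otimes S_B^{1,3}S_B^{2,4}$ on $\rho^{\otimes 4}$, observe that the SWAPs act on disjoint copy pairs so parallel Bell measurements suffice, assign $\pm1$ per pair (the paper phrases this as $-1$ for outcome $\ket{11}$ after CNOT+Hadamard and $+1$ otherwise), and bound the sample complexity via Hoeffding. Your added combinatorial check that the pairings partition the four copies at every site, and your reading of ``$O(1)$ gates'' as depth rather than total count, are both sound clarifications of points the paper leaves implicit.
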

\begin{proof}
This follows directly from the fact that $r_4=\text{tr}(S_A^{1,2}S_A^{3,4}\otimes S_B^{1,3}S_B^{2,4}\rho^{\otimes 4})$ can be implemented by a tensor product of SWAPs $S_A^{1,2}$, each acting on a different set of qubits. It is known that the Bell measurements correspond to measurements in the eigenbasis of the SWAP operator. Thus, from Bell measurements as shown in Fig.~\ref{fig:Bell} and classical post-processing one can compute $r_4$. 
In particular, for every qubit pair where one applies the CNOT and Hadamard gate, assign $-1$ to outcome $\ket{11}$, and $+1$ else. Then, multiply all mapped $\pm1$ together, which is the estimator of the expectation value. As each outcome has range $\pm1$ and thus $\Delta\lambda=2$, the dependence on $\epsilon$ and $\delta$ follows from Hoeffding's inequality similar to Eq.~\eqref{eq:Hoeffding2}.
\end{proof}

\section{Upper bounds for the PT negativity and the CCNR negativity} \label{sec:upper_bounds_neg}
Here, we prove upper bounds for the PT negativity and the CCNR negativity. First, we show an inequality that holds for the $p_\alpha$-negativity:
\begin{equation} \label{eq:bound_e0}
    \E_\alpha(\rho) \leq \frac{1}{2}(n \ln 2 - S_2(\rho)),
\end{equation}
for any $\alpha\geq 0$. This can be directly shown using the monotonicity of the $p_\alpha$-negativity $\E_0(\rho)\geq\E_\alpha(\rho)$. An upper bound for the PT negativity is thus obtained by setting $\alpha=1$. 

Next, we have
\begin{equation} \label{eq:bound_dim}
    \E(\rho) \leq m \ln 2,
\end{equation}
where $m=\min(n_A,n_B)$. To see this, we will show that 
\begin{equation} \label{eq:bound_s1/2}
    \E(\rho) \leq \min(S_{1/2}(\rho_A), S_{1/2}(\rho_B)),
\end{equation}
which immediately leads to Eq.~\eqref{eq:bound_dim} since $S_{1/2}(\rho_{A(B)})\leq n_{A(B)} \ln 2$. We define
\begin{equation} \label{eq:purification}
    \ket{\sqrt{\rho}} = \sum_i (\rho^{1/2} \ket{i}) \otimes \ket{i},
\end{equation}
which is a purification of $\rho$ in $\mathcal{H}_A\otimes\mathcal{H}_B\otimes\mathcal{H}_{A'}\otimes\mathcal{H}_{B'}\simeq\mathbb{C}^{2^n} \otimes \mathbb{C}^{2^n} $. It is also known as the canonical purification~\cite{dutta2021canonical}; however, the particular purification does not matter in this context. The PT negativity of the subsystem $A$ in $\ket{\sqrt{\rho}}$ is given by $S_{1/2}(\rho_A)$ (since $\ket{\sqrt{\rho}}$ is a pure state) , while the original PT negativity $\E(\rho)$ is obtained by tracing out $A'B'$ in $\ket{\sqrt{\rho}}$. Due to the monotonicity of PT negativity with respect to partial tracing~\cite{plenio2005negativity}, we obtain $\E(\rho) \leq S_{1/2}(\rho_A)$.  By exchanging $A$ and $B$, we also obtain $\E(\rho) \leq S_{1/2}(\rho_B)$, thus proving Eq.~\eqref{eq:bound_s1/2}.

While the upper bound in Eq.~\eqref{eq:bound_dim} does not apply for $\E_\alpha$ with $\alpha<1$, we can give an upper bound for $p_0$ as
\begin{equation} \label{eq:bound_p0}
    \ln p_0 \leq 2m\ln 2 + \ln r,
\end{equation}
where $r=\text{rank}(\rho)$. For $r=1$, i.e. $\rho$ is a pure state, this can easily be shown using the Schmidt decomposition. To show the inequality for general $r$, note that $\rho$ can be purified using an ancilla with physical dimension $r$. We denote a purification of $\rho$ as $\ket{\psi_\rho}\in \mathbb{C}^{2^n} \otimes \mathbb{C}^{r}$. We can represent $\ket{\psi_\rho}$ as an MPS, where the ancilla, that we now denote as subsystem $C$, is placed in the middle of the chain, separating the subsystems $A$ and $B$, which are placed to the left and right of $C$, respectively. In this case, the PT spectrum is stored in the PT of the transfer matrix of $C$, denoted by $E_C^\Gamma$ (see Appendix~\ref{sec:mps_pt_left_right}). One can see that the maximal size of this matrix is $(2^{S_0(\rho_A)} \times 2^{S_0(\rho_B)} ) \times ( 2^{S_0(\rho_A)} \times  2^{S_0(\rho_B)})$, so that the maximal rank of $\rho^\Gamma$ is
\begin{equation} \label{eq:bound_p0_2}
    \ln p_0 \leq S_0(\rho_A) + S_0(\rho_B).
\end{equation}
Finally, Eq.~\eqref{eq:bound_p0} follows by noticing that $S_0(\rho_B)\leq S_0(\rho_A)+\ln r$ and vice versa.

It is worth noting that Eq.~\eqref{eq:bound_p0_2} also implies
\begin{equation}
    \E_\alpha(\rho) \leq \frac{1}{2}(S_0(\rho_A) + S_0(\rho_B) - S_2(\rho))
\end{equation}
by monotonicity of $p_\alpha$-negativity. The term on the right hand side has a form that resembles the mutual information $I_\alpha(\rho_{AB})=S_\alpha(\rho_A)+S_\alpha(\rho_B)-S_\alpha(\rho_{AB})$, although with different R\'enyi indices for $S_\alpha(\rho_{A(B)})$ and $S_\alpha(\rho_{AB})$. Note that the mutual information is not a measure of mixed-state entanglement, as it is sensitive to classical correlations. If the entanglement spectrum is flat (such that all R\'enyi entropies are identical), this gives a direct relation between the PT negativity and the mutual information as $\E(\rho)\leq \frac{1}{2}I(\rho)$. The latter inequality has been shown, e.g., for stabilizer states~\cite{sang2021entanglement}.

For the $r_\alpha$-negativity, we have
\begin{equation} \label{eq:bound_e0_sup}
    \mathcal{C}_\alpha(\rho) \leq m \ln 2 - \frac{1}{2}S_2(\rho),
\end{equation}
for any $\alpha\geq 0$, and we recall $m=\min(n_A,n_B)$. This can again be shown using the monotonicity of the $r_\alpha$-negativity $\mathcal{C}_0(\rho)\geq\mathcal{C}_\alpha(\rho)$. An upper bound for the CCNR negativity is obtained by setting $\alpha=1$. 

\section{Upper bound on $D_\text{F}$} \label{sec:proof_df}
Here, we prove the bound on the \gEnt{} $D_\text{F}$ as
\begin{equation} \label{eq:bound_df_sup}
    S_\alpha(\rho_{A(B)}) \geq D_\text{F}(\rho)\,,
\end{equation}
for any $\alpha\geq0$.

First, by the Uhlman's theorem~\cite{nielsen2011quantum}, we can write
\begin{equation} 
    D_\text{F}(\rho)=\min_{\sigma\in \text{SEP}} \min_{\ket{\psi_\sigma}} -\ln \vert \braket{\sqrt{\rho}}{\psi_\sigma} \vert^2
\end{equation}
where $\ket{\sqrt{\rho}}$ is the canonical purification of $\rho$ defined in Eq.~\eqref{eq:purification} and $\ket{\psi_\sigma}$ is a purification of $\sigma$ in $\mathcal{H}_A\otimes\mathcal{H}_B\otimes\mathcal{H}_{A'}\otimes\mathcal{H}_{B'}\simeq\mathbb{C}^{2^n} \otimes \mathbb{C}^{2^n} $. Consider the Schmidt decomposition for the state $\ket{\sqrt{\rho}}$ in the subsystem $A$:
\begin{eqnarray}
    \ket{\sqrt{\rho}} = \sum_i \sqrt{\lambda_i} \ket{\phi_i}_A \otimes \ket{\phi_i}_{A'BB'},
\end{eqnarray}
where $\lambda_i$ are the eigenvalues of $\rho_A=\Tr_B(\rho)$, where $\lambda_1$ is the largest eigenvalue. The state $ \Tr_{A'B'}(\ket{\phi_1}_A \otimes \ket{\phi_1}_{A'BB'} \bra{\phi_1}_A \otimes \bra{\phi_1}_{A'BB'}) = \ket{\phi_1}_A \bra{\phi_1}_A \otimes \Tr_{A'B'}(\ket{\phi_1}_{A'BB'} \bra{\phi_1}_{A'BB'})$ is a separable state in $\mathcal{H}_A\otimes\mathcal{H}_B$. Thus, we have
\begin{equation} 
    D_\text{F}(\rho)\leq -\ln \vert \braket{\sqrt{\rho}}{\ket{\phi_1}_A \otimes \ket{\phi_1}_{A'BB'}} \vert^2 = -\ln \lambda_1.
\end{equation}
Note that $-\ln\lambda_1 = \lim_{\alpha\to\infty}S_{\alpha}(\rho_A)$. Thus, by the hierarchy of R\'enyi entropy, we have
\begin{equation}
S_\alpha(\rho_A) \geq D_\text{F}(\rho),
\end{equation}
for any $\alpha\geq0$. By exchanging $A$ and $B$, we obtain a similar inequality with $S_\alpha(\rho_B)$. 

\section{Relationship between PT negativity, CCNR negativity, and robustness of entanglement} \label{sec:neg_rob}
Here, we prove the inequality between the PT negativity and robustness of entanglement in Eq.~\eqref{eq:neg_rob}. Note that this has previously been proven in Ref.~\cite{vidal2002computable}, and here we give an alternative proof. Let $s$ be the optimal value such that $\rho=(1+s)\rho_+ -s\rho_-$ for $\rho_+,\rho_- \in \text{SEP}$, so that $\mathcal{R}(\rho)=s$. We have $\rho^\Gamma=(1+s)\rho^\Gamma_+ -s\rho^\Gamma_-$, with $\rho^\Gamma_+,\rho^\Gamma_- \in \text{SEP}$, yielding
\begin{equation}
    \lVert \rho^\Gamma \rVert_1 \leq (1+s)\lVert \rho^\Gamma_+ \rVert_1 + s \lVert \rho^\Gamma_- \rVert_1 = 2s+1.
\end{equation}
In the first inequality, we used the triangle inequality on the trace norm $\lVert \cdot \rVert_1$, and the second equation is due to $\lVert \rho_s \rVert_1=1$ for $\rho_s \in \text{SEP}$. Taking the logarithm, we obtain Eq.~\eqref{eq:neg_rob}.

Similarly, for the CCNR negativity, we have 
\begin{equation}
    \lVert R_\rho\rVert_1 \leq (1+s)\lVert R_{\rho_+} \rVert_1 + s \lVert R_{\rho_-} \rVert_1 \leq 2s+1,
\end{equation}
where we used that $\lVert R_{\rho} \rVert_1\leq1$ for separable states.

\section{Depth-dependent bound on the PT negativity and the CCNR negativity} \label{sec:depth_bound}
Here, we prove the depth-dependent bound in Eq.~\eqref{eq:depth_bound}. We first define for a pure state $\ket{\psi}$ an entanglement monotone $E_0(\ket{\psi})=e^{S_0(\ket{\psi})}$, i.e it counts the number of nonzero Schmidt coefficients. Let $\Tilde{E}_0(\rho)$ be the convex roof extension, namely
\begin{eqnarray}
    \Tilde{E}_0(\rho) = \min \{\sum_i p_i E_0(\ket{\psi_i}) \},
\end{eqnarray}
where the minimum is taken over all possible pure-state decompositions $\rho=\sum_i p_i \ket{\psi_i}\bra{\psi_i}$. Since $E_0(\ket{\psi})$ satisfies strong monotonicity for pure states~\cite{vidal2000entanglement}, $\Tilde{E}_0(\rho)$ is automatically a strong monotone for mixed states. Since $S_0(\ket{\psi})\leq d\vert \partial A\vert \ln{2}$ in noiseless quantum circuits~\cite{hangleiter2023bell}, it follows that we have the bound
\begin{equation} \label{eq:a}
    \Tilde{E}_0(\rho) \leq 2^{d\vert \partial A\vert},
\end{equation}
in the case of noisy circuits.
Now, let $\rho=\sum_i p_i \ket{\psi_i}\bra{\psi_i}$ be the optimal decomposition such that $\Tilde{E}_0(\rho)=\sum_i p_i E_0(\ket{\psi_i})$. We have
\begin{equation} \label{eq:b}
\begin{split}
    \lVert \rho^\Gamma \rVert_1 &\leq \sum_i p_i \lVert (\ket{\psi_i}\bra{\psi_i})^\Gamma \rVert_1\\
    &= \sum_i p_i e^{S_{1/2}(\ket{\psi_i})} \\
    &\leq \sum_i p_i e^{S_{0}(\ket{\psi_i})} \\
    &= \Tilde{E}_0(\rho).
\end{split}
\end{equation}
Here, we used the triangle inequality in the first line. In the second line, we used that $\E(\rho)=S_{1/2}(\ket{\psi})$ for pure states $\rho=\ket{\psi}\bra{\psi}$.  In the third line, we used the monotonicity of entanglement R\'enyi entropy. Finally, combining Eq.~\eqref{eq:a} and Eq.~\eqref{eq:b}, we obtain
\begin{equation} \label{eq:depth_bound_sup}
    \mathcal{E}(\rho)\leq  d\vert \partial A\vert\ln 2\,.
\end{equation}
It is easy to see that the argument above immediately extends to the CCNR negativity, thus we also have
\begin{equation} \label{eq:depth_bound_sup_ccnr}
    \mathcal{C}(\rho)\leq  d\vert \partial A\vert\ln 2\,.
\end{equation}

\section{Certification of circuit depth}\label{sec:depthcert}
Here, we give bounds on the number of measurements needed to certify the circuit depth $d$ for preparing mixed states:
\begin{proposition}[Efficient certification of circuit depth]
    Given $n$-qubit mixed state $\rho$, there exists an efficient quantum algorithm to certify the minimal circuit depth $d$ needed to generate the state using a fixed architecture where $\vert\partial A\vert$ noisy two-qubit gates act across an equal-sized bipartition. In particular, for $c>0$ and $L=n^{2c+1}$ SWAP tests to measure the estimators $\hat{p}_4(\rho)$ and $\hat{p}_2(\rho)$, we have
    \begin{equation}
        d\geq \frac{1}{2\vert\partial A\vert\ln2}[-\ln(\hat{p}_4(\rho)+n^{-c})+3\ln(\hat{p}_2(\rho)-n^{-c})]
    \end{equation}
    with exponentially small failure probability.
\end{proposition}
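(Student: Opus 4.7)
The plan is to combine three ingredients already established in the paper: (i) the depth-dependent upper bound $\E(\rho) \leq d|\partial A|\ln 2$ of Eq.~\eqref{eq:depth_bound_sup}, (ii) the quantitative-witness inequality $\E_4(\rho) \leq \E(\rho)$ of Eq.~\eqref{eq:bound_e}, and (iii) the efficient SWAP-test estimators of $p_2$ and $p_4$ from Fact~\ref{thm:entanglement_sup} (together with the analogous, simpler estimator of the purity). Chaining (i) and (ii) gives the deterministic inequality
\[ d \;\geq\; \frac{\E_4(\rho)}{|\partial A|\ln 2} \;=\; \frac{1}{2|\partial A|\ln 2}\bigl[-\ln p_4(\rho) + 3\ln p_2(\rho)\bigr]\,, \]
so the whole task reduces to turning this into a high-probability statement in terms of the empirical estimators $\hat{p}_2$ and $\hat{p}_4$.

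For this I would apply Hoeffding's inequality as in Eq.~\eqref{eq:Hoeffding2}. Each estimator takes values in $[-1,1]$, so with $L = n^{2c+1}$ samples the deviation $|\hat{p}_\alpha - p_\alpha| \leq n^{-c}$ holds with probability at least $1 - 2\exp(-n/2)$ for each $\alpha \in \{2,4\}$; a union bound over the two estimators keeps the joint failure probability exponentially small in $n$. On this good event we have $p_4 \leq \hat{p}_4 + n^{-c}$ and $p_2 \geq \hat{p}_2 - n^{-c}$, so monotonicity of $\ln$ directly gives
\[ \E_4(\rho) \;\geq\; \frac{1}{2}\bigl[-\ln(\hat{p}_4 + n^{-c}) + 3\ln(\hat{p}_2 - n^{-c})\bigr]\,, \]
and dividing by $|\partial A|\ln 2$ yields the claimed certificate.

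The one subtle point, and the main obstacle worth spelling out carefully, is error propagation through the logarithm when $p_2$ is small: the bound involving $\ln(\hat{p}_2 - n^{-c})$ is only meaningful when $\hat{p}_2 > n^{-c}$, which requires the additive precision $n^{-c}$ to lie well below the true value of $p_2$. This is precisely the regime where the certification is intended to be non-trivial, namely when the underlying state satisfies $S_2(\rho) = O(\log n)$ so that $p_2 = 1/\mathrm{poly}(n)$ and one may simply pick $c$ large enough. For more strongly mixed states the bound degrades to a vacuous statement, which is consistent with the known inefficiency of entanglement testing in the $S_2 = \omega(\log n)$ regime highlighted in Tab.~\ref{tab:testing}. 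Aside from this bookkeeping, no additional machinery is needed: the measurement primitives, the depth bound, and the $\E_4$-bound have all been proved earlier in the paper, and the total sample complexity $2L = O(n^{2c+1})$ is manifestly polynomial.
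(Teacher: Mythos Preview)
Your proposal is correct and follows essentially the same approach as the paper: chain the depth bound $\E(\rho)\leq d|\partial A|\ln 2$ with $\E_4\leq \E$, then apply Hoeffding's inequality to the SWAP-test estimators with $L=n^{2c+1}$ and $\epsilon=n^{-c}$ to replace $p_2,p_4$ by their empirical versions. The only cosmetic difference is that the paper uses one-sided Hoeffding bounds and multiplies the success probabilities, whereas you use two-sided bounds and a union bound; both yield exponentially small failure probability, and your discussion of the $\hat p_2>n^{-c}$ caveat matches the paper's remark that the certificate is only non-trivial for $S_2=O(\log n)$.
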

\begin{proof}
    We have shown that the depth is bounded $d \geq \frac{1}{\vert \partial A\vert\ln 2}\mathcal{E}_4(\rho)$ in Sec.~\ref{sec:depth_bound}. Now, it remains to bound the estimation error for $\mathcal{E}_4(\rho)=\frac{1}{2}[-\ln(p_4(\rho))+3\ln(p_2(\rho))]$. Both $p_4$ and $p_2$ can be measured efficiently using SWAP tests, where one has additive precision $\epsilon$ when using $O(\log(\epsilon^{-2})$ samples.
    Now, we bound the number of SWAP tests $L$ needed to estimate $\mathcal{E}_4(\rho)$ and thus $d$.

    Now, let us assume we have estimator $\hat{p}_4$ after $L$ SWAP tests. Then, from Hoeffding's inequality, we have
    \begin{equation}
    \delta\equiv\text{Pr}(\vert \hat{p}_4 - p_4\vert\ge \epsilon)\le 2\exp\left(-\frac{\epsilon^2 L}{2}\right)\,,
\end{equation}
which implies
    \begin{equation}
    \frac{\delta}{2}\equiv\text{Pr}( \hat{p}_4  \le  p_4-\epsilon)\le \exp\left(-\frac{\epsilon^2 L}{2}\right)\,,
\end{equation}
where $\delta$ is the failure probability, i.e.  estimator $\hat{p}_4$ deviating by more than $\epsilon>0$ from the mean $p_4$.
Now, we assume we stay within $\epsilon$, i.e. $\hat{p}_4\geq p_4-\epsilon$, which occurs with probability $1-\delta/2$. Then we have 
\begin{equation}
    -\ln(\hat{p}_4+\epsilon)\leq -\ln(p_4)
\end{equation}
where this bound is fulfilled with probability $1-\delta/2$.
Similarly,  assuming $\hat{p}_2>\epsilon$, we can bound $p_2$ via
\begin{equation}
\ln(\hat{p}_2-\epsilon)\leq \ln(p_2)\,,
\end{equation}
 with the same probability $1-\delta/2$.
Putting both bounds together and inserting into $d \geq \frac{1}{\vert \partial A\vert}\mathcal{E}_4(\rho)$, we get
\begin{equation}
    d\geq \frac{1}{2\vert\partial A\vert \ln 2}[-\ln(\hat{p}_4+\epsilon)+3\ln(\hat{p}_2-\epsilon)]\,,
\end{equation}
where this bound is fulfilled with failure probability $\delta-\delta^2/4$, where $\delta=2\exp(-\frac{\epsilon^2 L}{2})$. 
Now, we choose $L=n^{2c+1}$, $\epsilon=n^{-c}$ and $c>0$, which concludes our proof.
\end{proof}

For $S_2=O(\log n)$, we can efficiently measure non-trivial bounds on $d$. When we have $\mathcal{E}_4=\omega(\log n)$, the algorithm will correctly determine $d=\omega(\log n)$. 

Note that when we have $S_2=\omega(\log n)$ and a polynomial number of measurements $L=\text{poly}(n)$, we only get a trivial bound on the circuit depth, i.e. $d\geq -\omega(\log n)$.%

\section{Entanglement of pseudorandom density matrices and pseudoentanglement}\label{sec:prdm}
In this section, we show that pseudorandom density matrices (PRDMs) with $S_2=O(\log n)$ must have $\E(\rho)=\omega(\log n)$.

PRDMs are efficiently preparable states that are indistinguishable for any efficient observer from truly random mixed states~\cite{bansal2024pseudorandomdensitymatrices}. 
They generalize pseudorandom states (PRSs), which are computationally indistinguishable from Haar random states~\cite{ji2018pseudorandom}. 
PRDMs are indistinguishable from random mixed states. Formally, these random mixed states are sampled from the generalized Hilbert-Schmidt ensemble (GHSE)~\cite{braunstein1996geometry,hall1998random,Zyczkowski_2001,bansal2024pseudorandomdensitymatrices}
\begin{equation}\label{eq:GHSE}
    \eta_{n,m}=\{\operatorname{tr}_m(\ket{\psi}\bra{\psi})\}_{\ket{\psi}\in\mu_{n+m}}\,,
\end{equation}
which are states constructed by taking $n+m$ qubit random states from the Haar measure $\mu_{n+m}$ and tracing out $m$ qubits.

PRDMs are now defined as follows:
\begin{definition}[Pseudo-random density matrix (PRDM)~\cite{bansal2024pseudorandomdensitymatrices}]\label{def:PRDM}
    Let $\kappa=\operatorname{poly}(n)$ be the security parameter with keyspace $\mathcal{K}=\{0,1\}^{\kappa}$. A family of $n$-qubit density matrices $\{\rho_{k,m}\}_{k \in \mathcal{K}}$ are pseudorandom density matrices (PRDMs) with mixedness parameter $m$ if:
    \begin{enumerate}
        \item {Efficiently preparable}: There exists an efficient quantum algorithm $\mathcal{G}$ such that $\mathcal{G}(1^{\kappa}, k,m) = \rho_{k,m}$.
        \item {Computational indistinguishability}: $t=\mathrm{poly}(n)$ copies of $\rho_{k,m}$ are computationally indistinguishable (for any quantum polynomial time adversary $\mathcal{A}$) from the GHSE $\eta_{n,m}$
        \begin{equation}
            \Big{|}\Pr_{k \leftarrow \mathcal{K}}[\mathcal{A}(\rho_{k,m}^{\otimes t}) = 1] - \Pr_{\rho \leftarrow \eta_{n,m}}[\mathcal{A}(\rho^{\otimes t}) = 1]\Big{|} = \operatorname{negl}(n).
        \end{equation}
    \end{enumerate}
\end{definition}
For $m=0$, one recovers PRS~\cite{ji2018pseudorandom}, while for $m=\omega(\log n)$, PRDMs are computationally indistinguishable from the maximally mixed state~\cite{bansal2024pseudorandomdensitymatrices,haug2025pseudorandom}.
However, the case $m=O(\log n)$ has not been understood well so far.

\begin{proposition}[Lower bound on entanglement of PRDM]
    Any ensemble of PRDMs  with entropy $S_2=O(\log n)$ must have PT negativity $\E(\rho)=\omega(\log n)$ with high probability.
\end{proposition}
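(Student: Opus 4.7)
The plan is to argue by contradiction: suppose there exists an ensemble of PRDMs $\{\rho_{k,m}\}$ with $S_2(\rho_{k,m})=O(\log n)$ (so the mixedness parameter $m=O(\log n)$) but whose PT negativity satisfies $\E(\rho_{k,m})=O(\log n)$ with non-negligible probability over $k$. I will construct an efficient distinguisher between $\rho_{k,m}$ and a sample from the generalized Hilbert-Schmidt ensemble $\eta_{n,m}$, contradicting Definition~\ref{def:PRDM}.

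First I would translate the hypothesis on $\E$ into a lower bound on $p_4$. By the quantitative witness property $\E_4\le\E$ proved in Eq.~\eqref{eq:bound_e} together with the definition in Eq.~\eqref{eq:e4}, namely $\E_4=\tfrac{1}{2}[-\ln p_4-3S_2]$, the assumption $\E(\rho_{k,m})=O(\log n)$ combined with $S_2=O(\log n)$ yields
\begin{equation}
    -\ln p_4(\rho_{k,m}) \;=\; 2\E_4(\rho_{k,m})+3S_2(\rho_{k,m}) \;=\; O(\log n),
\end{equation}
so $p_4(\rho_{k,m})\ge 1/\operatorname{poly}(n)$. On the other hand, for states drawn from $\eta_{n,m}$ (i.e.\ Haar-random pure states on $n+m$ qubits with $m$ qubits traced out), I would invoke the moment calculation used in Sec.~\ref{sec:haar} for an equal bipartition of the $n$ visible qubits: with $n_A=n_B=n/2$ and $n_C=m=O(\log n)$ the system lies in the ES phase, giving $\mathbb{E}_{\text{H}}[p_4]\simeq L_{AB}^{-1}L_C^{-2}=2^{-n-2m}$. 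A Markov-type concentration argument (or the fact that $p_4$ is a bounded low-degree polynomial in the Haar state, allowing Levy's lemma) then shows $p_4=2^{-\Theta(n)}$ with overwhelming probability for the GHSE sample.

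The distinguisher is then immediate from Fact~\ref{thm:entanglement_sup}: run the efficient SWAP-test algorithm for $p_4$ on $\operatorname{poly}(n)$ copies of the unknown state to additive precision $\epsilon=1/\operatorname{poly}(n)$ with inverse-polynomial failure probability, and output ``PRDM'' iff the estimator exceeds a suitable threshold $1/\operatorname{poly}(n)$. Under the hypothesis the estimator concentrates above the threshold, while on $\eta_{n,m}$ it concentrates at $2^{-\Theta(n)}\ll\epsilon$; the two cases are separated by a polynomial margin and the distinguishing advantage is non-negligible, contradicting computational indistinguishability. Hence $\E(\rho_{k,m})=\omega(\log n)$ must hold with high probability.

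The main obstacle I anticipate is the concentration step for Haar-induced mixed states: the expectation bound $\mathbb{E}_{\text{H}}[p_4]=2^{-\Theta(n)}$ only controls the mean, and one must rule out rare but heavy upper tails in which $p_4$ for a Haar sample would accidentally exceed the threshold. I would handle this either by bounding $\mathbb{E}_{\text{H}}[p_4^2]$ and applying Chebyshev, or more cleanly by observing that $p_4=\operatorname{tr}(P_4^{T_B}\rho^{\otimes 4})$ is a polynomial of constant degree in the entries of the purifying $(n+m)$-qubit Haar state, so Levy's lemma on the unit sphere yields exponentially small tail probability for deviations of size $1/\operatorname{poly}(n)$ around the mean. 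The remaining pieces, namely the witness inequality, the PT-moment formula for the ES phase, and the efficient $p_4$ estimator, are all already established in earlier sections and appendices.
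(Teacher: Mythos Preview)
Your proposal is correct and follows essentially the same route as the paper: argue by contradiction, convert $\E=O(\log n)$ together with $S_2=O(\log n)$ into $p_4\ge 1/\operatorname{poly}(n)$ via $\E_4\le\E$, contrast this with $p_4=2^{-\Theta(n)}$ for the GHSE, and invoke the efficient $p_4$ estimator (Fact~\ref{thm:entanglement_sup}) to violate indistinguishability. The paper's proof is in fact considerably briefer than yours; it simply asserts that GHSE have $-\ln p_4=\Theta(n)$ without separately addressing the concentration step you flag, so your discussion of Levy/Chebyshev is additional rigor rather than a deviation in strategy.
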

\begin{proof}
    We prove by contradiction: Assume there exists PRDMs with $\E(\rho)=O(\log n)$. Since $\E_4(\rho)\leq\E(\rho)$, we also have that $\E_4(\rho)=O(\log n)$. If $S_2=O(\log n)$, then $-\ln p_4=O(\log n)$. On the other hand, GHSE have $-\ln p_4=\Theta(n)$ (see Sec.~\ref{sec:haar}). Therefore, the algorithm to measure $p_4$ in Fact.~\ref{thm:entanglement_sup} can distinguish the two ensembles with polynomial number of copies, thus contradicting the definition of PRDM. Therefore, PRDMs must have $\E(\rho)=\omega(\log n)$ whenever $S_2=O(\log n)$.
\end{proof}

Finally, let us formally define pseudoentanglement. Note that our definition explicitly allows mixed states, in contrast to Ref.~\cite{bouland2022quantum} which used entanglement entropy as measure of entanglement and thus was only well defined for pure states. We extend to mixed states by using for entanglement $\mathcal{E}(\rho)$ 
and \gEnt{}:
\begin{definition}[Pseudoentanglement]
A \emph{pseudoentangled state ensemble} with gap $f(n)$ vs $g(n)$ consists of two ensembles of $n$-qubit states $\rho_k$ and $\sigma_k$, indexed by a key $k\in\{0,1\}^{\mathrm{poly}(n)}$ with the following properties:
\begin{enumerate}
\item \emph{Efficient Preparation}: Given $k$, $\rho_k$ (or $\sigma_k$, respectively) is efficiently preparable by a uniform, poly-sized quantum circuit.

\item \emph{Pseudoentanglement}: With probability $\geq 1 - 1/\mathrm{poly}(n)$ over the choice of $k$, the PT negativity $\mathcal{E}(\rho)$ 
and \gEnt{} $D_\mathrm{F}(\rho)$
for $\rho_k$ (or $\sigma_k$, respectively) is $\Theta(f(n))$ (or $\Theta(g(n))$, respectively).

\item \emph{Indistinguishability}: For any polynomial $p(n)$, no poly-time quantum algorithm can distinguish between the ensembles of $\mathrm{poly}(n)$ copies 
with more than negligible probability. That is, for any poly-time quantum algorithm $A$, we have that
\[\left| \Pr_{k} [A(\rho_k^{\otimes \mathrm{poly}(n)}) = 1] - \Pr_{k} [A(\sigma_k^{\otimes \mathrm{poly}(n)}) = 1] \right| = \operatorname{negl}(n)\,.\]
\end{enumerate}
\end{definition}

\section{Schmidt rank testing}\label{sec:schmidtrank}
Here, we present an efficient algorithm to test the Schmidt rank of a given pure state. Consider the Schmidt decomposition of a pure $n$-qubit state $\ket{\psi}$ with respect to bipartition $A$, $B$
\begin{equation} \label{eq:schmidt_decomp}
    \ket{\psi} = \sum_{i=1}^R \sqrt{\lambda_i} \ket{\phi_i}_A \otimes \ket{\phi_i}_{B},
\end{equation}
where $ \ket{\phi_i}_A$ and  $\ket{\phi_i}_B$ are normalized states on subsystem $A$ and $B$, $\lambda_i\geq0$ are the Schmidt coefficients with $\lambda_1 \geq \cdots \geq \lambda_R$, $\sum_i \lambda_i=1$ and Schmidt rank $R\leq 2^n$. The maximum overlap of $\ket{\psi}$ with the set of states $M_r=\{\ket{\eta} : \ket{\eta} = \sum_{i=1}^r \sqrt{\lambda_i} \ket{\phi_i}_A \otimes \ket{\phi_i}_{B}\}$ of at most Schmidt rank $r$ is given by the Eckart-Young theorem:~\cite{eckart1936approximation}
\begin{equation}
\mathcal{F}_r(\ket{\psi})=\max_{\ket{\eta}\in M_r}\vert \braket{\eta}{\psi}\vert^2=\sum_{i=1}^r \lambda_i\,.
\end{equation}
Next, we prove the following inequality
\begin{equation} \label{eq:bounds_Fr}
    \text{tr}(\rho_A^2) \leq  \mathcal{F}_r(\ket{\psi}) \leq \sqrt{r \text{tr}(\rho_A^2) }.
\end{equation}
The lower bound follows from $\text{tr}(\rho_A^2) \leq  \mathcal{F}_1(\ket{\psi}) \leq \mathcal{F}_r(\ket{\psi})$. For the upper bound, we use the Jensen's inequality $(\sum_{i=1}^r \lambda_i)^2/r \leq  \sum_{i=1}^r \lambda_i^2 \leq \sum_{i=1}^R \lambda_i^2$. With this bound, we can now provide an efficient algorithm to test the Schmidt rank by measuring the purity $\text{tr}(\rho_A^2)$.

\begin{theorem}[Efficient testing of Schmidt rank] \label{thm:schmidt_rank_test_sup}
Let $\ket{\psi}$ be an $n$-qubit state where it is promised that 
\begin{align*}
\mathrm{either}\quad (a)& \,\,\mathcal{F}_r(\ket{\psi})\geq \epsilon_1 \,,\\
\mathrm{or}\quad (b)& \,\, \mathcal{F}_r(\ket{\psi})\leq \epsilon_2\,,
\end{align*} 
where it is assumed that $\epsilon_1^2/r> \epsilon_2+2\epsilon$.
Assuming $\epsilon=1/\mathrm{poly}(n)$, there exists an efficient quantum algorithm to distinguish case ($a$) and ($b$) using $O(r^2)$ two-copy measurements of $\ket{\psi}$ with high probability. 
\end{theorem}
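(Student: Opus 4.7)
The plan is to reduce the Schmidt-rank testing problem entirely to a purity estimation task on the reduced state $\rho_A = \operatorname{tr}_B(\ket{\psi}\bra{\psi})$. The starting point is the sandwich inequality stated just before the theorem,
\begin{equation}
\operatorname{tr}(\rho_A^2) \;\leq\; \mathcal{F}_r(\ket{\psi}) \;\leq\; \sqrt{r\operatorname{tr}(\rho_A^2)}\,,
\end{equation}
which follows from Eckart--Young (giving $\mathcal{F}_r = \sum_{i=1}^r \lambda_i$) together with Jensen's inequality applied to the Schmidt coefficients. This immediately translates the promise on $\mathcal{F}_r$ into a promise on the purity: in case $(a)$ the upper half of the sandwich yields $\operatorname{tr}(\rho_A^2) \geq \epsilon_1^2/r$, whereas in case $(b)$ the lower half yields $\operatorname{tr}(\rho_A^2) \leq \epsilon_2$. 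The hypothesis $\epsilon_1^2/r > \epsilon_2 + 2\epsilon$ therefore guarantees a gap of width at least $2\epsilon$ between the two purity regimes.

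Given this separation, the algorithm is transparent: estimate $\operatorname{tr}(\rho_A^2)$ to additive precision $\epsilon$, then output $(a)$ if the estimate exceeds the midpoint $\tfrac{1}{2}(\epsilon_1^2/r + \epsilon_2)$ and $(b)$ otherwise. I would implement the purity estimation by the standard two-copy protocol, i.e.\ the SWAP test (or equivalently Bell measurements) on two copies of $\ket{\psi}$ restricted to the subsystem $A$; each trial produces a bounded random variable with expectation equal to $\operatorname{tr}(\rho_A^2)$.

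A direct Hoeffding bound then shows that $L = O(\epsilon^{-2}\log(1/\delta))$ two-copy measurements suffice to achieve additive precision $\epsilon$ with failure probability at most $\delta$. Taking $\epsilon = \Theta(1/r)$, which is the tightest choice still compatible with the gap condition for constant $\epsilon_1,\epsilon_2$, produces the advertised $O(r^2)$ sample complexity; the hypothesis $\epsilon = 1/\operatorname{poly}(n)$ only needs to be invoked to make sure that all intermediate quantities remain polynomially bounded, so that the overall procedure is quantum-polynomial-time.

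There is no real obstacle beyond bookkeeping: the nontrivial content is entirely in the sandwich inequality on $\mathcal{F}_r$, which is already established. The only subtle point worth double-checking in the write-up is that the $O(r^2)$ count is stated for the natural regime of constant $\epsilon_1,\epsilon_2$; outside that regime one should simply record the precise $\epsilon^{-2}$ dependence coming from Hoeffding. The extension to operator Schmidt rank in the next theorem will then follow by the same template, replacing purity by the efficiently measurable ratio $r_4/r_2^2$ from Appendix~\ref{sec:CCNRmeas}.
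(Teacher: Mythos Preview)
Your proposal is correct and follows essentially the same approach as the paper: both reduce the problem to purity estimation via the sandwich inequality $\operatorname{tr}(\rho_A^2)\le \mathcal{F}_r\le\sqrt{r\operatorname{tr}(\rho_A^2)}$, threshold the SWAP-test estimate at the midpoint $\tfrac{1}{2}(\epsilon_1^2/r+\epsilon_2)$, and invoke Hoeffding to obtain $O(\epsilon^{-2})=O(r^2)$ two-copy measurements. The paper additionally records the explicit $\epsilon_1$-dependence as $O(r^2\epsilon_1^{-4})$, which matches your remark that the $O(r^2)$ count is for the constant-$\epsilon_1$ regime.
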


\begin{proof}
    We set $\epsilon<\frac{1}{2}(\frac{\epsilon_1^2}{r}-\epsilon_2)$ and $\epsilon_t=\frac{1}{2}(\frac{\epsilon_1^2}{r}+\epsilon_2)$. By Hoeffding's inequality, the purity $p_2=\text{tr}(\rho_A^2)$ can be measured efficiently using SWAP test or Bell measurement with $L\leq \frac{2}{\epsilon^2}\ln(\frac{2}{\delta})$ number of samples, such that $|p_2-\hat{p}_2|<\epsilon$ with failure probability $\delta$. We then make the following decision: if $\hat{p}_2\geq\epsilon_t$ then we output (a), otherwise (b). 

    If we are in case (a), then the upper bound in Eq.~\eqref{eq:bounds_Fr} implies $\sqrt{r p_2} \geq \mathcal{F}_r(\ket{\psi}) \geq \epsilon_1$, so that $p_2 \geq \epsilon_1^2/r$. If $\hat{p}_2<\epsilon_t$ then $p_2-\hat{p}_2 > \frac{\epsilon_1^2}{r}-\epsilon_t > \epsilon$, which occurs with probability at most $\delta/2$. If we are in case (b), then the lower bound in Eq.~\eqref{eq:bounds_Fr} implies $p_2 \leq \mathcal{F}_r(\ket{\psi}) \leq \epsilon_2$. If $\hat{p}_2\geq\epsilon_t$ then $\hat{p}_2 -p_2 \geq \epsilon_t -\epsilon_2 > \epsilon$, which occurs with probability at most $\delta/2$. Therefore, the algorithm efficiently distinguishes case (a) and (b) with probability $1-\delta/2$. Since $\epsilon<\frac{\epsilon_1^2}{r}$, the algorithm requires $O(r^2 \epsilon_1^{-4})$ number of copies.

\end{proof}

A few remarks are in order. First, since the Schmidt rank of a pure state is equivalent to the rank of the reduced density matrix, our algorithm can also be used to test the rank of a mixed state. Second, we note that our algorithm imposes non-trivial condition on $\epsilon_1$ and $\epsilon_2$, thus it only works in certain parameter regimes. Nevertheless, it requires only two-copy measurements of the state, which is an essential advantage for practical applications. Previously, an algorithm to test the rank in the one-sided error setting (i.e. $\epsilon_1=1$) was proposed in Ref.~\cite{ODonnell2015} using $O(r^2)$ copies. This test used a technique known as weak Schur sampling~\cite{childs2007weakschur}, which requires entangled measurements on many copies of the state. Subsequently, Ref.~\cite{lovitz2024nearlytightbounds} proposed a test with $(r+1)$-copy measurements using $\Theta((r+1)!)$ copies, again in the
setting of one-sided error.

\section{Operator Schmidt rank testing}\label{sec:operator_schmidt_rank}
 Here, we present an efficient algorithm to test the operator Schmidt rank of a given mixed state. Consider the operator Schmidt decomposition of a mixed $n$-qubit state $\rho$ with respect to bipartition $A$, $B$
\begin{equation}
    \frac{\rho}{\sqrt{\text{tr}(\rho_A^2)}} = \sum_i \sqrt{\lambda_i^O} O_{A,i} \otimes O_{B,i},
\end{equation}
where $\{O_{A,i}\}$ and $\{O_{B,i}\}$ are a set of orthonormal operators in the subsystem $A$ and $B$, $\lambda_i^O\geq0$ are the operator Schmidt coefficients with $\lambda_1^O \geq \cdots \geq \lambda_R^O$, $\sum_i \lambda_i^O=1$ and operator Schmidt rank $R\leq 4^n$. The operator Schmidt decomposition can be obtained by vectorizing the normalized operator $\ket{\rho}=\frac{\rho}{\sqrt{\text{tr}(\rho_A^2)}}$ and performing the Schmidt decomposition in Eq.~\eqref{eq:schmidt_decomp} on $\ket{\rho}$. The maximum pure-state overlap of $\ket{\rho}$ with the set of states $M_r^O=\{\ket{O} : \frac{O}{\sqrt{\text{tr}(O_A^2)}} = \sum_i \sqrt{\lambda_i^O} O_{A,i} \otimes O_{B,i}\}$ whose operator Schmidt rank is at most $r$ is given by
\begin{equation}
\mathcal{F}_r^O(\rho)=\max_{\ket{O}\in M_r^O}\vert \braket{O}{\rho}\vert^2= \max_{\ket{O}\in M_r^O} \frac{\text{tr}(\rho O)}{\sqrt{\text{tr}(\rho_A^2)}\sqrt{\text{tr}(O_A^2)}} \,.
\end{equation}
Note that $\mathcal{F}_r^O(\rho)$ is equivalent to the mixed-state fidelity introduced in Ref.~\cite{wang2008alternative}, which satisfies Jozsa's axioms for mixed-state fidelities~\cite{jozsa1994fidelity} up to a normalization factor. Following the pure-state case, we have
\begin{equation}
\mathcal{F}_r^O(\rho)= \sum_{i=1}^r \lambda_i^O\,.
\end{equation}
Next, we recall from Eq.~\eqref{eq:relation_Ccnr_osee} that the purity of the state $\ket{\rho}$ is given by $r_4 / r_2^2$. From Eq. \eqref{eq:bounds_Fr}, we thus have the inequality
\begin{equation} \label{eq:bounds_Fr_O}
    \frac{r_4}{r_2^2} \leq  \mathcal{F}_r^O(\rho) \leq \sqrt{\frac{rr_4}{r_2^2} }.
\end{equation}
With this bound, we can now provide an efficient algorithm to test the operator Schmidt rank by measuring $r_4$ and $r_2$.

\begin{theorem}[Efficient testing of operator Schmidt rank]
Let $\rho$ be an $n$-qubit state where it is promised that 
\begin{align*}
\mathrm{either}\quad (a)& \,\,\mathcal{F}_r^O(\rho)\geq \epsilon_1 \,,\\
\mathrm{or}\quad (b)& \,\, \mathcal{F}_r^O(\rho)\leq \epsilon_2\,,
\end{align*} 
where it is assumed that $\operatorname{tr}(\rho_A^2)=1/\mathrm{poly}(n)$ and  $(\epsilon_1^2/r- \epsilon_2)\operatorname{tr}(\rho_A^2)^2>1/\mathrm{poly}(n)$. Then, there exists an efficient quantum algorithm to distinguish case ($a$) and ($b$) using $O(r^2)$ copies of $\ket{\psi}$ with high probability. 
\end{theorem}

\begin{proof}
    The algorithm works as follows: We measure $r_2^2=\text{tr}(\rho_A^2)^2$ using Bell measurements and $r_4$ using the algorithm in Appendix~\ref{sec:CCNRmeas} with $L=O(\epsilon^{-2}\log(1/\delta))$ number of copies, such that $|r_2^2-\hat{r}_2|<\epsilon$ and $|r_4-\hat{r}_4|<\epsilon$, each with failure probability $\delta$. The parameter $\epsilon$ will be fixed later. We then make the following decision: if $\hat{r}_4/\hat{r}_2\geq\epsilon_t$, where $\epsilon_t$ is a parameter to be fixed later, then we output (a), otherwise (b). 

    If we find $\hat{r}_4/\hat{r}_2\geq\epsilon_t$, we need that $\mathcal{F}_r^O(\rho)> \epsilon_2$. We have
    \begin{equation}
        \mathcal{F}_r^O(\rho) \geq \frac{r_4}{r_2^2}  > \frac{\hat{r}_4 -\epsilon}{\hat{r}_2+\epsilon} \geq \frac{\epsilon_t-\epsilon/\hat{r}_2}{1+\epsilon/\hat{r}_2} ,
    \end{equation}
    with failure probability $\delta^2/4$. We now impose the condition
    \begin{equation} \label{eq:condition_et_1}
        \frac{\epsilon_t-\epsilon/\hat{r}_2}{1+\epsilon/\hat{r}_2}  >\epsilon_2.
    \end{equation}
    Next, if we find $\hat{r}_4/\hat{r}_2<\epsilon_t$, we need that $\mathcal{F}_r^O(\rho)<\epsilon_1$. We have
    \begin{equation}
        \mathcal{F}_r^O(\rho) \leq \sqrt{\frac{rr_4}{r_2^2}}   < \sqrt{\frac{r(\hat{r}_4 +\epsilon)}{\hat{r}_2-\epsilon}} < \sqrt{\frac{r(\epsilon_t+\epsilon/\hat{r}_2)}{1-\epsilon/\hat{r}_2}},
    \end{equation}
    with failure probability $\delta^2/4$. Here, we must have that $\hat{r}_2-\epsilon>0$. We further impose the condition
    \begin{equation}
    \label{eq:condition_et_2}\sqrt{\frac{r(\epsilon_t+\epsilon/\hat{r}_2)}{1-\epsilon/\hat{r}_2}}  <\epsilon_1.
    \end{equation} 
    Combining the two inequalities in Eq. \eqref{eq:condition_et_1} and Eq. \eqref{eq:condition_et_2}, we obtain 
    \begin{equation}
        \frac{\epsilon}{\hat{r}_2}+\epsilon_2 (1+\frac{\epsilon}{\hat{r}_2}) < \epsilon_t < \frac{\epsilon_1^2}{r} (1-\frac{\epsilon}{\hat{r}_2}) - \frac{\epsilon}{\hat{r}_2}.
    \end{equation}
    Therefore, assuming $\epsilon_1^2/r> \epsilon_2$, we can choose
    \begin{equation}
        \epsilon < \frac{\hat{r}_2}{2+\epsilon_1^2/r+\epsilon_2}\left(\frac{\epsilon_1^2}{r}- \epsilon_2\right)
    \end{equation}
    and 
    \begin{equation}
        \epsilon_t = \frac{1}{2}\left(\frac{\epsilon_1^2}{r}+ \epsilon_2\right) - \frac{\epsilon}{2\hat{r}_2}\left(\frac{\epsilon_1^2}{r}- \epsilon_2\right) .
    \end{equation}
    Therefore, the algorithm requires $O(r^2 r_2^{-4}\epsilon_1^{-4})$ number of copies, which is polynomial in $n$ assuming $r_2=1/\mathrm{poly}(n)$.
\end{proof}

\section{$p_\alpha$-negativity in Haar random states} \label{sec:pneg_haar}
Here, we compute the Haar average of $p_\alpha$-negativity as a function of the partition sizes. The computation for $\E_4$ has been given in the main text, and here we discuss $\E_\alpha$ for $\alpha\geq0$. 

We recall an upper bound for $\mathbb{E}_{\text{H}}[\E(\rho)]$, which we report here for convenience
\begin{align}\label{eq:upp_bound_haar_2}
    \mathbb{E}_{\text{H}}[\E(\rho)] \leq &\min\{n_A \ln{2},n_B\ln{2},
    \frac{1}{2}(n_{AB}\ln{2}-\min(n_{AB}\ln{2},n_C\ln{2}))\}.
\end{align} 

We will next compute ${\mathbb E}_{\text{H}}[\E_\alpha(\rho)]$ for even integer $\alpha$, using the results of Ref.~\cite{shapourian2021entanglement} for Haar averages of the PT moments. In the PPT phase, we have ${\mathbb E}_{\text{H}}[p_\alpha]\simeq L_{AB}^{1-\alpha}$ in the thermodynamic limit~\cite{shapourian2021entanglement}, which implies ${\mathbb E}_{\text{H}}[-\ln{p_\alpha}]\gtrsim(\alpha-1)n_{AB}\ln{2}$. We also have $-\ln{p_2}=S_2(\rho_{AB})\leq n_{AB}\ln{2}$. It follows that ${\mathbb E}_{\text{H}}[\E_\alpha(\rho)]={\mathbb E}_{\text{H}}[-\frac{1}{\alpha-2}\ln{p_\alpha}+\frac{\alpha-1}{\alpha-2}\ln{p_2}]\gtrsim 0$. Combining ${\mathbb E}_{\text{H}}[\E_\alpha(\rho)]\leq{\mathbb E}_{\text{H}}[\E(\rho)]$ and the third upper bound in Eq.~\eqref{eq:upp_bound_haar_2}, we obtain 
\begin{equation} \label{eq:pneg_haar_ppt}
{\mathbb E}_{\text{H}}[\E]\simeq{\mathbb E}_{\text{H}}[\E_\alpha]\simeq0.    
\end{equation}

In the ES phase, where $n_C<n_{AB}$ and both $n_A<N/2$ and $n_B<N/2$, one gets in the thermodynamic limit ${\mathbb E}_{\text{H}}[p_\alpha]\simeq
\displaystyle{\frac{C_k L_{AB}}{(L_{AB} L_C)^{\alpha/2}}}$ for even integer $\alpha$, where $C_k=\binom{2k}{k}/(k+1)$ is the $k$th Catalan number. Thus, we have ${\mathbb E}_{\text{H}}[-\ln{p_\alpha}]\gtrsim (\frac{\alpha}{2}-1)n_{AB}\ln{2}+\frac{\alpha}{2}n_C\ln{2}$. Since $-\ln{p_2}\leq n_C\ln{2}$, we have ${\mathbb E}_{\text{H}}[\E(\rho)]\geq{\mathbb E}_{\text{H}}[\E_\alpha(\rho)]\gtrsim \frac{1}{2}(n_{AB}-n_C)\ln{2}$. Combining with the third upper bound in Eq.~\eqref{eq:upp_bound_haar_2}, we obtain 
\begin{equation} \label{eq:pneg_haar_es}
    {\mathbb E}_{\text{H}}[\E]\simeq{\mathbb E}_{\text{H}}[\E_\alpha]\simeq  \frac{1}{2}(n_{AB}-n_C)\ln{2}.
\end{equation}

Finally, in the ME phase, when $n_{AB}>n_C$ and $n_A>N/2$, we obtain in the thermodynamic limit
${\mathbb E}_{\text{H}}[p_\alpha]\simeq
L_C^{1-\alpha}L_{B}^{2-\alpha}$, for even integer $\alpha$. Thus, we have ${\mathbb E}_{\text{H}}[-\ln{p_\alpha}]\gtrsim (\alpha-2)n_B\ln{2}+(\alpha-2)n_C\ln{2}$. We again obtain ${\mathbb E}_{\text{H}}[\E(\rho)]\geq{\mathbb E}_{\text{H}}[\E_4(\rho)]\gtrsim n_B\ln{2}$. The case in which $n_{AB}>n_C$ and $n_B>N/2$ is obtained by replacing $L_{B}$ with $L_{A}$ in the latter formula. Combining with the first and second upper bound in Eq.~\eqref{eq:upp_bound_haar_2}, we obtain 
\begin{equation} \label{eq:pneg_haar_me}
    {\mathbb E}_{\text{H}}[\E]\simeq{\mathbb E}_{\text{H}}[\E_\alpha]\simeq  \min(n_A,n_B)\ln{2}.  
\end{equation}

Finally, using the monotonicity of the $p_\alpha$-negativity, the results in Eqs.~\eqref{eq:pneg_haar_ppt},~\eqref{eq:pneg_haar_es}, and ~\eqref{eq:pneg_haar_me}, extend to any $\alpha\geq1$. This also implies that the absolute values of the PT spectrum of Haar random states is (nearly) flat, thus the results extend to any $\alpha\geq0$. We confirm this numerically in Fig.~\ref{fig:neg_haar}, where $\E_\alpha$ for different $\alpha$ are close to the predicted behavior already at small system sizes. Note that, in the ME phase we find numerically that $p_0=2^{2m+n_C}$ for any realization, where $m=\min(n_A,n_B)$. While the PT is not full-rank, it saturates the bound in Eq.~\eqref{eq:bound_p0}. Instead, in the ES phase and the PPT phase, we find $p_0=2^{n_{AB}}$, so that the PT is full-rank.

\begin{figure}
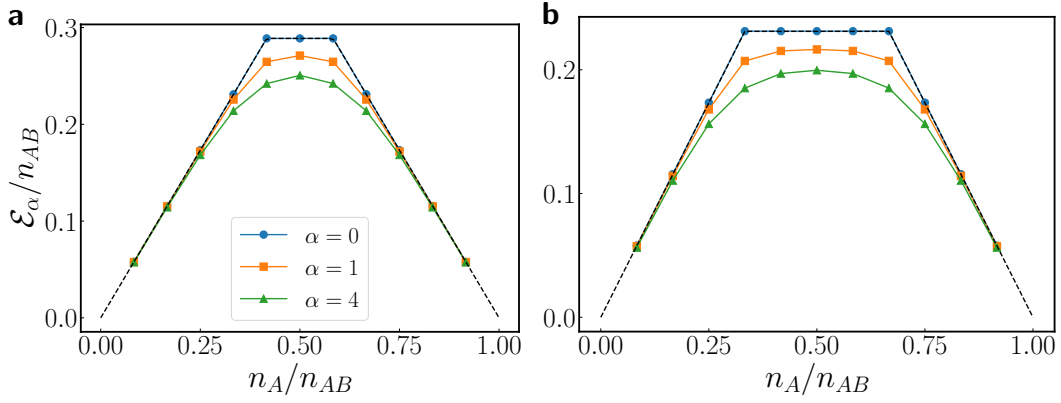

\centering
\subfigimg[width=0.39\linewidth]{a}{neg_haar_n_ab=12_n_c=2.pdf}
\subfigimg[width=0.39\linewidth]{b}{neg_haar_n_ab=12_n_c=4.pdf}
\caption{$p_\alpha$-negativity for Haar random states. The system size is (a) $n_{AB}=12$ sites and $n_C=2$ and (b) $n_{AB}=12$ sites and $n_C=4$. Each data point is averaged over 100 realizations. The dashed lines show the analytical prediction in the thermodynamic limit.}
\label{fig:neg_haar}
\end{figure}

\section{PT and realignment moments in translation-invariant MPS} \label{sec:ti_mps}

Let us consider a translationally invariant (TI) MPS of $n$ qubits of bond dimension $\chi$:
\begin{equation} \label{eq:ti_mps}
\ket{\psi}=\sum_{s_1,s_2,\cdots,s_n} \text{tr}\left(A^{s_1} A^{s_2} \cdots A^{s_n} \right)|s_1,s_2,\cdots s_n \rangle
\end{equation}
with $A^{s}$ being a $\chi \times \chi$ matrix. We consider the entanglement within $\rho=\text{tr}_C(\ket{\psi}\bra{\psi})$, where $C$ is a contiguous subsystem of $\ket{\psi}$. In particular, we study entanglement for $\rho$ with bipartition $A=\{1,2,\cdots,n_A\}, B=\{n_A+1,n_A+2,\cdots,n_A+n_B\}$, and we assume that all subsystems are extensive.

The transfer matrix of the MPS is defined as
\begin{equation} \label{eq:tm_mps}
    E = \sum_s A^s \otimes \overline{A}^s.
\end{equation}
We denote the eigenvalues of $E$ as $\lambda_i$, with $\lambda_1\geq\lambda_2\geq\cdots\lambda_{r}$. The corresponding left and right eigenvectors, $L_i$ and $R_i$, satisfy $\sum_s (A^s)^\dagger L_i A^s=\lambda_1 L$ and $\sum_s A^s R_i (A^s)^\dagger=\lambda_i R$, respectively. Here, we impose $\text{tr}(L_iR_i)=1$. The matrices $L_i$ and $R_i$ are Hermitian with non-negative eigenvalues. We assume that the dominant eigenvalue $\lambda_1$ is unique. 

The PT moment can be written as 
\begin{equation}
    p_\alpha = \frac{\sum_{ijk} \lambda_i^{\alpha n_A} \lambda_j^{\alpha n_B} \lambda_k^{\alpha n_C} \text{tr}((L_kR_i)^\alpha) \text{tr}((L_jR_k)^\alpha)  \text{tr}((L_iR_j)^{\alpha/2})^2}{\sum_i \lambda_i^{\alpha n}},
\end{equation}
for even $\alpha$. In the thermodynamic limit $n\to\infty$, we find 
\begin{equation}
    p_\alpha = \text{tr}((L_1R_1)^\alpha)^2  \text{tr}((L_1R_1)^{\alpha/2})^2 + O(\lvert\frac{\lambda_2}{\lambda_1}\rvert^{\alpha m}),
\end{equation}
where $m=\min(n_A,n_B,n_C)$. 
 This yields the PT negativity
\begin{equation}
    \E(\rho) = \lim_{\alpha\to1/2}\ln p_{2\alpha} \simeq 2\ln \text{tr}((L_1R_1)^{1/2}),
\end{equation}
where the subleading term decays exponentially with $n$.
Here, $\lim_{\alpha\to1/2}\text{tr}((L_1R_1)^{2\alpha})=1$ since the eigenvalues of $L_1R_1$ are non-negative. Note that one can also find that $S_{1/2}(\rho_A)\simeq 4\ln \text{tr}((L_1R_1)^{1/2})$ for $n\to\infty$, where $\rho_A=\text{tr}_B(\rho)$. This implies that $\E(\rho)\simeq\frac{1}{2}S_{1/2}(\rho_A)$. Note that a similar observation was made in quantum circuits at short times~\cite{bertini2022entanglement} %

For the realignment moment, we find for even $\alpha$
\begin{equation}
    r_\alpha = \frac{\sum_{ijk} \lambda_i^{\alpha n_A} \lambda_j^{\alpha n_B} \lambda_k^{\alpha n_C} \text{tr}((L_kR_i)^2)^{\alpha/2} \text{tr}((L_jR_k)^2)^{\alpha/2}  \text{tr}((L_iR_j)^{\alpha/2})^2}{\sum_i \lambda_i^{\alpha n}},
\end{equation}
which in the limit $n\to\infty$ gives
\begin{equation}
    r_\alpha = \text{tr}((L_1R_1)^2)^\alpha \text{tr}((L_1R_1)^{\alpha/2})^2 + O(\lvert\frac{\lambda_2}{\lambda_1}\rvert^{\alpha m}).
\end{equation}
Thus, the CCNR negativity is given by
\begin{equation}
    \mathcal{C}(\rho) = \lim_{\alpha\to1/2}\ln r_{2\alpha} \simeq 2\ln \text{tr}((L_1R_1)^{1/2}) + \ln \text{tr}((L_1R_1)^{2}) = \frac{1}{2}(S_{1/2}(\rho_A) -S_2(\rho_A)).
\end{equation}
We see that $\mathcal{C}(\rho)$ is smaller than $\E(\rho)$, and thus provides a weaker entanglement witness for any TI MPS. This is similarly the case for stabilizer states (Sec.~\ref{sec:stab}) and Haar random states~\cite{aubrun2012realigning}. Nevertheless, $\mathcal{C}(\rho)$ is always successful in detecting entanglement in the generic case when the spectrum of $L_1R_1$ is not flat. 

\section{Matrix product states algorithm for the PT negativity and the CCNR negativity} \label{sec:mps_pt_left_right}

Given MPS $\ket{\psi}$, we consider the entanglement within $\rho=\text{tr}_C(\ket{\psi}\bra{\psi})$, where $C$ is a contiguous subsystem in the middle of the chain, $C=\{n_A+1,n_A+2,\cdots,n_A+n_C\}$, and we take the subsystem $A=\{1,2,\cdots,n_A\}$ and $B$ is the complement of $A$ in $\rho$, i.e. the subsystem connected to the right edge. Both the PT negativity and the CCNR can be efficiently computed in this scenario. The method for the PT negativity was given in Ref.~\cite{ruggiero2016entanglement}.

We first place the orthogonality center at a site $i\in C$. Then, we define 
\begin{equation} \label{eq:E_C}
    E_C = \prod_{i\in C} E_i
\end{equation}
where $E_i$ is the transfer matrix at site $i$ as defined in Eq.~\eqref{eq:tm_mps}. The graphical representation is shown in Fig.~\ref{fig:transfer_matrix_c}. $E_C$ can be obtained with computational cost $O(n_C \chi^5)$. We now define the partially transposed matrix $E_C^\Gamma$ which is obtained from $E_C$ by permutation of indices, $[E_C^\Gamma]_{(\alpha,\beta'),(\alpha',\beta)}=[E_C]_{(\alpha,\alpha'),(\beta,\beta')}$. The spectrum of $\rho^\Gamma$ is obtained by diagonalizing $E_C^\Gamma$ with cost $O(\chi^6)$~\cite{ruggiero2016entanglement}. Similarly, the singular values of the realigned matrix $R_\rho$ are obtained by performing SVD on $E_C$ (note that the matrix $E_C$ already represents the realigned matrix). Therefore, the total cost to compute the PT negativity and the CCNR negativity in this setup is $O(\chi^6+n_C\chi^5)$.

\begin{figure}
\centering
\includegraphics[width=.49\linewidth]{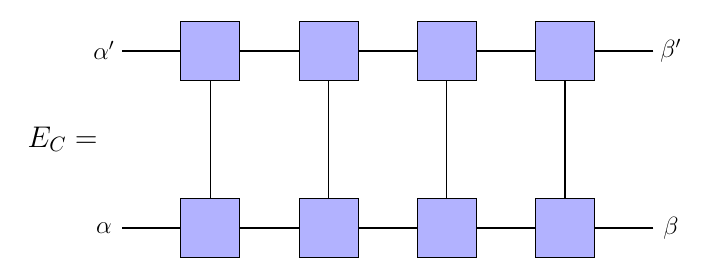}
\caption{MPS contraction to construct $E_C$ in Eq.~\eqref{eq:E_C}.}
\label{fig:transfer_matrix_c}
\end{figure}

\section{Effects of depolarizing noise on entanglement} \label{sec:depo_noise}
In this section, we consider the effects of global
depolarizing noise 
\begin{equation} \label{eq:depo_gen}
    \rho'
 = (1-p)\ket{\psi}\bra{\psi} + p I/2^n,   
\end{equation}
on the bipartite entanglement of a general pure state $\ket{\psi}$. For a given subsystem $A$, the entanglement in the pure state $(p=0)$ is characterized by the Schmidt coefficients $\lambda_i$ for $i=1,\dots,r$, where $\lambda_i\geq\lambda_j$ for $i<j$, and the Schmidt decomposition of the state is given by
\begin{equation}
    \ket{\psi} = \sum_{i=1}^{r} \sqrt{\lambda_i} \ket{i}_A \ket{i}_B.
\end{equation}
Applying the partial transpose on a pure state $\rho=\ket{\psi}\bra{\psi}$, one can show that the spectrum of $\rho^\Gamma$ is $s_{ij}\sqrt{\lambda_i \lambda_j}$ for $i,j=1,\dots,r$ and $s_{ij} = 1$ if $i\leq j$ and $s_{ij}=-1$ otherwise. From Eq.~\eqref{eq:depo_gen}, the spectrum of $(\rho')^\Gamma$ is simply rescaled and shifted to $(1-p)s_{ij}\sqrt{\lambda_i \lambda_j}+p/2^n$. Thus, the state becomes PPT when
\begin{equation}
    p \geq  1 - \frac{1}{1+2^n\sqrt{\lambda_1 \lambda_2}}.
\end{equation}
Namely, the robustness of a state with respect to depolarizing noise is determined solely by the two largest Schmidt coefficients.

 As an illustration, let us now consider stabilizer states. These states are known to have a flat entanglement spectrum, i.e., $\lambda_i=\text{const}$ for any $\lambda_i\neq 0$ ~\cite{tirrito2023quantifying}. As a consequence, the entanglement R\'enyi entropy is determined by integer $k$, with $S_\alpha=k\ln{2}$ and $\lambda_i=2^{-k}$, for $i=1,\dots,2^k$. Thus, for $k>0$, we have that the state becomes PPT when
 \begin{equation}
     p \geq 1 - \frac{1}{1+2^{n-k}}.
 \end{equation}
We see that exponentially strong noise is required to destroy entanglement, regardless of the initial value of the entanglement in the noise-free state. Surprisingly, states with less entanglement are more robust when subjected to noise. Despite of this, one can show that 
\begin{equation}
    \E_4 \simeq S_2 + \ln(1-p),
\end{equation}
and thus the entanglement is detected when $p<1-e^{-S_2}$. For low-entangled pure states, $\E_4$ already fails to detect entanglement at a low value of $p$. This is intuitively expected, since states with low entanglement are harder to distinguish from separable states. Note that a similar result holds for the $p_3$-PPT condition, where the entanglement is detected at $p<1-e^{-2S_3}$.

\section{Witnessing entanglement in Werner States} \label{sec:werner}

Werner states are bipartite quantum states in a Hilbert space $\mathcal{H}_{AB}=\mathcal{H}_A \otimes \mathcal{H}_B$ with dimensions $d_A=d_B \equiv d$, defined as 
\begin{equation}
	\rho_W= \alpha \binom{d+1}{2}^{-1} \Pi_+ + (1-\alpha) \binom{d}{2}^{-1}\Pi_-
	\label{eq:werner}
\end{equation}
with $\alpha \in [0,1]$ and $\Pi_\pm=\frac{1}{2}\left( \mathbb{I}\pm \Pi_{12} \right)$ projectors onto symmetric $\mathcal{H}_+$ and anti-symmetric $\mathcal{H}_-$ subspaces of $\mathcal{H}=\mathcal{H}_+ \oplus \mathcal{H}_-$, respectively. Here, $\Pi_{12}=\sum_{i,j=1}^d\ket{i}\bra{j} \otimes \ket{j}\bra{i}$ is the swap operator. Using that $\Pi^{T_A}_{\pm}= 1/2(\Delta_1 \pm (d\pm 1) \Delta_0)$ with $\Delta_0=\ket{\phi_+}\bra{\phi_+}$ being a projector onto the maximally entangled state and $\Delta_1=\mathbb{I}-\Delta_0$, we have
\begin{equation}
	\rho^{T_A}_W= \frac{2\alpha-1 }{d}  \Delta_0 + \frac{1 +d -2\alpha}{d } \frac{\Delta_1}{d^2-1 }
\end{equation}
with eigenvalues  $\lambda_0=(2\alpha-1)/{d}$ with multiplicity 1 and  $\lambda_1=({1 +d -2\alpha})/{d (d^2-1) }$ with multiplicity $d^2-1$.

We see that, for any $d$, $\lambda_0<0$ for  $0\leq \alpha < 1/2 $. Thus, according to the PPT condition, $\rho_W$ is entangled for $0\leq \alpha < 1/2$. It has been shown that the entanglement can be detected using the $p_3$-PPT condition in the whole interval $0\leq \alpha < 1/2 $, for any local dimension $d$~\cite{elben2020mixed}. We can also show, by explicit computation, that the result holds true for the $p_4$-negativity, i.e. 
\begin{equation}
	\E_4(\rho_W) > 0 \quad \text{for} \quad
	0\leq \alpha < \frac{1}{2}  
\end{equation}
for any local dimension $d$. Moreover, $\E_4(\rho_W) = 0$ for $\alpha=1/2$. From the derivation of our bound~\eqref{eq:bound_e}, we can now understand that this is a consequence of the fact that the spectrum of $\rho^{T_A}_W$ is almost flat, and it becomes exactly flat at $\alpha=1/2$.

Finally, we note that Werner states have non-positive PT-moments in an interval $\left[0,\alpha^*\right)$ for $d>3$~\cite{elben2020mixed}.
This highlights that the $p_3$-negativity, and a related quantity $\Tilde{R}_3=-\log_2 (p_3/\tr[\rho^3])$, is not always well-defined. %

\end{document}